\title{Compiling Petri Net Mutual Reachability in Presburger} 
\titlerunning{Petri Net Mutual Reachability Relations}
\author{Jérôme Leroux}{Univ. Bordeaux, CNRS, Bordeaux INP, LaBRI, UMR 5800, F-33400 Talence, France}{jerome.leroux@labri.fr}{}{The author is supported by the grant ANR-17-CE40-0028 of the French National Research Agency ANR (project BRAVAS)}
\authorrunning{J. Leroux}
\keywords{Petri nets, Vector addition systems, Formal verification, Reachability problem}
\newcommand{\setN}{\mathbb{N}}
\newcommand{\setZ}{\mathbb{Z}}
\newcommand{\setQ}{\mathbb{Q}}
\newcommand{\eqby}[1]{\stackrel{\textrm{{\normalfont\tiny{#1}}}}{=}}
\newcommand{\eqdef}{\eqby{def}}
\newcommand{\src}[1]{\operatorname{src}(#1)}
\newcommand{\tgt}[1]{\operatorname{tgt}(#1)}
\newcommand{\norm}[1]{{\mathopen{\|}#1\mathclose{\|}}}
\newcommand{\normi}[1]{{\mathopen{\|}#1\mathclose{\|}_\infty}}
\newcommand{\crochet}[1]{\llbracket{#1}\rrbracket}
\renewcommand{\vec}[1]{\mathbf{#1}}
\newcommand\xleftrightarrow[2][]{%
  \ext@arrow 9999{\longleftrightarrowfill@}{#1}{#2}}
\newcommand\longleftrightarrowfill@{%
  \arrowfill@\leftarrow\relbar\rightarrow}
\begin{document}

\maketitle

\begin{abstract}
  Petri nets are a classical model of concurrency widely
  used and studied in formal verification with many
  applications in modeling and analyzing hardware and software, data
  bases, and reactive systems. The reachability problem is central
  since many other problems reduce to reachability questions. The reachability problem is known to be decidable but its complexity is extremely high (non primitive recursive).
  In 2011, a variant of the reachability problem, called
  the mutual reachability problem, that consists in deciding if two configurations are mutually reachable was proved to be exponential-space
  complete. Recently, this problem found several
  unexpected applications in particular in the theory of
  population protocols. While the mutual reachability problem is known to be definable in the Preburger arithmetic, the best known upper bound of such a formula was recently proved to be non-elementary (tower). In this paper we provide a way to compile the mutual reachability relation of a Petri net with $d$ counters into a quantifier-free Presburger formula given as a doubly exponential disjunction of $O(d)$ linear constraints of exponential size. We also provide some first results about Presburger formulas encoding bottom configurations.
\end{abstract}

\section{Introduction}
Petri nets are a classical model of concurrency widely
used and studied in formal verification with many
applications in modeling and analyzing hardware and software, data
bases, and reactive systems. The reachability problem is central
since many other problems reduce to reachability
questions. Unfortunately, the reachability problem is difficult for
several reasons. In fact, from a complexity point of view,
the problem was recently proved to be Ackermannian-complete~(\cite{lics19} for the upper-bound and \cite{DBLP:conf/focs/CzerwinskiO21,DBLP:conf/focs/Leroux21} for equivalent lower-bounds).
Moreover,
even in practice, the reachability problem is difficult. Nowadays, no
efficient tool exists for deciding it since the known algorithms are
difficult to be implemented and require many enumerations in
extremely large state spaces (see \cite{DBLP:conf/tacas/DixonL20} for the
state-of-the-art algorithm deciding the general reachability problem).

\medskip

Fortunately, easier natural variants of the reachability problem can be applied
in various contexts. For instance, the coverability problem, a variant of the reachability problem, can be applied in the analysis of concurrent
programs~\cite{DBLP:conf/cav/BaslerMWK09}. The coverability problem is known to be
exponential-space complete~\cite{Rackoff78,Lipton76}, and efficient
tools exist~\cite{DBLP:conf/tacas/BlondinFHH16,
  DBLP:journals/tcs/GeffroyLS18}.

\medskip

Another variant is the mutual
reachability problem. This problem consists in deciding if two
configurations are mutually reachable one from the other. This problem
was proved to be exponential-space complete
in~\cite{concurjournal13} and finds unexpected applications in
population protocols~\cite{DBLP:journals/acta/EsparzaGLM17}, trace
logics~\cite{DBLP:conf/concur/LerouxPS13}, universality problems
related to structural liveness
problems~\cite{DBLP:conf/apn/JancarLS18}, and in solving the home state
problem \cite{DBLP:journals/ipl/BestE16}.
The exponential-space complexity upper-bound of the mutual
reachability problem proved in \cite{concurjournal13} is obtained by
observing that if two configurations are mutually reachable, then
the two configurations
belong to a cycle of the (infinite) reachability graph with a length
at most doubly-exponential with respect to the size in binary of the
two configurations.

\medskip

Recently, the computation of a Presburger formula encoding the mutual reachability problem found an application in the reachability problem of Petri nets extended with a stack in~\cite{DBLP:conf/icalp/GanardiMPSZ22}. In that paper, authors provided a formula of tower size and left open the computation of a formula of elementary size.

\medskip

\textbf{Contribution}.
In this paper, we focus on the minimal size of a quantifier-free Presburger formula encoding the mutual reachability relation. We provide a way to compile the mutual reachability relation of a Petri net with $d$ counters into a quantifier-free Presburger formula given as a doubly exponential disjunction of $O(d)$ linear constraints of exponential size.

\medskip


\textbf{Outline}.
In \cref{part:compiling} we provide algorithms for encoding in the quantifier-free fragment of the Presburger arithmetic the membership of vectors in lattices (see~\cref{sec:lattice}), and the reachability relations of words of Petri net actions (see~\cref{sec:action}). In \cref{part:overview} we provide an overview of the way quantifier-free Presburger formula encoding the mutual reachability are obtained. Results of that part are self-contains except two technical results that are proved respectively in \cref{part:lem1} and \cref{part:lem2}. In \cref{part:conclusion} we conclude the paper with an open problem about the configurations of a Petri net in the bottom strongly-connected components of the reachability graph.

\clearpage

\part{Compiling in Presburger}\label{part:compiling}
In this part, we provide algorithms for encoding in the quantifier-free fragment of the Presburger arithmetic lattices (see~\cref{sec:lattice}) and reachability relations of words of Petri net actions (see~\cref{sec:action}).

\medskip

The set of rationals, integers, non-negative rationals, and natural numbers are denoted as $\setQ$, $\setZ$, $\setQ_{\geq 0}$, and $\setN$ respectively. We denote by $\mathbb{X}^d$ the set of $d$-dimensional vectors of elements in $\mathbb{X}$. Vectors are denoted in bold face, and given $\vec{x}\in\setQ^d$, we denote by $\vec{x}(1),\ldots,\vec{x}(d)$ its components in such a way $\vec{x}=(\vec{x}(1),\ldots,\vec{x}(d))$. We denote by $\norm{\vec{x}}$ the \emph{one-norm} $\sum_{i=1}^d|\vec{x}(i)|$, and by $\normi{\vec{x}}$ the \emph{infinite-norm} $\max_{1\leq i\leq d}|\vec{x}(i)|$. Given two vectors $\vec{x},\vec{y}\in\mathbb{Q}^d$, we denote by $\vec{x}\cdot\vec{y}$ the number $\sum_{i=1}^d\vec{x}(i)\vec{y}(i)$.

\section{Lattices}\label{sec:lattice}
A \emph{lattice} is a \emph{subgroup} of $(\setZ^d,+)$, i.e. a set $\vec{L}\subseteq \setZ^d$ that contains the zero vector $\vec{0}$, such that $\vec{x}+\vec{y}\in \vec{L}$ for every $\vec{x},\vec{y}\in\vec{L}$, and such that $-\vec{x}\in\vec{L}$ for every $\vec{x}\in\vec{L}$. The lattice \emph{spanned} by a finite sequence $\vec{v}_1,\ldots,\vec{v}_k$ of vectors in $\setZ^d$ is the lattice $\setZ\vec{v}_1+\cdots+\setZ\vec{v}_k$. Let us recall that any lattice is spanned by such a finite sequence. It follows that for any lattice $\vec{L}\subseteq \setZ^d$, there is a Presburger formula encoding the membership of a vector $\vec{v}$ in $\vec{L}$ of the form $\exists z_1,\ldots,z_k~\vec{v}=z_1\vec{v}_1+\cdots+z_k\vec{v}_k$. In order to obtain a quantifier-free Presburger formula encoding $\vec{L}$, we can perform a quantifier elimination from such a formula but it is difficult to obtain interesting complexity bounds with such an approach. We follow another approach based on the notion of \emph{representations}.

\medskip

A \emph{representation} $\gamma$ of a lattice $\vec{L}\subseteq \setZ^d$ is a tuples $\gamma\eqdef ((n_1,\vec{a}_1),\ldots,(n_d,\vec{a}_d))$ of $d$ pairs $(n_i,\vec{a}_i)\in\setN\times\setZ^d$ such that the following equality holds:
$$\vec{L}=\left\{\vec{x}\in \setZ^d \mid \bigwedge_{i=1}^d \vec{a}_i\cdot \vec{x}\in n_i\setZ\right\}$$
We denote by $\crochet{\gamma}$ the lattice $\vec{L}$ and by $\norm{\gamma}_\infty$ the value $\max\{n_1,\norm{\vec{a}_1}_\infty,\ldots,n_d,\norm{\vec{a}_d}_\infty\}$.

\medskip

In this section we prove the following~\cref{thm:lattice} that shows that lattices spanned by finite sequences of vectors admits a representation computable in polynomial time (assuming numbers encoded in binary). Such a representation will help encoding lattice membership with a formula in the quantifier-free fragment of the Presburger arithmetic. All other results and definitions are not used in the sequel.
\begin{theorem}\label{thm:lattice}
  Let $\vec{L}$ be the lattice spanned by a sequence of vectors in $\{-m,\ldots,m\}^d$ encoded in binary for some $m\in\setN$. We can compute in polynomial time a representation $\gamma$ of $\vec{L}$ such that $\norm{\gamma}_\infty\leq (d!)^2m^d$.
\end{theorem}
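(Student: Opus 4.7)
The approach I would take is to reduce the problem to a Smith Normal Form (SNF) computation on the $d\times k$ matrix $A$ whose columns are the given generators $\vec{v}_1,\ldots,\vec{v}_k$, so that $\vec{L}=A\setZ^k$. Computing $A=UDV$ with $U\in\mathrm{GL}_d(\setZ)$ and $V\in\mathrm{GL}_k(\setZ)$ unimodular and $D\in\setZ^{d\times k}$ diagonal with invariant factors $d_1\mid d_2\mid\cdots\mid d_r$ (where $r=\operatorname{rank}(A)$), one has $\vec{L}=UD\setZ^k$, and hence $\vec{x}\in\vec{L}$ iff the vector $U^{-1}\vec{x}\in\setZ^d$ has its $i$-th component in $d_i\setZ$ for $i\leq r$ and equal to $0$ for $i>r$. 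Taking $\vec{a}_i$ to be the $i$-th row of $U^{-1}$ and $n_i=d_i$ for $i\leq r$, $n_i=0$ otherwise (so that $n_i\setZ=\{0\}$ encodes the equality constraint), therefore produces a representation $\gamma=((n_1,\vec{a}_1),\ldots,(n_d,\vec{a}_d))$ of $\vec{L}$ in the required form.

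The bound $n_i\leq d!\,m^d$ is the easy half of the argument: the product $d_1\cdots d_r$ equals the GCD of all non-zero $r\times r$ minors of $A$, hence is bounded by any single such minor, and the Leibniz formula immediately yields $r!\,m^r\leq d!\,m^d$ as an upper bound for any $r\times r$ minor of $A$.

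The bound $\|\vec{a}_i\|_\infty\leq(d!)^2 m^d$ is the delicate half, since a direct SNF computation permits the entries of $U^{-1}$ to blow up. The plan is to construct $U^{-1}$ indirectly, in two stages. First, extract a basis $B\in\setZ^{d\times r}$ of $\vec{L}$ from $A$ via a column Hermite Normal Form, the entries of $B$ being bounded by $d!\,m^d$ through the fact that each HNF diagonal divides some maximal-rank minor of $A$. Second, complete $B$ to an invertible matrix $\bar B\in\setZ^{d\times d}$ by appending a basis of the orthogonal integer lattice $\vec{L}^\perp\subseteq\setZ^d$, which also admits a basis bounded by $d!\,m^d$ via an analogous cofactor-style computation on $A$. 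The rows of $U^{-1}$ can then be read off from $\bar B^{-1}$ by Cramer's rule: each entry of $\bar B^{-1}$ is a $(d-1)\times(d-1)$ minor of $\bar B$ divided by $\det(\bar B)$, and once scaled by the appropriate invariant factor $d_i$, which divides $\det(\bar B)$, the denominator cancels and leaves an integer that, after careful cancellation, fits within $(d!)^2 m^d$.

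The main obstacle is precisely this last arithmetic: the naive bound $(d-1)!\,(d!\,m^d)^{d-1}$ on a cofactor of $\bar B$ is far too coarse, and only a careful use of the invariant-factor cancellation with $\det(\bar B)$ brings the overall bound down to $(d!)^2 m^d$. Polynomial-time computability of the intermediate steps (HNF, orthogonal completion, cofactor extraction) is classical, so the heart of the proof is this entry-size tracking rather than the algorithmic design itself.
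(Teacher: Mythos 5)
Your overall strategy differs genuinely from the paper's: the paper never invokes Smith Normal Form. Instead it reorders rows and columns so the generator matrix $L$ has a non-singular $r\times r$ block $A$ in the top-left, writes $L=\left[\begin{smallmatrix}A & A'\\ B & B'\end{smallmatrix}\right]$, and splits the membership test into (i) a divisibility condition on $[\vec{x}(1)\ldots\vec{x}(r)]$ via the Hermite normal form $H$ of the full-row-rank matrix $[A\ A']$, and (ii) equality constraints expressing $[\vec{x}(r+1)\ldots\vec{x}(d)]$ in terms of $[\vec{x}(1)\ldots\vec{x}(r)]$ via $\det(A)$ and $\com(A)B^\top$. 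The bound then falls out for two structural reasons: the cofactors of $H$ are at most $r!\det(H)\leq(r!)^2m^r$ because $H$ is lower triangular with row maxima on its diagonal, and $\com(A)$, $B$ are built from entries of the \emph{original} matrix, so $\com(A)B^\top$ has entries at most $r!\,m^r$. Both halves stay within $(d!)^2m^d$ essentially for free.

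The gap in your proposal is precisely where you flag the ``delicate half.'' You state that the cofactors of the completed matrix $\bar B$, scaled by an invariant factor $d_i$ that divides $\det(\bar B)$, will ``after careful cancellation'' land within $(d!)^2m^d$, but you do not supply that cancellation, and I do not believe it works as asserted. Two concrete problems. First, the cofactors of $\bar B$ are $(d-1)\times(d-1)$ minors of a matrix whose entries are already of size $d!\,m^d$; the naive Hadamard/Leibniz bound is on the order of $(d-1)!\,(d!\,m^d)^{d-1}$, and there is no structural reason for the ratio against $\det(\bar B)$ to be tiny --- $\det(\bar B)$ can be as small as the lattice index, while individual cofactors need not share that smallness. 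The paper escapes this because $H$ has a very specific shape (each row dominated by its diagonal entry), forcing every term in the Leibniz expansion of a cofactor to be at most $\det(H)$; your $\bar B$, obtained by gluing a basis of $\vec{L}$ to a basis of $\vec{L}^\perp$, has no such property. Second, there is a conceptual slippage between the SNF condition ``$(U^{-1}\vec{x})_i\in d_i\setZ$'' and the basis-completion condition ``$(\bar B^{-1}\vec{x})_i\in\setZ$ for $i\leq r$.'' These are representations with different moduli (the $d_i$'s versus $\det(\bar B)$, respectively), and the rows of $U^{-1}$ are \emph{not} generally readable from $\bar B^{-1}$; passing from one to the other requires another unimodular change of basis whose size you also have no control over. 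To salvage the SNF route you would need a small-coefficient certificate for the $U^{-1}$ rows specifically, which is exactly the entry-blowup problem you set out to avoid.
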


The proof is based on classical matrix operations over the rationals. We do not recall classical notations and definitions we are using but briefly recall some others.

\medskip
\newcommand{\com}{\operatorname{com}}

Let $M$ be a $r\times r$ square matrix. Let us recall that the \emph{determinant} of $M$, denoted as $\det(M)$ is defined as $\sum_{\sigma\in S_r}\operatorname{sgn}(\sigma)\prod_{i=1}^r M_{i\sigma(i)}$ where $S_r$ is the set of permutations of $\{1,\ldots,r\}$, and $\operatorname{sgn}(\sigma)\in\{-1,1\}$ is the sign of $\sigma$. In particular if $M$ is an integer matrix with coefficients bounded by some $m\in\setN$, then $\det(M)$ is an integer bounded in absolute value by $r!m^r$. A $r\times r$ square matrix is said to be \emph{non-singular} if $\det(M)$ is nonzero. Notice that in this case there exists a unique $r\times r$ matrix $M^{-1}$ such that $M^{-1}M=MM^{-1}=I_r$ where $I_r$ is the identity $r\times r$ square matrix, i.e. the matrix with zero coefficients except on the main diagonal where coefficients are one. This matrix can be computed by introducing the comatrice. Let us recall that the comatrix of any $r\times r$ square matrix $M$ is the $r\times r$ square rational matrix $\com(M)$ satisfying $(\com(M))_{i j}$ is $\det(M^{i j})$ where $M^{i j}$ is the matrix obtained from $M$ by replacing the $j$th column by a column of zeros, except on the $i$th line in which we put a one. Let us recall that $M^\top\com(M)=\det(M)I_r$ where $M^\top$ is the transpose of $M$. It follows that if $M$ is non-singular then $M^{-1}=\frac{1}{\det(M)}\com(M)^\top$. Notice that if the coefficients of $M$ are integers bounded in absolute value by some $m\in\setN$, then the coefficients of $\operatorname{com}(M)$ are integers bounded in absolute value by $(r-1)!m^{r-1}$.

\medskip

Let $M$ be a $r\times k$ matrix. The rank of $M$ is the maximal $n\in\{0,\ldots,\min\{r,k\}\}$ such that $M$ admits a $n\times n$ non-singular sub-matrix. Such a matrix $M$ is said to be \emph{full row rank} if its rank is equal to $r$. We say that such a matrix $M$ has an \emph{Hermite normal form} if there exists a uni-modular matrix $U$ (i.e. a matrix of integers with a $+1$ or $-1$ determinant) such that $MU=[H 0]$ where $H$ is a non-singular, lower triangular, non-negative matrix, in which each row has a unique maximum entry, which is located on the main diagonal of $H$. Let us recall that every full row rank matrix $M$ has a unique Hermite normal form $[H0]$ and moreover the uni-modular matrix $U$ such that $MU=[H 0]$ is unique~\cite[Corollary 4.3b]{schrijver1986theory}. Additionally, if $M$ is an integral matrix then $H$ is a matrix of natural numbers. Let us recall from~\cite[Section~10]{schrijver1986theory} that from a complexity point of view, the matrices $U$ and $H$ are computable in polynomial time and moreover $\det(H)$ divides the determinant of any $r\times r$ non-singular square sub-matrix of $M$. In particular, if the coefficients of $M$ are integers bounded in absolute value by some $m\in\setN$, then $\det(H)$, which is the product of the diagonal elements of $H$ is bounded by $r!m^r$. Let us recall that $(\com(H))_{ax}=\det(H^{i j})$ where $H^{i j}$ is the matrix obtained from $M$ by replacing the $j$th column by a column of zeros, except on the $i$th line in which we put a one. Now, let $\sigma$ be a permutation of $\{1,\ldots,r\}$ and observe that $(H^{i j})_{k\sigma(k)}\leq H_{k k}$ for every $k\in\{1,\ldots,r\}$. It follows that $\prod_{k=1}^r(H^{i j})_{k\sigma(k)}\leq \prod_{k=1}^r H_{k k}=\delta(H)$. Therefore the coefficients of $\com(H)$ are integers bounded in absolute value by $r!\det(H)\leq (r!)^2m^r$.

\medskip

Now, let us prove~\cref{thm:lattice}. We consider a lattice $\vec{L}$ spanned by a sequence $\vec{v}_1,\ldots,\vec{v}_k$ of vectors in $\{-m,\ldots,m\}^d$. We introduce the $d\times k$ matrix $L$ obtained from this sequence by considering $\vec{v}_j$ as the $j$th column of $L$, i.e. $L_{i,j}=\vec{v}_j(i)$ for every $1\leq i\leq d$ and $1\leq j\leq k$. We denote by $r$ the rank of $L$. Recall that $r\leq \min\{d,k\}$. By reordering columns of $L$ (which corresponds to a permutation of $\vec{v}_1,\ldots,\vec{v}_k$) and by reordering the lines of $L$, (which corresponds to a permutation of the components of $\vec{L}$) we can assume without loss of generality that $M$ can be decomposed as follows where $A$ is a $r\times r$ non-singular matrix, $B$ is a $(d-r)\times r$ matrix, $A'$ is a $r\times (k-r)$ matrix, and $B'$ is a $(d-r)\times (k-r)$ matrix.
$$L=\left[\begin{array}{cc}A & A' \\ B & B'\end{array}\right]$$
Let us introduce the $r\times k$ matrix $M\eqdef[AA']$. Notice that $M$ is full row rank. It follows that there exists a uni-modular matrix $U$ such that $MU$ is in Hermite normal form $[H 0]$.

\medskip

We are ready for proving the following lemma.
\begin{lemma}\label{lem:latcar}
  The lattice $\vec{L}$ is the set of vectors $\vec{x}\in\setZ^d$ such that:
  \begin{itemize}
  \item[$(i)$] $\det(H)$ divides the coefficients of $[\vec{x}(1)\ldots\vec{x}(r)]\com(H)$, and
  \item[$(ii)$] $\det(A)[\vec{x}(r+1)\ldots\vec{x}(d)]=[\vec{x}(1)\ldots\vec{x}(r)]\com(A)B^\top$.
  \end{itemize}
\end{lemma}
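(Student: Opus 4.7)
The plan is to split the condition $\vec{x}\in\vec{L}$ into two independent assertions: one on the first $r$ coordinates of $\vec{x}$, giving $(i)$, and one coupling the last $d-r$ coordinates to the first $r$, giving $(ii)$. Write $\vec{x}_1$ and $\vec{x}_2$ for the column vectors of the first $r$ and last $d-r$ coordinates of $\vec{x}$, and let $N\eqdef[B\ B']$, so that stacking $M$ on top of $N$ yields $L$. Then $\vec{x}\in\vec{L}$ iff there exists $\vec{z}\in\setZ^k$ with $M\vec{z}=\vec{x}_1$ and $N\vec{z}=\vec{x}_2$.

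The first step is the factorisation $N=YM$ for some rational $(d-r)\times r$ matrix $Y$. Because $A$ is non-singular, $M$ already has rank $r$; since $L$ also has rank $r$, the row span of $L$ coincides with that of $M$, so every row of $N$ is a rational combination of the rows of $M$. Reading off the first $r$ columns of $N=YM$ gives $B=YA$, hence $Y=BA^{-1}$. Therefore any $\vec{z}$ with $M\vec{z}=\vec{x}_1$ automatically satisfies $N\vec{z}=Y\vec{x}_1$, and $\vec{x}\in\vec{L}$ is equivalent to the conjunction of $\vec{x}_2=BA^{-1}\vec{x}_1$ and $\vec{x}_1\in M\setZ^k$.

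The second step rewrites each subcondition. Substituting $A^{-1}=\det(A)^{-1}\com(A)^\top$ into $\vec{x}_2=BA^{-1}\vec{x}_1$, clearing denominators and transposing (using $(B\com(A)^\top)^\top=\com(A)B^\top$) yields $\det(A)\,\vec{x}_2^\top=\vec{x}_1^\top\com(A)B^\top$, which is $(ii)$. For the other subcondition, the unimodularity of $U$ makes multiplication by $U$ a bijection of $\setZ^k$, so $M\setZ^k=MU\setZ^k=[H\ 0]\setZ^k=H\setZ^r$; since $H$ is non-singular, $\vec{x}_1\in H\setZ^r$ iff $H^{-1}\vec{x}_1=\det(H)^{-1}\com(H)^\top\vec{x}_1$ is integral, i.e.\ iff $\det(H)$ divides every component of $\com(H)^\top\vec{x}_1$; transposing this column to the row vector $\vec{x}_1^\top\com(H)=[\vec{x}(1)\ldots\vec{x}(r)]\com(H)$ gives exactly $(i)$.

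The only non-mechanical point is the factorisation $N=YM$, which combines $\operatorname{rank}(L)=r$ with the fact that $M$ already reaches this maximum rank; everything else is a direct application of the cofactor formula for the inverse and of the defining property of the Hermite normal form, so I do not expect any real obstacle beyond careful bookkeeping of row versus column conventions.
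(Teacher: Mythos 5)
Your proposal is correct and follows essentially the same approach as the paper: both factor $[B\ B']=BA^{-1}[A\ A']$ from the rank equality, use the Hermite normal form to reduce $\vec{x}_1\in M\setZ^k$ to $\vec{x}_1\in H\setZ^r$, and convert each integrality/equality condition via the cofactor formula for the inverse. Your reorganization into the clean biconditional $\vec{x}\in\vec{L}\iff(\vec{x}_2=BA^{-1}\vec{x}_1 \wedge \vec{x}_1\in M\setZ^k)$ before rewriting each conjunct is slightly tidier than the paper's separate forward/backward argument, but the mathematical content is identical.
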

\begin{proof}
  Since the rank of $L$ is equal to the rank of $A$, it follows that every line of $[BB']$ is a linear combination of the lines of $[A A']$. It follows that there exists a $(d-r)\times r$ rational matrix $C$ such that $[B B']=C[AA']$. Notice that this matrix $C$ satisfies $C=BA^{-1}$ since $A$ is non-singular. 

  \medskip
  
  Assume first that $\vec{x}\in\vec{L}$.

  Since $\vec{x}$ is a linear combination of the column vectors of $L$, and since the $d-r$ last lines of $M$ are linear combinations of the lines of $L$, we deduce that $\vec{x}$ is also a linear combination of the column vectors of $M$. Hence, there exists a sequence $q_1,\ldots,q_k\in\setQ$ such that $\vec{x}=\sum_{j=1}^k q_j\vec{v}_j$. From a matrix point of view, we have  $\left[\begin{array}{c}\vec{x}(1)\\\vdots\\\vec{x}(d)\end{array}\right]=M\left[\begin{array}{c}q_1\\\vdots\\ q_r\end{array}\right]$. Since $M=\left[\begin{array}{c} A\\CA\end{array}\right]$, we deduce that $\vec{x}$ satisfies $(ii)$ by observing that $\det(A)A^{-1}=\com(T)^\top$. 
  
  \medskip

  Next, observe that since $\vec{x}\in\vec{L}$ then $\vec{x}=\sum_{j=1}^k z_j\vec{v}_j$ with $z_1,\ldots,z_k\in\setZ$. It follows that we have:
  \begin{align*}
    \left[\begin{array}{c}\vec{x}(1)\\\vdots\\\vec{x}(r)\end{array}\right]
    =M\left[\begin{array}{c}z_1\\\vdots\\ z_k\end{array}\right]
    =[H 0]U^{-1}\left[\begin{array}{c}z_1\\\vdots\\ z_k\end{array}\right]
    =H\left[\begin{array}{c}z_1'\\\vdots\\ z_r'\end{array}\right]
  \end{align*}
  Where $z_1',\ldots,z_k'$ is the following sequence:
  $$\left[\begin{array}{c}z_1'\\\vdots\\ z_k'\end{array}\right]=U^{-1}\left[\begin{array}{c}z_1\\\vdots\\ z_k\end{array}\right]$$
  We have proved that $\vec{x}$ satisfies $(i)$.

  \medskip

  Finally, assume that $\vec{x}$ is any vector in $\setZ^d$ satisfying $(i)$ and $(ii)$. From $(i)$, we deduce that there exists $z_1',\ldots,z'_r\in\setZ$ such that:
  $$H^{-1}\left[\begin{array}{c}\vec{x}(1)\\\vdots\\\vec{x}(r)\end{array}\right]=\left[\begin{array}{c}z_1'\\\vdots\\ z_r'\end{array}\right]$$
  We introduce the sequence $z_1,\ldots,z_k$ defined as follows where $z'_{r+1},\ldots,z'_k$ are defined as zero:
  $$\left[\begin{array}{c}z_1\\\vdots\\ z_k\end{array}\right]\eqdef U\left[\begin{array}{c}z_1'\\\vdots\\ z_k'\end{array}\right]$$
  Notice that we have:
  $$
  \left[\begin{array}{c}\vec{x}(1)\\\vdots\\\vec{x}(r)\end{array}\right]
  =[H 0]U^{-1}\left[\begin{array}{c}z_1\\\vdots\\ z_k\end{array}\right]\\
  =[AA']\left[\begin{array}{c}z_1\\\vdots\\ z_k\end{array}\right]
  $$
  Combining with $(ii)$ and $[BB']=C[AA']$, we get:
  $$\left[\begin{array}{c}\vec{x}(1)\\\vdots\\\vec{x}(d)\end{array}\right]
  =L\left[\begin{array}{c}z_1\\\vdots\\ z_k\end{array}\right]$$
  Therefore $\vec{x}\in\vec{L}$.
\end{proof}

Now, let $n=\det(H)$. Observe that the coefficients of $\com(H)$ are bounded in absolute value by $(r!)^2m^r$. Since the coefficients of $\com(A)$ are bounded in absolute value by $(r-1)!m^{r-1}$ and the coefficients of $B$ are bounded in absolute value by $m$, we deduce that the coefficients of $\com(A)B^\top$ are bounded in absolute value by $r!m^r$. We have proved~\cref{thm:lattice}.

\section{Words of Petri Net Actions}\label{sec:action}
A \emph{configuration} is a vector in $\setN^d$ and
a \emph{Petri net action} is a pair $a=(\vec{a}_-,\vec{a}_+)$ of configurations.
We introduce $\normi{a}\eqdef \max\{\normi{\vec{a}_-},\normi{\vec{a}_+}\}$.
The \emph{displacement} of $a$ is defined as $\Delta(a)\eqdef \vec{a}_+-\vec{a}_-$.
We associate a Petri net action $a$ with the binary relation $\xrightarrow{a}$
over the configurations defined by
$\vec{x}\xrightarrow{a}\vec{y}$ if for some configuration
$\vec{c}$ we have $(\vec{x},\vec{y})=a+(\vec{c},\vec{c})$.
Given a word $\sigma=a_1\ldots a_k$ of Petri net actions $a_1,\ldots,a_k$, we introduce the \emph{displacement} of $\sigma$ defined as $\Delta(\sigma)\eqdef\sum_{j=1}^k\Delta(a_j)$.
We also denote by $\xrightarrow{\sigma}$ the binary relation over the configurations defined by $\vec{x}\xrightarrow{\sigma}\vec{y}$ if there exists a sequence $\vec{c}_0,\ldots,\vec{c}_k$ of configurations such that $\vec{x}=\vec{c}_0$, $\vec{y}=\vec{c}_k$, and such that $\vec{c}_{j-1}\xrightarrow{a_j}\vec{c}_j$ for every $j\in\{1,\ldots,k\}$.

\medskip

In this section, we provide a way to compile binary relations $\xrightarrow{\sigma}$ for words $\sigma$ of Petri net actions into a quantifier-free Preburger formula. In order to avoid introducing existentially quantified intermediate variables (one for each action of $\sigma$), we recall the definition of \emph{Hurdle} introduced in~\cite[Definition~2.5,p33]{DBLP:phd/ndltd/Hack76}.
First of all, notice that $\vec{x}\xrightarrow{\sigma}\vec{y}$ for some configurations $\vec{x},\vec{y}$ implies $\vec{y}=\vec{x}+\Delta(\sigma)$. It follows that the relation $\xrightarrow{\sigma}$ can be encoded in the Presburger arithmetic as soon as we can characterize with a Presburger formula configurations $\vec{x}$ for which there exists a configuration $\vec{y}$ such that $\vec{x}\xrightarrow{\sigma}\vec{y}$. The \emph{Hurdle} of $\sigma$ provides exactly this characterization. Intuitively, the Hurdle of $\sigma$ is the unique minimal configuration $\vec{x}$ such that $\vec{x}\xrightarrow{\sigma}\vec{y}$ for some $\vec{y}$. More formally, let us introduce the Hurdle function $H$ that maps words of Petri nets actions to configurations defined by induction as follows $H(\varepsilon)$ is the zero configuration, and the following equality for any Petri net action $a=(\vec{a}_-,\vec{a}_+)$ and for any word $\sigma$ of Petri net actions where $\max$ is the component-wise extension of the classical max operator:
$$H(a\sigma)=\max\{\vec{a}_-,H(\sigma)-\Delta(a)\}$$

\medskip

The following lemma shows that the binary relation $\xrightarrow{\sigma}$ can be encoded with a simple quantifier-free Presburger formula. In fact, for any configurations $\vec{x},\vec{y}\in\setN^d$, we have $\vec{x}\xrightarrow{\sigma}\vec{y}$ if, and only if, the following quantifier-free Presburger formula holds:
$$\vec{x}\geq H(\sigma)\wedge \vec{y}=\vec{x}+\Delta(\sigma)$$
\begin{lemma}[\cite{DBLP:phd/ndltd/Hack76}]\label{lem:carastep}
  For every configuration $\vec{x}$ and every word $\sigma$ of Petri net actions, we have $\vec{x}\geq H(\sigma)$ if, and only if, there exists a configuration $\vec{y}$ such that $\vec{x}\xrightarrow{\sigma}\vec{y}$.
\end{lemma}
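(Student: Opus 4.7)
The plan is a straightforward induction on the length of $\sigma$, using the recursive definition of $H$. The base case $\sigma=\varepsilon$ is immediate: $H(\varepsilon)=\vec{0}$ so $\vec{x}\geq H(\varepsilon)$ holds trivially for any configuration, and $\vec{x}\xrightarrow{\varepsilon}\vec{x}$ gives a witness $\vec{y}=\vec{x}$.

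For the inductive step, write $\sigma=a\tau$ with $a=(\vec{a}_-,\vec{a}_+)$, so that by definition $H(a\tau)=\max\{\vec{a}_-,H(\tau)-\Delta(a)\}$. For the forward direction, I would assume $\vec{x}\geq H(a\tau)$ and read off the two inequalities $\vec{x}\geq\vec{a}_-$ and $\vec{x}+\Delta(a)\geq H(\tau)$. The first inequality allows firing $a$ from $\vec{x}$ by setting $\vec{c}\eqdef\vec{x}-\vec{a}_-\in\setN^d$, yielding $\vec{x}\xrightarrow{a}\vec{x}+\Delta(a)$. The second inequality, together with the induction hypothesis applied to $\tau$ and the configuration $\vec{x}+\Delta(a)$, produces some $\vec{y}$ with $\vec{x}+\Delta(a)\xrightarrow{\tau}\vec{y}$. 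Concatenating gives $\vec{x}\xrightarrow{a\tau}\vec{y}$.

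For the reverse direction, assume $\vec{x}\xrightarrow{a\tau}\vec{y}$ and split the derivation at the first step as $\vec{x}\xrightarrow{a}\vec{z}\xrightarrow{\tau}\vec{y}$ for some configuration $\vec{z}$. The first step forces $\vec{x}\geq\vec{a}_-$ and $\vec{z}=\vec{x}+\Delta(a)$. Applying the induction hypothesis to $\tau$ and $\vec{z}$ gives $\vec{z}\geq H(\tau)$, i.e.\ $\vec{x}\geq H(\tau)-\Delta(a)$. Taking the componentwise maximum of the two lower bounds on $\vec{x}$ yields $\vec{x}\geq\max\{\vec{a}_-,H(\tau)-\Delta(a)\}=H(a\tau)$, completing the induction.

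There is no real obstacle here: the recursive definition of $H$ was designed precisely so that the two conditions $\vec{x}\geq\vec{a}_-$ (needed to fire $a$) and $\vec{x}+\Delta(a)\geq H(\tau)$ (needed, by induction, to continue with $\tau$) combine into a single componentwise inequality via the $\max$. The only minor point to keep in mind is that $H(\tau)-\Delta(a)$ may have negative components, which is harmless because the $\max$ with $\vec{a}_-\geq\vec{0}$ keeps $H(a\tau)$ non-negative, and because the inequality $\vec{x}\geq H(\tau)-\Delta(a)$ is exactly what we need to ensure the intermediate configuration $\vec{x}+\Delta(a)$ dominates $H(\tau)$.
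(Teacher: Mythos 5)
Your proof is correct and is exactly the standard induction on $|\sigma|$ that the recursive definition of $H$ is built to support; the paper itself gives no proof (it cites Hack's thesis), so there is no alternative argument to compare against. Both directions of the inductive step and the handling of the componentwise $\max$ are sound, and you correctly note why potential negativity of $H(\tau)-\Delta(a)$ causes no trouble.
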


The following lemma provides bounds on the values occurring in such a formula.
\begin{lemma}
  Let $\sigma$ be a word of Petri net actions in $\{0,\ldots,m\}^d\times\{0,\ldots,m\}^d$ for some natural number $m\in\setN$. Then $\norm{H(\sigma)}_\infty,\norm{\Delta(\sigma)}_\infty\leq |\sigma|m$.
\end{lemma}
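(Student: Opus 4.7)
The plan is to prove the two bounds by straightforward inductions on $|\sigma|$, handling the displacement bound by the triangle inequality and the Hurdle bound by unrolling the recursive definition.

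For the displacement, I would start from the identity $\Delta(\sigma) = \sum_{j=1}^{|\sigma|}\Delta(a_j)$. Each action $a_j = (\vec{a}_{j,-},\vec{a}_{j,+})$ has both coordinates in $\{0,\ldots,m\}^d$, so $\normi{\Delta(a_j)} = \normi{\vec{a}_{j,+}-\vec{a}_{j,-}} \leq m$. Applying the triangle inequality for the infinite-norm componentwise yields $\normi{\Delta(\sigma)} \leq \sum_{j=1}^{|\sigma|}\normi{\Delta(a_j)} \leq |\sigma|m$.

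For the Hurdle bound, I would induct on $|\sigma|$. The base case $\sigma = \varepsilon$ is trivial since $H(\varepsilon) = \vec{0}$. For the inductive step, write $\sigma = a\sigma'$ with $a = (\vec{a}_-, \vec{a}_+)$ and assume $\normi{H(\sigma')} \leq |\sigma'|m$. By definition, $H(a\sigma') = \max\{\vec{a}_-, H(\sigma') - \Delta(a)\}$. Since $\vec{a}_- \geq \vec{0}$, the max is componentwise non-negative, so the infinite-norm coincides with the largest component. For each coordinate $i$, $\vec{a}_-(i) \leq m$ and $H(\sigma')(i) - \Delta(a)(i) \leq \normi{H(\sigma')} + \normi{\Delta(a)} \leq |\sigma'|m + m$, hence $H(a\sigma')(i) \leq (|\sigma'|+1)m = |\sigma|m$, which gives $\normi{H(\sigma)} \leq |\sigma|m$.

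There is no real obstacle here: both bounds follow from immediate componentwise estimates on the recursive definitions, and the only mild point is noticing that $H(\sigma)$ is always non-negative (so that the infinite-norm is just the maximum of its coordinates) which follows at once from the outer $\max$ with $\vec{a}_-$ in the recursion.
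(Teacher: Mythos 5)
Your proof is correct and matches the paper's approach, which is simply stated as ``by induction on $|\sigma|$'' without elaboration. You spell out the routine details: the displacement bound via the triangle inequality applied to $\Delta(\sigma)=\sum_j\Delta(a_j)$, and the Hurdle bound via the recursion $H(a\sigma')=\max\{\vec{a}_-,H(\sigma')-\Delta(a)\}$, correctly using non-negativity of $H(\sigma)$ so that the infinite-norm is just the largest component.
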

\begin{proof}
  By induction on $|\sigma|$.
\end{proof}

\clearpage
\part{Paper Overview}\label{part:overview}
A \emph{Petri net} $A$ (\emph{PN} for short) is a finite
set of Petri net actions.
The \emph{reachability relation} of $A$ is the binary relation $\xrightarrow{A^*}$ over the configurations defined by $\vec{x}\xrightarrow{A^*}\vec{y}$ if there exists a word $\sigma\in A^*$ such that $\vec{x}\xrightarrow{\sigma}\vec{y}$.
The Petri net reachability problem consists in deciding given a PN
$A$ and two configurations $\vec{x}$ and $\vec{y}$ if $\vec{x}\xrightarrow{A^*}\vec{y}$.
Whereas this problem is decidable, its complexity is extremely hard (Ackermannian-complete).
This complexity no longer hold for a natural variant of the reachability problem, called 
the mutual reachability problem, and defined as follows.

\medskip

The mutual reachability relation of a PN $A$ is the binary relation $\xleftrightarrow{A^*}$ defined over the configurations by $\vec{x}\xleftrightarrow{A^*}\vec{y}$ if $\vec{x}\xrightarrow{A^*}\vec{y}$ and $\vec{y}\xrightarrow{A^*}\vec{x}$. Since this relation is an \emph{equivalence relation} (reflexive, symmetric, and transitive), it follows that the set of configurations can be partitioned into equivalence classes. Such an equivalence class is called a \emph{strongly-connected component of configurations} (\emph{SCCC} for short) of $A$. A SCCC $\vec{C}$ is said to be \emph{forward-closed} (resp. \emph{backward-closed}) if for every triple $(\vec{x},a,\vec{y})\in \setN^d\times A\times\setN^d$ such that $\vec{x}\xrightarrow{a}\vec{y}$, we have $\vec{x}\in\vec{C}\Rightarrow \vec{y}\in\vec{C}$ (resp. $\vec{y}\in\vec{C}\Rightarrow\vec{x}\in\vec{C}$). A configuration $\vec{c}_\bot$ is said to be \emph{bottom} (resp. \emph{top}) for a PN $A$ if its SCCC is \emph{forward-closed} (resp. \emph{backward-closed}). 

\medskip

The PN mutual reachability problem consists in deciding given a PN
$A$ and two configurations $\vec{x}$ and $\vec{y}$ if $\vec{x}\xleftrightarrow{A^*}\vec{y}$, equivalently if $\vec{x}$ and $\vec{y}$ are in the same SCCC.
In \cite{concurjournal13}, we proved that the PN mutual reachability
problem is decidable in exponential-space by proving that
there exists at most doubly-exponential long word $u,v\in A^*$ such that $\vec{x}\xrightarrow{u}\vec{y}$ and $\vec{y}\xrightarrow{v}\vec{x}$ when $\vec{x}$ and $\vec{y}$ are mutually reachable.

\medskip

In this paper we focus on the computation of concise quantifier-free Presburger formulas encoding the mutual reachability relations and the set of bottom configurations. Those formulas are obtained by proving that there exist small witnesses of mutual reachability. Those witnesses are defined thanks to the notion of \emph{unfoldings} introduced in \cref{sec:unfoldings}. Intuitively an unfolding of a Petri net is a graph hard-coding the values of some Petri net counters in its states. In this section we also define the subclass of structurally-reversible unfoldings. In \cref{sec:witnesses} we show that this subclass provides witnesses of mutual reachability. From those witnesses we provide in \cref{sec:compiling} a way to compile the mutual reachability relation of a Petri net in the quantifier-free Presburger arithmetic.

\section{Unfoldings}\label{sec:unfoldings}
We introduce in this section the notion of unfoldings defined as \emph{structurally-reversible} graphs hard-coding the values of some Petri net counters in their states.

\medskip

As usual, a (directed) \emph{graph} $G$ is a triple $(Q,A,T)$ where $Q$ is a non empty finite set of states, $A$ is a finite set, and $T$ is a set of \emph{transitions} in $Q\times A\times Q$.
A \emph{path} $\pi$ from a state $p$ to a state $q$
labeled by a word $\sigma\in A^*$ is a
word of transitions in $T^*$ of the form $(q_0,a_1,q_1)\ldots
(q_{k-1},a_k,q_k)$ for some states $q_0,\ldots,q_k$ satisfying
$q_0=p$ and $q_k=q$, and for some actions $a_1,\ldots,a_k$
satisfying $\sigma=a_1\ldots a_k$. A path is said to be
\emph{elementary} if $q_i=q_j$ implies $i=j$. A path such that $q_0=q_k$
is called a \emph{cycle} on $q_0$. A cycle is said to be \emph{simple} if
$q_i=q_j$ with $i<j$ implies $i=0$ and $j=k$. A graph is said to be \emph{strongly-connected} if for every state $p,q\in Q$ there exists a path from $p$ to $q$.

\medskip

A vector (in fact a mapping) in $\setN^I$ where $I$ is a subset of $\{1,\ldots,d\}$ is called an \emph{$I$-configuration}. Given an $I$-configuration $\vec{c}$, we introduce $\normi{\vec{c}}\eqdef\max_{i\in I}|\vec{c}(i)|$.
We associate with a configuration
$\vec{c}\in\setN^d$ the $I$-configuration $\vec{c}|_I$ in $\setN^I$ defined by
$\vec{c}|_I(i)=\vec{c}(i)$ for every $i\in I$.
We also associate with a set $\vec{C}\subseteq \setN^d$ of configurations, the set $\vec{C}|_I\eqdef\{\vec{c}|_I\mid \vec{c}\in\vec{C}\}$.
Given an action
$a=(\vec{a}_-,\vec{a}_+)$ of a Petri net, we extend the binary
relation $\xrightarrow{a}$ over the $I$-configurations by
$\vec{x}\xrightarrow{a}\vec{y}$ if $\vec{x},\vec{y}$ are $I$-configurations
such that there exists an
$I$-configuration $\vec{c}\in\setN^I$ satisfying
$\vec{x}=\vec{a}_-|_I+\vec{c}$ and $\vec{y}=\vec{a}_+|_I+\vec{c}$.

\medskip



An \emph{$I$-unfolding} of a PN $A$ where $I$ is a subset of $\{1,\ldots,d\}$ is a strongly-connected \emph{graph} $G=(Q,A,T)$ where $Q$ is a finite set of $I$-configurations, and $T$ is a set of triples $(p,a,q)\in Q\times A\times Q$ satisfying $p\xrightarrow{a}q$. The \emph{displacement} of a path $\pi$ labeled by a word $\sigma$ is the vector
$\Delta(\pi)\eqdef\Delta(\sigma)$. Given a path $\pi$ from $p$ to $q$ labelled by a word $\sigma$, we denote by $\xrightarrow{\pi}$ the binary relation on the configurations defined by $\vec{x}\xrightarrow{\pi}\vec{y}$ if $\vec{x}|_I=p$,  $\vec{x}\xrightarrow{\sigma}\vec{y}$, and $\vec{y}|_I=q$.

\medskip

An unfolding is said to be \emph{structurally-reversible} if for every transition $t=(p,a,q)$ there exists a path $\pi$ from $q$ to $p$ such that $\Delta(t\pi)=\vec{0}$.
\begin{lemma}\label{lem:stdec}
  Let $G=(Q,A,T)$ be a graph with states $Q\subseteq\setN^I$ encoded in binary. We can decide in polynomial time if $G$ is an $I$-unfolding and we can decide in polynomial time if $G$ is structurally-reversible.
\end{lemma}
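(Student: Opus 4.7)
The first claim is essentially routine. The plan is to check separately that $G$ is strongly-connected (via Tarjan's algorithm, say, in time linear in $|Q|+|T|$) and that each transition $(p,a,q) \in T$ satisfies $p \xrightarrow{a} q$; the latter unfolds to the componentwise inequality $p \geq \vec{a}_-|_I$ and the equation $q = p + \Delta(a)|_I$, both polynomial-time arithmetic checks on binary encodings.

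The second claim is the more interesting one. My plan is to reduce structural reversibility to a rational linear-programming feasibility question. Specifically, I will prove that $G$ is structurally-reversible if and only if there exists a vector $x \in \setN^T$ with $x_t \geq 1$ for every $t \in T$, satisfying Kirchhoff's flow-conservation law at each state $q$ (total inflow equals total outflow) and the vanishing-displacement equation $\sum_{t=(p,a,q) \in T} x_t \Delta(a) = \vec{0}$. The forward direction is easy: given zero-displacement cycles $t\pi_t$ for every $t$, sum their Parikh vectors, and observe that this sum has full support, satisfies Kirchhoff (since each summand is the Parikh vector of a closed walk), and has zero total displacement. The converse is the main step: given such an $x$, the directed multigraph obtained by repeating each transition $t$ exactly $x_t$ times is strongly-connected (since $G$ itself is and $x$ has full support) and balanced, so by the classical directed Euler-circuit theorem it admits an Eulerian circuit $W$ of displacement zero; rotating $W$ to start with any fixed transition $t=(p,a,q)$ exhibits the required path $\pi_t$ from $q$ to $p$.

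With the equivalence in hand, deciding structural reversibility reduces to checking feasibility of the system $\{x \in \setQ_{\geq 0}^T : \text{Kirchhoff},\ \text{displacement}=\vec{0},\ x \geq 1\}$, a rational linear program of size polynomial in the binary encoding of $(Q,T,A)$ and hence solvable in polynomial time by standard LP algorithms. Passing from a rational to an integer solution is free: since Kirchhoff and the displacement equation are homogeneous, multiplying any rational solution by the common denominator of its entries yields an integer solution still satisfying $x \geq 1$. I anticipate no serious obstacle; the only subtle point is the invocation of Euler's theorem in the converse direction, which requires the support of $x$ to induce a strongly-connected sub-multigraph, but this is immediate because $x$ has full support and $G$ is strongly-connected by assumption.
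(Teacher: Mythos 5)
Your proof is correct and takes essentially the same approach as the paper: check strong connectivity plus validity of each transition for the first claim, and reduce structural reversibility to feasibility of a rational linear system (Kirchhoff flow-conservation plus zero total displacement, with all flows positive) solvable in polynomial time. The paper states this reduction and cites Euler's lemma as justification; you have simply spelled out the Euler-circuit argument and the rational-to-integer scaling in more detail.
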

\begin{proof}
  Notice that $G$ is an $I$-unfolding if, and only if, $G$ is strongly-connected and $p\xrightarrow{a}q$ for every $(p,a,q)\in T$. This property can be decided in polynomial time. So, we can assume that $G$ is an $I$-unfolding. In that case, notice that $G$ is structurally-reversible, if, and only if, the following linear system over the free variable $f:T\rightarrow\setQ_{>0}$ is satisfiable:
  $$\bigwedge_{q\in Q}\sum_{t\in T\cap \{q\}\times A\times Q}f(t)=\sum_{t\in T\cap Q\times A\times \{q\}}f(t)~~~~~~\wedge ~~~~~~~\sum_{t\in T}f(t)\Delta(t)=\vec{0}$$
  This reduction is a direct application of the Euler's lemma.
\end{proof}

\medskip

We associate a graph $G_{\vec{C},I}\eqdef(Q,A,T)$ with an SCCC $\vec{C}$ and a set $I\subseteq \{1,\ldots,d\}$ such that $\vec{C}|_I$ is finite by $Q\eqdef\vec{C}|_I$ and $T\eqdef\{(\vec{x}|_I,a,\vec{y}|_I) \mid
(\vec{x},a,\vec{y})\in\vec{C}\times A\times\vec{C}\wedge
\vec{x}\xrightarrow{a}\vec{y}\}$.
\begin{lemma}
  The graph $G_{\vec{C},I}$ is a structurally-reversible $I$-unfolding.
\end{lemma}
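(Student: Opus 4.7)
The plan is to verify in turn the two properties required: first, that $G_{\vec{C},I}$ is a strongly-connected graph whose states are $I$-configurations and whose transitions respect $\xrightarrow{a}$ (so it is an $I$-unfolding), and then that for each transition there is a reversing path of zero total displacement.

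The cornerstone of both parts is the following closure property, which I would establish first: whenever $\vec{x},\vec{z}\in\vec{C}$ and $\vec{x}\xrightarrow{\sigma}\vec{z}$ for some word $\sigma$, then every intermediate configuration $\vec{c}_j$ appearing along the firing sequence also lies in $\vec{C}$. This is immediate from the fact that $\vec{C}$ is an equivalence class of $\xleftrightarrow{A^*}$: one gets $\vec{x}\xrightarrow{A^*}\vec{c}_j\xrightarrow{A^*}\vec{z}\xrightarrow{A^*}\vec{x}$, where the last arrow comes from $\vec{x}$ and $\vec{z}$ sharing an SCCC, so $\vec{c}_j\xleftrightarrow{A^*}\vec{x}$ and hence $\vec{c}_j\in\vec{C}$.

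The structural parts of the $I$-unfolding definition then hold by construction: $Q=\vec{C}|_I$ is a finite nonempty set of $I$-configurations (finite by hypothesis, nonempty since an equivalence class is nonempty), and each $(p,a,q)\in T$ satisfies $p\xrightarrow{a}q$ by the very definition of $T$. For strong connectedness, given $p,q\in Q$, I would choose witnesses $\vec{x},\vec{y}\in\vec{C}$ with $\vec{x}|_I=p$ and $\vec{y}|_I=q$; the SCCC property yields $\vec{x}\xrightarrow{\sigma}\vec{y}$ for some $\sigma$, and the closure property above ensures that every intermediate configuration is in $\vec{C}$. Projecting the firing sequence to $I$ coordinate-wise produces a path from $p$ to $q$ in $G_{\vec{C},I}$.

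Finally, for structural reversibility, any transition $t=(p,a,q)\in T$ is witnessed by $\vec{x},\vec{y}\in\vec{C}$ with $\vec{x}|_I=p$, $\vec{y}|_I=q$ and $\vec{x}\xrightarrow{a}\vec{y}$. The SCCC property gives $\sigma$ with $\vec{y}\xrightarrow{\sigma}\vec{x}$; by the closure property the intermediate configurations stay in $\vec{C}$ and project to a path $\pi$ from $q$ to $p$ in $G_{\vec{C},I}$. Since the composite firing sequence $\vec{x}\xrightarrow{a}\vec{y}\xrightarrow{\sigma}\vec{x}$ returns to its starting configuration, the cumulative displacement vanishes: $\Delta(a)+\Delta(\sigma)=\vec{0}$, i.e.\ $\Delta(t\pi)=\vec{0}$. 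I do not foresee any serious obstacle; the only step that must be stated carefully is the closure of $\vec{C}$ under internal firings, which is essentially a restatement of the equivalence-class structure of SCCCs.
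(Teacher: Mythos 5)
Your proof is correct and follows essentially the same route as the paper: choose witnesses in $\vec{C}$ projecting to the given states, invoke the SCCC property to obtain connecting firing sequences whose intermediate configurations remain in $\vec{C}$, project to $I$ to get paths in $G_{\vec{C},I}$, and observe that the displacement of a cycle vanishes. The one thing you do beyond the paper's write-up is spell out the closure property (that any configuration lying on a firing sequence between two members of $\vec{C}$ is itself in $\vec{C}$, via the equivalence-class structure of $\xleftrightarrow{A^*}$), which the paper simply asserts without argument; this is a worthwhile clarification rather than a different approach.
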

\begin{proof}
  Let us denote by $G$ the graph $G_{\vec{C},I}$.

  Since $\vec{x}\xrightarrow{a}\vec{y}$ implies $\vec{x}|_I\xrightarrow{a}\vec{y}|_I$, we deduce that for every transition $(p,a,q)\in T$, we have $p\xrightarrow{a}q$.

  Let us show that $G$ is strongly-connected. Let $p,q\in Q$. There exists $\vec{x},\vec{y}\in \vec{C}$ such that $p=\vec{x}|_I$ and $q=\vec{y}|_I$. Since $\vec{C}$ is a SCCC, there exists a word $\sigma$ of actions in $A$ such
  that $\vec{x}\xrightarrow{\sigma}\vec{y}$ and such that all the intermediate configurations are in $\vec{C}$. It follows that there exists a path in $G$ from $p$ to $q$ labeled by $\sigma$. In particular $G$ is strongly-connected.

  Now, let us prove that $G$ is structurally-reversible.
  Let $(p,a,q)$ be a transition in $T$. There exist
  $\vec{x},\vec{y}\in\vec{C}$ such
  that $\vec{x}\xrightarrow{a}\vec{y}$ and such that $p=\vec{x}|_I$
  and $q=\vec{y}|_I$. Moreover since $\vec{C}$
  is a SCCC, there exists a word $\sigma$ of actions in $A$ such
  that $\vec{y}\xrightarrow{\sigma}\vec{x}$ and such that all intermediate configurations are in $\vec{C}$.
  We deduce that there
  exists a path in $G$ from $q$ to $p$ labeled by
  $\sigma$. Notice that
  $\Delta(a)+\Delta(\sigma)=\vec{y}-\vec{x}+\vec{x}-\vec{y}=\vec{0}$. It
  follows that $G$ is structurally-reversible.
\end{proof}

\subsection{Lattice $\vec{L}_G$}
We associate with an $I$-unfolding $G$ the lattice $\vec{L}_G$ spanned by the displacements of the simple cycles of $G$. The following lemma shows that a representation $\gamma_G$ of $\vec{L}_G$ can be computed.
\begin{lemma}\label{lem:repL}
  Let $G=(Q,A,T)$ be an $I$-unfolding with numbers encoded in binary. We can compute in exponential time\footnote{In fact, it can be easily computed in polynomial time thanks to a spanning tree of $G$, but this is out of the scope of that paper.} a representation $\gamma_G$ of $\vec{L}_G$ such that $\norm{\gamma_G}_\infty\leq (d!)^2|Q|^d\norm{A}_\infty^d$.
\end{lemma}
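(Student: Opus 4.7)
The plan is to reduce the statement to \cref{thm:lattice} by producing an explicit finite generating set for $\vec{L}_G$ whose entries admit a good bound. Since $\vec{L}_G$ is \emph{defined} as the lattice spanned by the displacements of simple cycles of $G$, the most direct route is to enumerate those cycles, take their displacements as generators, and feed the result into the algorithm of \cref{thm:lattice}.

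First I would enumerate every simple cycle of $G$. A standard backtracking depth-first search from each state, cutting off whenever a state is revisited, enumerates all simple cycles in time exponential in $|Q|$ (which is polynomial in the size of $G$ encoded in binary), and this is where the exponential time complexity in the statement comes from. For each simple cycle $\pi = t_1 \ldots t_k$, compute $\Delta(\pi) = \sum_{j=1}^k \Delta(t_j)$. Since $k \leq |Q|$ and each action $a = (\vec{a}_-,\vec{a}_+)$ has non-negative components, $\normi{\Delta(a)} = \normi{\vec{a}_+ - \vec{a}_-} \leq \max(\normi{\vec{a}_+}, \normi{\vec{a}_-}) = \normi{a} \leq \normi{A}$, so each displacement satisfies $\normi{\Delta(\pi)} \leq |Q|\,\normi{A}$.

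The resulting (exponentially large) sequence of vectors lives in $\{-m, \ldots, m\}^d$ with $m \eqdef |Q|\,\normi{A}$, and by definition it spans $\vec{L}_G$. Invoking \cref{thm:lattice} on this generating sequence produces in polynomial time (with respect to its input, which is of exponential size in the original data) a representation $\gamma_G$ of $\vec{L}_G$ with $\normi{\gamma_G} \leq (d!)^2 m^d = (d!)^2 |Q|^d \normi{A}^d$. The total running time is exponential, as required.

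The only delicate step is checking that the crude bound $\normi{\Delta(\pi)} \leq |Q|\,\normi{A}$ holds, which uses non-negativity of the $\vec{a}_\pm$; beyond that the argument is a direct plug-in. The main obstacle is purely complexity-theoretic rather than mathematical: the set of simple cycles can be exponentially large, so one must be content with exponential running time here (indeed, the footnote hints at a smarter polynomial-time construction using a spanning tree, but that refinement is not needed for the stated bound).
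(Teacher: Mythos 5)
Your proof is correct and follows exactly the same route as the paper: bound the displacements of simple cycles by $|Q|\,\normi{A}$ in infinity norm (using that cycles have length at most $|Q|$ and that each action's displacement is bounded by $\normi{A}$), then invoke \cref{thm:lattice} on the resulting generating set. The paper's own proof is a two-line version of your argument, with the exponential enumeration of simple cycles left implicit.
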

\begin{proof}
  Since a simple cycle has a length bounded by $|Q|$, it follows that displacements of simple cycles are vectors in $\{-m,\ldots,m\}^d$ with $m\leq |Q|.\norm{A}_{\infty}$. We conclude the proof by invoking \cref{thm:lattice}.
\end{proof}

\medskip

The following lemma shows that for every pair $(p,q)$ of states the set $\Delta(\pi)+\vec{L}_G$ does not depend on a path $\pi$ from $p$ to $q$. We denote by $\vec{L}_{p,G,q}$ this set. Notice that $\vec{L}_{p,G,q}$ is a set of the form $\vec{v}+\vec{L}_G$ where $\vec{v}$ is a vector in $\setZ^d$ such that $\normi{\vec{v}}\leq |Q|\norm{A}_\infty^d$ since we can consider for $\vec{v}$ the displacement of an elementary path from $p$ to $q$.
\begin{lemma}
  For every pair $(p,q)$ of states of a strongly-connected unfolding $G$, and for any paths $\alpha,\beta$ from $p$ to $q$, we have $\Delta(\alpha)+\vec{L}_G=\Delta(\beta)+\vec{L}_G$.
\end{lemma}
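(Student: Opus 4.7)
The goal is to show $\Delta(\alpha)-\Delta(\beta)\in \vec{L}_G$, since then the two cosets coincide. The plan has two steps: first, reduce the statement to a claim about closed walks by exploiting strong connectedness; second, show that the displacement of any closed walk lies in $\vec{L}_G$ by decomposing it into simple cycles.

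For the first step, since $G$ is strongly-connected there exists a path $\gamma$ from $q$ to $p$. The concatenations $\alpha\gamma$ and $\beta\gamma$ are then both cycles on $p$, and by additivity of the displacement along a concatenation of paths one has
\[
\bigl(\Delta(\alpha)+\Delta(\gamma)\bigr)-\bigl(\Delta(\beta)+\Delta(\gamma)\bigr)=\Delta(\alpha)-\Delta(\beta).
\]
So it suffices to show that the displacement of every cycle of $G$ belongs to $\vec{L}_G$.

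For the second step, I would argue by induction on the length of a cycle $C$ on a state $r$. If $C$ has length $0$ then $\Delta(C)=\vec{0}\in \vec{L}_G$. Otherwise, pick the first repeated state along $C$: this yields a factorization $C=C_1 C_2$ where $C_1$ is a simple cycle (on that repeated state) and $C_2$ is a strictly shorter cycle on $r$ (after contracting the simple piece out). By additivity of $\Delta$ along concatenations, $\Delta(C)=\Delta(C_1)+\Delta(C_2)$; the first summand lies in $\vec{L}_G$ by definition of $\vec{L}_G$ as the lattice spanned by displacements of simple cycles, and the second lies in $\vec{L}_G$ by the induction hypothesis.

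Combining the two steps, $\Delta(\alpha)+\Delta(\gamma)\in \vec{L}_G$ and $\Delta(\beta)+\Delta(\gamma)\in \vec{L}_G$, hence their difference $\Delta(\alpha)-\Delta(\beta)$ lies in $\vec{L}_G$, and the two cosets $\Delta(\alpha)+\vec{L}_G$ and $\Delta(\beta)+\vec{L}_G$ coincide. The only mildly subtle point is the cycle-decomposition lemma in the second step, but it is purely combinatorial and follows cleanly from the ``first repeated state'' argument sketched above.
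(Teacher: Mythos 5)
Your proposal is correct and follows essentially the same route as the paper: both use strong connectedness to pick a return path from $q$ to $p$, close $\alpha$ and $\beta$ into cycles, and then reduce to the fact that displacements of cycles lie in $\vec{L}_G$. The paper states this last fact without elaboration, while you supply the (standard) induction decomposing a cycle into a simple cycle plus a shorter cycle — a reasonable expansion, though note the factorization is really of the form $P_1 C_1 P_2$ with the simple cycle $C_1$ excised from the middle, not a literal prefix split $C_1 C_2$ as written; your parenthetical "after contracting the simple piece out" shows you are aware of this.
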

\begin{proof}
  Since $G$ is strongly-connected, there exists a path $\pi$ from $q$ to $p$. Notice that $\alpha\pi$ and $\beta\pi$ are cycles in $G$, and in particular $\Delta(\alpha\pi)+\vec{L}_G=\vec{L}_G=\Delta(\beta\pi)+\vec{L}_G$. From $\Delta(\alpha)+\Delta(\beta\pi)=\Delta(\beta)+\Delta(\alpha\pi)$ we deduce the lemma.
\end{proof}

\subsection{Upward-closed set $\vec{U}_{q,G}$}
Given a set $\vec{C}$ of configurations, a configuration $\vec{c}\in\vec{C}$ is said to be \emph{minimal} if for every $\vec{x}\in\vec{C}$ we have $\vec{x}\leq \vec{c}\Rightarrow \vec{x}=\vec{c}$. We denote by $\min(\vec{C})$ the set of minimal elements of $\vec{C}$. The \emph{upward-closure} of a set $\vec{B}\subseteq\setN^d$ is the set $\uparrow\vec{B}\eqdef\vec{B}+\setN^d$. Given a configuration $\vec{b}$, we simply denote by $\uparrow\vec{b}$ the set $\uparrow\{\vec{b}\}$. A set $\vec{U}$ of configurations is said to be \emph{upward-closed} if $\uparrow\vec{U}=\vec{U}$. Let us recall that the upward-closure of any set is upward-closed, and since $(\setN^d,\leq)$ is a well quasi ordered set, for any upward-closed set $\vec{U}$, the set $\vec{M}\eqdef\min(\vec{U})$ is finite and satisfies $\vec{U}=\uparrow \vec{M}$. The set $\vec{M}$ is called the \emph{basis} of $\vec{U}$. It follows that the membership of a configuration $\vec{x}$ in $\vec{U}$ is equivalent to the following quantifier-free Presburger formula:
$$\bigvee_{\vec{m}\in\vec{M}}\vec{x}\geq\vec{m}$$

\medskip

A \emph{pumping pair} for $(q,G)$ where $q$ is a state of an $I$-unfolding $G$ is a pair $(u,v)$ of words in $A^*$ that label cycles on $q$ with a length bounded by $db^d$ where $b\eqdef (3dm)^{(d+2)^{2d+1}}$ and $m\eqdef\normi{A}$. We introduce the set $\vec{U}_{q,G}$ of configurations $\vec{c}\in\setN^d$ such that $\vec{c}|_I \geq q$ and there exists a pumping pair $(u,v)$ for $(q,G)$ such that $\vec{c}^-\xrightarrow{u}\vec{c}\xrightarrow{v} \vec{c}^+$ for some configurations $\vec{c}^-,\vec{c}^+$ satisfying $\vec{c}^-(i),\vec{c}^+(i)\geq mr^3(3d rm)^d$ for every $i\not\in I$ where $r\eqdef |Q|$. Notice that $\vec{U}_{q,G}$ is upward-closed. The following lemma provides a way to compute its basis.
\begin{lemma}\label{lem:baseU}
  Let $q$ be a state of an $I$-unfolding $G$ and let $\vec{M}_{q,G}=\min\{\vec{U}_{q,G}\}$. We have $\normi{\vec{M}_{q,G}}\leq s$ where $s\eqdef\max\{\norm{q}_\infty, db^dm+mr^3(3drm)^d\}$. Moreover, we can decide the membership in $\vec{U}_{q,G}$ of a vector $\vec{v}\in\{0,\ldots,s\}^d$ in space $O(\log(|Q|)+d\log(s))$.
\end{lemma}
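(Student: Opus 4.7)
The plan is to separate the claim into two independent parts: the componentwise bound on the minimal elements, and the small-space membership test.

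For the bound, I would fix any minimal $\vec{c}\in\vec{M}_{q,G}$ together with a pumping pair $(u,v)$ and witnesses $\vec{c}^-\xrightarrow{u}\vec{c}\xrightarrow{v}\vec{c}^+$. By \cref{lem:carastep}, the witness conditions unfold to the hurdle inequalities $\vec{c}^-\geq H(u)$ and $\vec{c}\geq H(v)$, the displacement identities $\vec{c}=\vec{c}^-+\Delta(u)$ and $\vec{c}^+=\vec{c}+\Delta(v)$, together with $\vec{c}|_I\geq q$ and, for $i\notin I$, $\vec{c}^-(i),\vec{c}^+(i)\geq mr^3(3drm)^d$. Rewriting each of these as a lower bound on $\vec{c}(i)$ reduces the question to bounding a $\max$ of at most five linear expressions in $H(u)(i)$, $H(v)(i)$, $\Delta(u)(i)$, $\Delta(v)(i)$, $q(i)$ and $mr^3(3drm)^d$. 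Since $u,v$ label cycles of $G$ at $q$, we have $\Delta(u)|_I=\Delta(v)|_I=\vec{0}$, which collapses the $i\in I$ case to $\vec{c}(i)\geq\max(q(i),H(u)(i),H(v)(i))\leq\max(\norm{q}_\infty,db^dm)\leq s$. For $i\notin I$, each lower bound is at most $mr^3(3drm)^d+db^dm\leq s$, using the previous lemma's bound $\norm{H(\sigma)}_\infty,\norm{\Delta(\sigma)}_\infty\leq|\sigma|m\leq db^dm$ for $\sigma\in\{u,v\}$. By minimality, $\vec{c}(i)$ is the infimum of the admissible lower bounds over all witnesses, and hence is at most $s$.

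For the membership algorithm, I would guess the two cycles on the fly and simulate them in two sequential phases sharing the same workspace. The backward phase simulates $u$ starting from $(p,\vec{d})=(q,\vec{v})$ by repeatedly guessing an in-transition $(p',a,p)\in T$ of the current state, checking the reverse hurdle condition $\vec{d}\geq\vec{a}_+$, and updating $\vec{d}\gets\vec{d}-\Delta(a)$ and $p\gets p'$ while incrementing a step counter bounded by $db^d$. At the end it verifies that $p=q$ and $\vec{d}(i)\geq mr^3(3drm)^d$ for $i\notin I$ (so $\vec{d}$ is an admissible $\vec{c}^-$). The forward phase simulates $v$ symmetrically by guessing out-transitions, checking $\vec{d}\geq\vec{a}_-$, updating $\vec{d}\gets\vec{d}+\Delta(a)$, and similarly validating the final state and the $\vec{c}^+$-threshold. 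A trivial preliminary check takes care of $\vec{v}|_I\geq q$.

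For the space accounting, the current state $p$ takes $O(\log|Q|)$ bits to index in $Q$, and the intermediate configuration $\vec{d}$ stays non-negative (by the hurdle checks) and componentwise bounded by $\normi{\vec{v}}+db^dm\leq 2s$, so it fits in $O(d\log s)$ bits. Guessing a transition of $G$ costs $O(\log|Q|)$ for the endpoint plus $O(d\log m)=O(d\log s)$ for the action in $A$, and the step counter uses $O(\log(db^d))=O(\log s)$ bits since $s\geq db^d m$. The two phases reuse the same workspace, yielding the total $O(\log|Q|+d\log s)$. The only mildly delicate point is verifying that intermediate configurations stay polynomially bounded in $s$, which is automatic from the bounds on action coefficients and path lengths combined with the non-negativity enforced by the hurdle checks.
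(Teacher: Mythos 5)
Your proof takes essentially the same route as the paper: for the bound, you unfold the defining conditions of $\vec{U}_{q,G}$ into componentwise lower bounds on $\vec{c}(i)$ parameterised by the pumping pair $(u,v)$ and invoke minimality; for the membership test, you explore the two cycles of bounded length on the fly while maintaining a configuration of size $O(d\log s)$ and a state index of size $O(\log|Q|)$. This is exactly how the paper constructs the explicit witness $\vec{c}_{u,v}$ and decides membership by incrementally computing $H(u),H(v),\Delta(u),\Delta(v)$ along the candidate cycles.

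One step deserves a closer look. For $i\notin I$, one of the lower bounds you obtain is $\vec{c}(i)\geq H(u)(i)+\Delta(u)(i)$, and you bound this term by $db^dm$ ``using the previous lemma's bound $\normi{H(\sigma)},\normi{\Delta(\sigma)}\leq|\sigma|m$''. Adding those two separate bounds only yields $2db^dm$, which need not be $\leq mr^3(3drm)^d+db^dm$; when $r$ is small this inequality fails because $b^d$ dwarfs $r^3(3drm)^d$. The conclusion is still correct, but it needs the sharper fact that the \emph{output vector} $H(u)+\Delta(u)$ itself has infinity norm at most $|u|m$: by the recursion one gets $H(a\sigma)+\Delta(a\sigma)=\max\{\vec{a}_+,H(\sigma)\}+\Delta(\sigma)$, and more directly $(H(u)+\Delta(u))(i)=\max_{j}\Delta(u_{>j})(i)$ where $u_{>j}$ ranges over suffixes of $u$, so each component is a displacement of a subword and hence bounded by $|u|m\leq db^dm$. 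With that correction your componentwise bound $mr^3(3drm)^d+db^dm\leq s$ is justified and the rest of the argument, including the two-phase simulation and the space accounting, matches the paper.
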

\begin{proof}
  Let $W$ be the set of words in $A^*$ with a length bounded by $db^d$ that labels a cycle on $q$ in $G$. Let $(u,v)\in W\times W$ and observe that a configuration $\vec{c}\in\setN^d$ is such that $\vec{c}^-\xrightarrow{u}\vec{c}\xrightarrow{v} \vec{c}^+$ for some configurations $\vec{c}^-,\vec{c}^+$ if, and only if, $\vec{c}\geq H(v)$ and $\vec{c}\geq H(u)+\Delta(u)$. Moreover, since in that case $\vec{c}^-=\vec{c}-\Delta(u)$ and $\vec{c}^+=\vec{c}+\Delta(v)$, we deduce that $\vec{c}^-(i),\vec{c}^+(i)\geq mr^3(3d rm)^d$  if, and only if, $\vec{c}(i)\geq  \Delta(u)+mr^3(3d rm)^d$ and $\vec{c}(i)\geq -\Delta(v)+mr^3(3d rm)^d$. Let us introduce the configuration $\vec{c}_{u,v}$ defined by $\vec{c}_{u,v}(i)=q(i)$ if $i\in I$, and by the following equality if $i\not\in I$:
  $$c_{u,v}(i)=
    \max\{ H(v)(i), H(u)(i)+\Delta(u)(i), \Delta(u)(i)+mr^3(3d rm)^d, -\Delta(v)(i)+mr^3(3d rm)^d\}
  $$
  and observe that we have the following equality:
  $$\vec{U}_{q,G}=\uparrow\{\vec{c}_{u,v} \mid (u,v)\in W\times W\}$$
  Denoting by $\vec{M}_{q,G}=\min\{\vec{U}_{q,G}\}$, we deduce that $\norm{\vec{M}_{q,G}}_\infty\leq s$.
  Now, let us consider a vector $\vec{v}\in\{0,\ldots,s\}^d$ and observe that $\vec{v}$ is in $\vec{U}_{q,G}$ if, and only if, there exists $(u,v)\in W\times W$ such that $\vec{v}\geq \vec{c}_{u,v}$. Rather than exploring all the possible pairs $(u,v)\in W\times W$, notice that we can explore step by step pairs of words $(u,v)$, and just compute step by step the values $H(u)$, $H(v)$, $\Delta(u)$, $\Delta(v)$, $|u|$, $|v|$, and the state $q_-,q_+$ such that $u$ is the label of a path from $q_-$ to $q$ and $v$ is the label of a path from $q$ to $q_+$. We then stop if $q_-=q=q_+$ and $\vec{v}\geq \vec{c}_{u,v}$. It follows that we can decide the membership of $\vec{v}$ in $\vec{U}_{q,G}$ in space $O(\log(|Q)+d\log(s))$.
\end{proof}

\section{Witnesses of Mutual Reachability}\label{sec:witnesses}
In this paper, we prove the following characterization of the mutual reachability relation. This characterization will be useful to compute a Presburger formula encoding the mutual reachability relation, and a Presburger formula encoding the set of bottom configurations.
\begin{theorem}\label{thm:witness}
  Let $A$ be a PN, and let $b\eqdef (3dm)^{(d+2)^{2d+1}}$ where $m\eqdef\normi{A}$. A set $\vec{C}$ of configurations are mutually reachable for $A$ if, and only if, there exists a structurally-reversible $I$-unfolding $G=(Q,A,T)$ with $\vec{C}|_I\subseteq Q\subseteq \{q\in\setN^I\mid \normi{q}<b\}$ such that:
  \begin{itemize}
  \item $\vec{c}\in \vec{U}_{\vec{c}|_I,G}$ for every $\vec{c}\in\vec{C}$,
  \item $\vec{y}-\vec{x}\in \vec{L}_{\vec{x}|_I,G,\vec{y}|_I}$ for every $\vec{x},\vec{y}\in\vec{C}$.
  \end{itemize}
\end{theorem}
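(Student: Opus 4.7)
The statement is an equivalence, and the two directions are of very different flavor. The easy direction $(\Leftarrow)$ uses only the machinery already set up in \cref{sec:unfoldings}: structural reversibility, the pumping provided by $\vec{U}_{q,G}$, and the lattice $\vec{L}_G$ together let me realize any required displacement as a concrete word in $A^*$. The hard direction $(\Rightarrow)$ must actually construct $G$ with the prescribed bound $b = (3dm)^{(d+2)^{2d+1}}$, and this is where the two technical results deferred to \cref{part:lem1} and \cref{part:lem2} enter the argument.

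\textbf{Plan for $(\Leftarrow)$.} I fix $\vec{x}, \vec{y} \in \vec{C}$; by symmetry of the hypotheses in $\vec{x}$ and $\vec{y}$ it suffices to show $\vec{x} \xrightarrow{A^*} \vec{y}$. Since $\vec{x} \in \vec{U}_{\vec{x}|_I,G}$, I get cycles $u_x, v_x$ on $\vec{x}|_I$ with $\vec{c}^-_x \xrightarrow{u_x} \vec{x} \xrightarrow{v_x} \vec{c}^+_x$ and $\vec{c}^+_x(i) \geq mr^3(3drm)^d$ for $i \notin I$; similarly $\vec{c}^-_y \xrightarrow{u_y} \vec{y} \xrightarrow{v_y} \vec{c}^+_y$ at $\vec{y}|_I$. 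I will glue $\vec{x} \xrightarrow{v_x} \vec{c}^+_x \xrightarrow{\tau} \vec{c}^-_y \xrightarrow{u_y} \vec{y}$ for a word $\tau$ labeling a path in $G$ from $\vec{x}|_I$ to $\vec{y}|_I$ whose displacement is exactly $\vec{c}^-_y - \vec{c}^+_x$. To build $\tau$, start from any path $\pi$ in $G$ from $\vec{x}|_I$ to $\vec{y}|_I$; because $\Delta(u_y)$ and $\Delta(v_x)$ are cycle displacements and thus belong to $\vec{L}_G$, the lattice condition forces the required correction $(\vec{c}^-_y - \vec{c}^+_x) - \Delta(\pi)$ to lie in $\vec{L}_G$, hence to be an integer combination $\sum_i z_i \Delta(\gamma_i)$ of simple-cycle displacements of $G$. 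Positive $z_i$ are absorbed by inserting $z_i$ extra copies of $\gamma_i$ into $\pi$ at the state where $\gamma_i$ lives; negative $z_i$ are absorbed by invoking structural reversibility transition by transition, which turns each cycle $\gamma_i$ into a cycle $\gamma_i^{-}$ on the same base state with $\Delta(\gamma_i^{-}) = -\Delta(\gamma_i)$. The $I$-coordinates stay admissible throughout because we never leave a path of $G$, while the coordinates outside $I$ stay non-negative because the threshold $mr^3(3drm)^d$ is calibrated to dominate the one-norm hurdles of the inserted cycles, whose generators in $\vec{L}_G$ have entries bounded by $|Q|\,\normi{A}$ by \cref{lem:repL}.

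\textbf{Plan for $(\Rightarrow)$ and the main obstacle.} Let $\vec{S}$ be the SCCC containing $\vec{C}$, which is well-defined since the elements of $\vec{C}$ are pairwise mutually reachable. I take $I \subseteq \{1,\ldots,d\}$ to be the set of coordinates on which $\vec{S}$ is bounded by $b$, and set $G \eqdef G_{\vec{S},I}$, which is automatically a structurally-reversible $I$-unfolding by the lemma of \cref{sec:unfoldings}. The size bound $Q \subseteq \{q \in \setN^I \mid \normi{q} < b\}$ is by the choice of $I$. The lattice condition $\vec{y} - \vec{x} \in \vec{L}_{\vec{x}|_I, G, \vec{y}|_I}$ is immediate since any firing sequence inside $\vec{S}$ from $\vec{x}$ to $\vec{y}$ projects to a path of $G$ with displacement $\vec{y} - \vec{x}$. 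The genuinely hard part is the upward-closure condition: for each $\vec{c} \in \vec{C}$ I must exhibit a pumping pair at $\vec{c}|_I$ of length at most $db^d$ simultaneously producing endpoint values $\geq mr^3(3drm)^d$ on every coordinate outside $I$. This amounts to a quantitative structure theorem for SCCC's saying that on each unbounded coordinate one can reach arbitrarily large values via short cycles, with length and amplitude both controlled by $b$; establishing this doubly-exponential bound is the core technical content of the paper, and is exactly what \cref{part:lem1} and \cref{part:lem2} provide, building on the mutual-reachability theory of \cite{concurjournal13}. The main obstacle is therefore not the architectural reduction above, which is natural, but the uniform bound $b$, which must be threaded carefully through those two deferred arguments.
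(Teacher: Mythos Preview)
Your $(\Leftarrow)$ argument has a real gap, and it stems from misplacing the two deferred technical results. You build $\tau$ by inserting $|z_i|$ copies of simple cycles (or their structural reverses) into an elementary path $\pi$, and then claim the non-$I$ coordinates stay non-negative because the threshold $mr^3(3drm)^d$ ``dominates the one-norm hurdles of the inserted cycles''. But the hurdle of a single simple cycle being bounded by $rm$ says nothing about the hurdle of a concatenation of many such cycles: if $\Delta(\gamma_i)(j)<0$ for some $j\notin I$, then after $n$ consecutive copies of $\gamma_i$ coordinate $j$ has dropped by roughly $n\,|\Delta(\gamma_i)(j)|$, and $n$ can be as large as $\norm{\vec{c}^-_y-\vec{c}^+_x}$, which is not bounded by any function of $r,m,d$ alone. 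The fixed threshold cannot absorb an unbounded drop. This is precisely the difficulty that \cref{lem:pathrev} (the subject of \cref{part:lem1}) resolves: it uses a Steinitz-type reordering of the cycle displacements so that every partial sum stays close to the straight-line interpolation between $\vec{0}$ and the target, and only then does the threshold $mr^3(3drm)^d$ suffice. In the paper, \cref{part:lem1} is the engine of the $(\Leftarrow)$ direction, not of $(\Rightarrow)$; your plan treats $(\Leftarrow)$ as elementary and pushes both deferred results into $(\Rightarrow)$, which is backwards.

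For $(\Rightarrow)$ your outline is closer to the paper's, but your choice of $I$ as ``the coordinates on which the SCCC is bounded by $b$'' is not what \cref{part:lem2} actually does. The paper selects $I$ via an $m$-adapted extractor $\lambda_0\le\cdots\le\lambda_{d+1}$, taking $I=\extract{\lambda}{\{1,\ldots,d\}}{\vec{C}}$; this layered choice is what makes the Rackoff-style argument of \cref{lem:rack} produce the short pumping cycles witnessing $\vec{c}\in\vec{U}_{\vec{c}|_I,G}$. A flat ``bounded by $b$'' criterion would not give you, for each $i\notin I$, a configuration in the SCCC with $\vec{c}(i)\ge\lambda_{|I|+1}$, which is the leverage needed for the pumping bound. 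Since you defer this direction to the technical lemmas anyway, the fix is simply to let those lemmas also dictate $I$ rather than fixing it yourself in advance.
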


One way of the previous \cref{thm:witness} is obtained thanks to the following lemma.
\begin{lemma}\label{lem:pathrev}
  Let us consider a structurally-reversible $I$-unfolding $(Q,A,T)$ and let $r\eqdef |Q|$ and $m\eqdef\normi{A}$, let $\vec{x},\vec{y}$ be two configurations such that:
  \begin{itemize}
  \item $p\eqdef \vec{x}|_I$, and  $q\eqdef \vec{y}|_I$ are in $Q$,
  \item $\vec{x}(i),\vec{y}(i)\geq mr^3(3drm)^d$ for every $i\not\in I$, and
  \item $\vec{y}-\vec{x}\in\vec{L}_{p,G,q}$.
  \end{itemize}
  Then for any elementary path $\pi$ from $p$ to $q$, there exists a cycle $\theta$ on $q$
  such that $\vec{x}\xrightarrow{\pi\theta}\vec{y}$ and
  such that
  $|\theta|\leq \norm{\vec{y}-\vec{x}-\Delta(\pi)}2r^3(3drm)^{2d}$.
\end{lemma}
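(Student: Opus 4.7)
The plan is to set $\vec{d}:=\vec{y}-\vec{x}-\Delta(\pi)$ and observe from $\vec{y}-\vec{x}\in\vec{L}_{p,G,q}=\Delta(\pi)+\vec{L}_G$ that $\vec{d}\in\vec{L}_G$. It then suffices to exhibit a cycle $\theta$ on $q$ in $G$ with $\Delta(\theta)=\vec{d}$ that is fireable from $\vec{x}+\Delta(\pi)$ and whose length fits the claimed bound, since then $\vec{x}\xrightarrow{\pi\theta}\vec{y}$.

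First I would lift structural reversibility from single transitions to arbitrary paths: an easy induction on length shows that for every path $\alpha$ from $s$ to $t$ there is a path $\alpha^\star$ from $t$ to $s$ with $\Delta(\alpha\alpha^\star)=\vec{0}$, and a Carathéodory-type argument on the rational flow program witnessing structural reversibility ensures that $\alpha^\star$ may be taken of length polynomial in $r$ times $(3drm)^d$. For any simple cycle $c$ at a state $s$, composing an elementary path $\beta$ from $q$ to $s$ with $c$ (or, to flip the displacement, with its structural reverse $c^\star$) and with $\beta^\star$ produces two cycles $\widehat{c}^+,\widehat{c}^-$ at $q$ of respective displacements $+\Delta(c)$ and $-\Delta(c)$, each of length $O(r^2(3drm)^d)$.

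Next I would decompose $\vec{d}$ as a short signed sum of simple-cycle displacements. Simple cycles of $G$ have length at most $r$ so their displacements have infinity-norm at most $rm$, and they generate $\vec{L}_G$ as a lattice; a Hermite normal form combined with Cramer's rule yields integers $z_1,\ldots,z_\rho$ (with $\rho\leq d$) and simple cycles $c_1,\ldots,c_\rho$ such that $\vec{d}=\sum_{j=1}^{\rho}z_j\Delta(c_j)$ and $\sum_{j}|z_j|\leq\norm{\vec{d}}\cdot(3drm)^d$. Unfolding each coefficient into signed unit copies yields a sequence $\vec{u}_1,\ldots,\vec{u}_N$ of vectors in $\{-rm,\ldots,rm\}^d$ summing to $\vec{d}$, each realized by one of the cycles $\widehat{c}_j^{\pm}$ from the previous step.

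Finally I would reorder these vectors via the Steinitz--Bárány lemma so that every partial sum $\vec{s}_k=\sum_{i\leq k}\vec{u}_{\sigma(i)}$ satisfies $\normi{\vec{s}_k-(k/N)\vec{d}}\leq O(d\cdot rm)$, and take $\theta$ to be the concatenation of the corresponding cycles at $q$. The main obstacle would be firability from $\vec{x}+\Delta(\pi)$, since intermediate cumulative displacements can reach absolute values close to $\norm{\vec{d}}$, far exceeding the fixed lower bound $mr^3(3drm)^d$ on $\vec{x}(i)$. This is resolved by the identity $\vec{y}(i)=\vec{x}(i)+\Delta(\pi)(i)+\vec{d}(i)$: on any non-$I$ coordinate $i$, the Steinitz--Bárány bound forces the minimum intermediate value to be at least $\min(\vec{x}(i)+\Delta(\pi)(i),\vec{y}(i))-O(drm)\geq mr^3(3drm)^d-O(drm)\geq 0$, while firability on $I$-coordinates is automatic from the graph structure. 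Multiplying $N$ by the per-cycle length then yields $|\theta|\leq\norm{\vec{y}-\vec{x}-\Delta(\pi)}\cdot 2r^3(3drm)^{2d}$ as required.
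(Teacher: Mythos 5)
Your proposal follows the same high-level strategy as the paper: set $\vec{d}=\vec{y}-\vec{x}-\Delta(\pi)\in\vec{L}_G$, write $\vec{d}$ as a short sum of simple-cycle displacements, reorder the sum via the Steinitz lemma, and realize each summand as a short cycle on $q$. The one genuine structural difference is that the paper realizes each summand by a \emph{full-state} cycle (\cref{cor:smallcyle}), which can be rotated to sit on $q$, whereas you conjugate each simple cycle with forward and reverse connecting paths from $q$. Both are acceptable devices with comparable length bounds.

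There is however a real gap in your decomposition step. You claim that Hermite normal form together with Cramer's rule yields $\rho\leq d$ simple cycles $c_1,\ldots,c_\rho$ and integers $z_1,\ldots,z_\rho$ with $\vec{d}=\sum_j z_j\Delta(c_j)$ and $\sum_j|z_j|\leq\norm{\vec{d}}(3drm)^d$. This does not follow: the HNF basis of $\vec{L}_G$ consists of \emph{integer combinations} of the simple-cycle displacements, not of a sub-family of them, and in general no element of a lattice need be an integer combination of at most $d$ of the original generators (already $\vec{L}=2\setZ+3\setZ=\setZ$ in dimension $1$ shows this: $1$ is neither an integer multiple of $2$ nor of $3$). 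The paper instead starts from an arbitrary, possibly very long, non-negative sum of simple-cycle displacements equal to $\vec{d}$ — non-negativity being obtained from structural reversibility via \cref{lem:reverse} — and then \emph{extracts} a short sub-sum via \cref{cor:stei1}. That extraction argument (which is itself Steinitz-based) is the piece you are missing; HNF plus Cramer will not produce the required short expansion in simple-cycle displacements. A secondary, fixable imprecision: your firability estimate subtracts only the Steinitz error $O(drm)$, but each conjugated cycle has length $O(r^2(3drm)^d)$, so within a single cycle the partial sums can drop by $O(mr^2(3drm)^d)$; the paper's lower bound $mr^3(3drm)^d$ on the non-$I$ coordinates is calibrated to absorb both terms, so the conclusion survives, but your accounting should include the per-cycle excursion explicitly.
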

\begin{proof}
  The proof is given in \cref{part:pathrev}.
\end{proof}

\medskip

In fact, let us consider a set $\vec{C}$ of configurations such that there exists an $I$-unfolding $G=(Q,A,T)$ such that $\vec{C}|_I\subseteq Q\subseteq \{q\in\setN^I\mid \normi{q}<b\}$ such that:
\begin{itemize}
\item $\vec{c}\in \vec{U}_{\vec{c}|_I,G}$ for every $\vec{c}\in\vec{C}$,
\item $\vec{y}-\vec{x}\in \vec{L}_{\vec{x}|_I,G,\vec{y}|_I}$ for every $\vec{x},\vec{y}\in\vec{C}$.
\end{itemize}
  
  We introduce $r\eqdef|Q|$. Let $\vec{x},\vec{y}\in \vec{C}$. Let $p\eqdef \vec{x}|_I$, $q=\vec{y}|_I$. Since $\vec{x}\in\vec{U}_{p,G}$, there exists a cycle $\alpha$ on $p$ with a length bounded by $db^d$ such that $\vec{x}\xrightarrow{\alpha}\vec{x}^+$ for some configurations $\vec{x}^+$ such that $\vec{x}^+(i)\geq  mr^3 (3drm)^d$ for every $i\not\in I$. Symmetrically, there exists a cycle $\beta$ on $q$ with a length bounded by $db^d$ such that $\vec{y}^-\xrightarrow{\beta}\vec{y}$ for some configurations $\vec{y}^-$ such that $\vec{y}^-(i)\geq  mr^3 (3drm)^d$ for every $i\not\in I$. Observe that $\vec{y}^--\vec{x}^+=\vec{y}-\vec{x}-\Delta(\beta)+\Delta(\alpha)$. In particular $\vec{y}^--\vec{x}^+-\Delta(\pi)$ is in the lattice $\vec{L}_G$. \cref{lem:pathrev} shows that there exists a cycle $\theta$ on $q$ such that $\vec{x}^+\xrightarrow{\pi\theta}\vec{y}^-$ and such that
  $|\theta|\leq \norm{\vec{y}^--\vec{x}^+-\Delta(\pi)}2r^3(3drm)^{2d}$. 
  It follows that we have $\vec{x}\xrightarrow{\pi'}\vec{y}$ with $\pi'\eqdef \alpha\pi\theta\beta$. By symmetry, we get $\vec{x}\xleftrightarrow{A^*}\vec{y}$.

\begin{remark}
  As a direct consequence of the previous proof, notice that we can provide a bound on the length of a path from $\vec{x}$ to $\vec{y}$ that only depends on $\norm{\vec{y}-\vec{x}}$, $d$ and $m$, a result proved~\cite{DBLP:conf/fsttcs/Leroux19}. In fact, notice that $\vec{y}^--\vec{x}^+-\Delta(\pi)=\vec{y}-\vec{x}-\Delta(\beta)+\Delta(\alpha)-\Delta(\pi)$. It follows that $\norm{\vec{y}^--\vec{x}^+-\Delta(\pi)}\leq\norm{\vec{y}-\vec{x}}+dm(|\pi|+|\beta|+|\alpha|)\leq \norm{\vec{y}-\vec{x}}+dm(2d+1)b^d)$ since $|\pi|\leq r\leq b^d$, $|\beta|,|\alpha|\leq db^d$. It follows that:
  $$|\pi'|\leq (2d+1)b^d+\norm{\vec{y}-\vec{x}}+dm(2d+1)b^d)2r^3(3drm)^{2d}$$
  By observing that $r\leq b^d$ and $b=(3dm)^{(d+2)^{2d+1}}$, we deduce that there exists a constant $c_{d,m}$ that only depends on $d$ and $m$ such that $|\pi'|\leq \norm{\vec{y}-\vec{x}}c_{d,m}$.
\end{remark}

\medskip

The other way of the previous \cref{thm:witness} is obtained thanks to the following lemma.
\begin{lemma}\label{thm:witnessex}
  Let $A$ be a PN, and let $b\eqdef (3dm)^{(d+2)^{2d+1}}$ where $m\eqdef\normi{A}$. For every SCCC $\vec{C}$ of $A$, there exists a set $I\subseteq\{1,\ldots,d\}$ such that $\vec{C}|_I\subseteq \{q\in\setN^I\mid \normi{q}<b\}$, and denoting by $G$ the structurally-reversible $I$-unfolding $G_{\vec{C},I}$:
  \begin{itemize}
  \item We have $\vec{c}\in \vec{U}_{\vec{c}|_I,G}$ for every $\vec{c}\in\vec{C}$, and
  \item we have $\vec{y}-\vec{x}\in \vec{L}_{\vec{x}|_I,G,\vec{y}|_I}$ for every $\vec{x},\vec{y}\in \vec{C}$.
  \end{itemize}
\end{lemma}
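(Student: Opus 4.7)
The natural choice is
$$I \eqdef \{i \in \{1,\ldots,d\} \;:\; \sup_{\vec{c}\in\vec{C}} \vec{c}(i) < b\},$$
namely the coordinates that remain ``small'' across the whole SCCC; the remaining coordinates will behave like genuine pumpable counters. The first step is to establish the expected dichotomy: every coordinate $i \notin I$ is actually \emph{unbounded} in $\vec{C}$, i.e.\ takes arbitrarily large values. This self-improvement is where the tower-shaped constant $b = (3dm)^{(d+2)^{2d+1}}$ enters: a coordinate that merely reaches the value $b$ somewhere in $\vec{C}$ must, via the short-witness structure of mutual reachability in dimension $d$ (in the spirit of~\cite{concurjournal13}), admit a cycle through $\vec{C}$ whose net effect on that coordinate is strictly positive, and iterating this cycle inside the SCCC produces arbitrarily large values. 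Granted this dichotomy, $\vec{C}|_I$ is finite with $\normi{q} < b$ for every $q \in \vec{C}|_I$, and hence $G \eqdef G_{\vec{C},I}$ is well-defined and is a structurally-reversible $I$-unfolding by the earlier lemma.

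\textbf{Lattice condition.} Fix $\vec{x},\vec{y} \in \vec{C}$. Since $\vec{C}$ is an SCCC, there is a word $\sigma \in A^*$ with $\vec{x} \xrightarrow{\sigma} \vec{y}$ whose intermediate configurations all lie in $\vec{C}$. Projecting each intermediate configuration onto $I$ produces a path $\pi$ in $G$ from $\vec{x}|_I$ to $\vec{y}|_I$ with $\Delta(\pi) = \Delta(\sigma) = \vec{y} - \vec{x}$. Therefore $\vec{y} - \vec{x} = \Delta(\pi) + \vec{0}$ lies in $\Delta(\pi) + \vec{L}_G = \vec{L}_{\vec{x}|_I,G,\vec{y}|_I}$, and this holds independently of the choice of $\pi$.

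\textbf{Pumping condition --- the main obstacle.} For each $\vec{c} \in \vec{C}$ with $q \eqdef \vec{c}|_I$ the task is to exhibit cycles $u,v$ on $q$ in $G$, both of length at most $db^d$, such that the configurations $\vec{c}^- \eqdef \vec{c} - \Delta(u)$ and $\vec{c}^+ \eqdef \vec{c} + \Delta(v)$ are non-negative, $u$ is fireable from $\vec{c}^-$, $v$ is fireable from $\vec{c}$, and both $\vec{c}^-$ and $\vec{c}^+$ dominate the threshold $mr^3(3drm)^d$ on every coordinate outside $I$. The plan is, for each $i \notin I$, to exploit the Step~1 unboundedness to produce a cycle $\gamma_i^+$ through $q$ in $G$ with $\Delta(\gamma_i^+)(i) > 0$; structural reversibility of $G$ then also supplies a cycle $\gamma_i^-$ through $q$ with $\Delta(\gamma_i^-)(i) < 0$. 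Concatenating appropriately many iterations of the $\gamma_i^+$'s gives $v$, and similarly of the $\gamma_i^-$'s gives $u$; the Hurdle constraints on coordinates in $I$ come for free because $u,v$ are cycles of $G$, and those coordinates are hard-coded by the state $q$.

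\textbf{The quantitative heart.} The hard part is to make this fit inside the length budget $db^d$: naively the base cycles $\gamma_i^{\pm}$ have no a priori length bound, and neither do the number of iterations needed to reach the threshold $mr^3(3drm)^d$. One must therefore perform a Rackoff/KLMST-style small-model argument inside the finite graph $G$ (whose size is bounded by $b^d$ from Step~1) to extract, for each unbounded coordinate $i$, base cycles of length at most $b^d/d$ whose displacement on coordinate $i$ is bounded below by a positive constant; then a bounded number of iterations suffices. The calibration of $b = (3dm)^{(d+2)^{2d+1}}$ and of the threshold $mr^3(3drm)^d$ is exactly what makes this accounting close, and is the technical content that is presumably deferred to the later proof part.
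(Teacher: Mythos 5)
The lattice condition you establish is correct and matches the paper's observation: any run inside $\vec{C}$ from $\vec{x}$ to $\vec{y}$ projects to a path $\pi$ in $G$ with $\Delta(\pi)=\vec{y}-\vec{x}$, and $\vec{L}_{\vec{x}|_I,G,\vec{y}|_I}$ is well-defined independently of the path. However, the proposal has a genuine gap on the pumping condition, and the root cause is your choice of $I$.

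You take $I=\{i:\sup_{\vec{c}\in\vec{C}}\vec{c}(i)<b\}$, a flat threshold. The paper instead defines $I$ via the \emph{extractor} machinery with a strictly graded threshold sequence $\lambda_0\leq\cdots\leq\lambda_{d+1}$ and sets $I=\operatorname{extract}_{\lambda,\{1,\ldots,d\}}(\vec{C})$, the maximal $J$ such that $\vec{c}(j)<\lambda_{|J|}$ for all $\vec{c}\in\vec{C}$ and $j\in J$. These sets are not the same in general: your $I$ is the extractor for the degenerate sequence with all thresholds equal to $b$, which is \emph{not} $m$-adapted (the condition $\lambda_{n+1}\geq\lambda_n+m\lambda_n^n$ fails when the sequence is constant), so your $I$ can be strictly larger than the paper's. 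This matters because the graded structure is precisely what makes the Rackoff-style extraction (\cref{lem:rack}) close by induction on dimension: whenever a coordinate exceeds $\lambda_{|J|+1}$ it is dropped and one recurses with the \emph{smaller} bound $\lambda_{|J|}$ on the remaining ones. With a flat bound $b$ there is no strictly decreasing parameter to recurse on, and the length estimate $d\lambda_d^d\leq db^d$ on the cycles does not come out.

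Beyond the calibration, the "self-improvement dichotomy" you assert (that a coordinate reaching $b$ somewhere in $\vec{C}$ must be unbounded in $\vec{C}$) is stated without proof, and even if granted, unboundedness of a coordinate does not by itself produce a cycle \emph{in $G$} of length at most $db^d$ whose iterate makes that coordinate large while all $I$-coordinates track the hard-coded state. The paper gets this from \cref{lem:pumpupG} and \cref{lem:pumpdownG}, which combine \cref{lem:infiniteexe} (an infinite run visiting every configuration of $\vec{C}$ infinitely often), \cref{lem:sequential} (so that extracting along the run recovers $I$), and the Rackoff lemma to produce the short cycle $\alpha$ with $\vec{c}\xrightarrow{\alpha}\vec{c}^+$ and $\vec{c}^+(i)\geq\lambda_{|I|+1}-m\sum_{j\leq|I|}\lambda_j^j\geq mr^3(3drm)^d$. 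Your proposal explicitly defers this step ("the quantitative heart ... is presumably deferred to the later proof part"), but this lemma \emph{is} the later proof part, so the deferral leaves the statement unproved. To repair it you would need to (i) adopt the extractor-based $I$ with an $m$-adapted graded threshold sequence and (ii) carry out the dimension-induction extraction to get the $db^d$ length bound.
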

\begin{proof}
  The proof is given in \cref{part:witnessex}.
\end{proof}

Concerning the set of bottom configurations, we provide the following theorem. An $I$-unfolding $G=(Q,A,T)$ is said to be \emph{forward-closed } if for every $p\in Q$ and for every $a\in A$, if there exists an $I$-configuration $q$ such that $p\xrightarrow{a}q$, then $q\in Q$ and $(p,a,q)\in T$. 
\begin{theorem}\label{thm:bot}
  Let $A$ be a PN, and let $b\eqdef (3dm)^{(d+2)^{2d+1}}$ where $m\eqdef\normi{A}$. A configuration $\vec{c}$ is bottom if, and only if, there exists a forward-closed structurally-reversible $I$-unfolding $G=(Q,A,T)$ such that $\normi{q}<b$ for every $q\in Q$, a state $r\in Q$ such that $\vec{c}|_I=r$, $\vec{c}\in \vec{U}_{r,G}$, and such that for every $(p,a,q)\in T$ and for every $\vec{v}\in\vec{L}_{r,G,p}$, we have:
  $$\vec{c}+\vec{v}\in \vec{U}_{p,G}\cap \uparrow \vec{a}_-~~\Longrightarrow~~\vec{c}+\vec{v}+\Delta(a)\in \vec{U}_{q,G}$$  
\end{theorem}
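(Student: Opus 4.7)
The plan is to prove both implications of \cref{thm:bot} using \cref{thm:witness} and \cref{thm:witnessex} as black boxes: the former converts a compatible unfolding into a mutual reachability witness, while the latter extracts the canonical unfolding $G_{\vec{C},I}$ from a given SCCC $\vec{C}$. The forward direction instantiates \cref{thm:witnessex} on the SCCC of $\vec{c}$ and upgrades $G_{\vec{C},I}$ to be forward-closed using the bottom hypothesis. The reverse direction builds a candidate SCCC $\vec{C}'$ from $G$ and exploits the implication condition to show that $\vec{C}'$ is closed under firing actions.

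For the forward direction, let $\vec{C}$ be the SCCC of $\vec{c}$, which is forward-closed since $\vec{c}$ is bottom, and invoke \cref{thm:witnessex} to obtain $I$ and $G\eqdef G_{\vec{C},I}$ with $r\eqdef \vec{c}|_I\in Q\subseteq\{q:\normi{q}<b\}$, $\vec{c}\in \vec{U}_{r,G}$, and the lattice/pumping conditions for every pair in $\vec{C}$. To see that $G$ is forward-closed, pick $p\in Q$ and $a\in A$ with $p\geq \vec{a}_-|_I$, and fix $\vec{x}\in\vec{C}$ with $\vec{x}|_I=p$. From $\vec{x}\in \vec{U}_{p,G}$ there is a cycle $v$ on $p$ with $\vec{x}\xrightarrow{v}\vec{x}^+$ and $\vec{x}^+(i)\geq mr^3(3drm)^d\geq \vec{a}_-(i)$ for every $i\notin I$. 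Forward-closedness of $\vec{C}$ at the configuration level places every intermediate configuration of this path in $\vec{C}$, hence $\vec{x}^+\in\vec{C}$; since $\vec{x}^+\geq \vec{a}_-$, we get $\vec{x}^+\xrightarrow{a}\vec{y}^+$ with $\vec{y}^+\in\vec{C}$, so $q=\vec{y}^+|_I\in\vec{C}|_I=Q$ and $(p,a,q)\in T$ by definition of $G_{\vec{C},I}$. The implication condition follows by the same pattern: given $\vec{v}\in \vec{L}_{r,G,p}$ with $\vec{c}+\vec{v}\in \vec{U}_{p,G}\cap\uparrow\vec{a}_-$, applying \cref{thm:witness} to $\{\vec{c},\vec{c}+\vec{v}\}$ (with $G$ as witness) places $\vec{c}+\vec{v}$ in $\vec{C}$; firing $a$ yields $\vec{c}+\vec{v}+\Delta(a)\in\vec{C}$ by forward-closedness, and \cref{thm:witnessex} then puts it in $\vec{U}_{q,G}$ since its $I$-projection equals $q$.

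For the reverse direction, assume the conditions on $G$ hold and set
\[
  \vec{C}'\eqdef\{\vec{c}+\vec{v}\mid p\in Q,\; \vec{v}\in \vec{L}_{r,G,p},\; \vec{c}+\vec{v}\in \vec{U}_{p,G}\}.
\]
Taking $p=r$ and $\vec{v}=\vec{0}$ shows $\vec{c}\in\vec{C}'$. Since every $\vec{v}\in\vec{L}_{r,G,p}$ satisfies $\vec{v}|_I=p-r$ (cycles of $G$ have zero $I$-displacement), every element of $\vec{C}'$ has $I$-projection in $Q$; the lattice condition between any $\vec{x},\vec{y}\in\vec{C}'$ reduces to the coset identity $\vec{L}_{r,G,p_y}-\vec{L}_{r,G,p_x}=\vec{L}_{p_x,G,p_y}$, so \cref{thm:witness} applies and $\vec{C}'\subseteq[\vec{c}]$. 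The crucial step is closure of $\vec{C}'$ under firing: if $\vec{x}=\vec{c}+\vec{v}\in\vec{C}'$ has $I$-state $p$ and $\vec{x}\xrightarrow{a}\vec{y}$, then $p\geq\vec{a}_-|_I$, so forward-closedness of $G$ gives $(p,a,q)\in T$ with $q=p+\Delta(a)|_I$; the assumed implication hands us $\vec{y}\in \vec{U}_{q,G}$; and concatenating a path realising $\vec{v}$ with the transition $(p,a,q)$ yields $\vec{v}+\Delta(a)\in \vec{L}_{r,G,q}$, so $\vec{y}\in\vec{C}'$. Since $\vec{c}\in\vec{C}'\subseteq[\vec{c}]$ and $\vec{C}'$ is closed under firing, every configuration reachable from $\vec{c}$ remains in $\vec{C}'\subseteq[\vec{c}]$, hence $[\vec{c}]$ is forward-closed and $\vec{c}$ is bottom.

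The main obstacle is the forward-closedness of $G_{\vec{C},I}$ in the first direction: its transition set records only the transitions actually realised by pairs in $\vec{C}$, so one must show that every $I$-level transition enabled at a state of $Q$ is realised. This is where the pumping bound encoded in $\vec{U}_{p,G}$ combines with the bottom hypothesis: pumping inflates the non-$I$ coordinates of a witness $\vec{x}\in\vec{C}$ beyond $\vec{a}_-$, and forward-closedness of $\vec{C}$ keeps both the pumped configuration and its $a$-successor inside $\vec{C}$.
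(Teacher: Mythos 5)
Your proposal is correct and follows essentially the same approach as the paper: both directions hinge on \cref{thm:witness} and \cref{thm:witnessex}, and the forward direction (instantiating $G_{\vec{C},I}$, pumping to lift a witness above $\vec{a}_-$, using the bottom hypothesis to keep the pumped configuration and its $a$-successor inside $\vec{C}$) matches the paper line for line. The only cosmetic difference is in the reverse direction: the paper fixes a path $\sigma$ from $\vec{c}$ to a reachable $\vec{x}$ and proves by induction on its prefixes that the invariant $\vec{c}_i\in\vec{U}_{q_i,G}$ holds, whereas you package the same invariant into the explicit set $\vec{C}'=\{\vec{c}+\vec{v}\mid p\in Q,\ \vec{v}\in\vec{L}_{r,G,p},\ \vec{c}+\vec{v}\in\vec{U}_{p,G}\}$ and prove it is closed under firing; this requires you to additionally check the lattice/coset identity between \emph{arbitrary} pairs of $\vec{C}'$ when invoking \cref{thm:witness}, a small extra algebraic step the paper avoids by only applying \cref{thm:witness} to the pair $\{\vec{c},\vec{x}\}$. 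Both formulations are sound and equivalent; yours is marginally more structural, the paper's marginally more economical.
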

\begin{proof}
  Assume first that $\vec{c}$ is a bottom configuration and let $\vec{C}$ be its SCCC. \cref{thm:witnessex} shows that there exists a set $I\subseteq \{1,\ldots,d\}$ such that $\normi{\vec{x}|_I}<b$ for every $\vec{x}\in \vec{C}$, and denoting by $G$ the structurally-reversible $I$-unfolding $G_{\vec{C},I}$:
  \begin{itemize}
  \item We have $\vec{x}\in \vec{U}_{\vec{x}|_I,G}$ for every $\vec{x}\in\vec{C}$.
  \item We have $\vec{y}-\vec{x}\in \vec{L}_{\vec{x}|_I,G,\vec{y}|_I}$ for every $\vec{x},\vec{y}\in \vec{C}$.
  \end{itemize}
  Let $r=\vec{c}|_I$.
  
  Let us prove that $G$ is forward-closed. Let $p\in Q$, $a\in A$, and consider an $I$-configuration $q$ such that $p\xrightarrow{a}q$ and let us prove that $q\in Q$. There exists $\vec{x}\in\vec{C}$ such that $p=\vec{x}|_I$. Since $\vec{x}\in \vec{U}_{p,G}$, there exists a configuration $\vec{x}^+$ reachable from $\vec{x}$ such that $\vec{x}^+|_I=r$ and $\vec{x}^+(i)\geq m$ for every $i\not\in I$. Since $\vec{C}$ is bottom, it follows that $\vec{x}^+\in \vec{C}$. So, by replacing $\vec{x}$ by $\vec{x}^+$ we can assume that $\vec{x}=\vec{x}^+$. As $p\xrightarrow{a}q$ and $\vec{x}|_I=p$, we get $\vec{x}(i)=p(i)\geq \vec{a}_-(i)$ for every $i\in I$. Moreover, as $\vec{x}(i)\geq m\geq \vec{a}_-(i)$ for every $i\not\in I$, we have proved that $\vec{x}\geq\vec{a}_-$. Hence, $\vec{x}\xrightarrow{a}\vec{y}$ with $\vec{y}\eqdef\vec{x}+\Delta(a)$. As $\vec{C}$ is a bottom SCCC, we deduce that $\vec{y}\in\vec{C}$. Hence $(\vec{x}|_I,a,\vec{y}|_I)\in T$. Since this triple is $(p,a,q)$, we have proved that $q\in Q$ and $(p,a,q)\in T$. Hence $G$ is forward-closed.

  Let us consider $(p,a,q)\in T$ and $\vec{v}\in\vec{L}_{r,G,p}$ such that $\vec{c}+\vec{v}\in \vec{U}_{p,G}\cap \uparrow \vec{a}_-$. Let $\vec{x}\eqdef\vec{c}+\vec{v}$. Theorem~\ref{thm:witness} shows that $\vec{x}\in \vec{C}$. Since $\vec{x}\geq\vec{a}_-$ we deduce that $\vec{x}\xrightarrow{a}\vec{y}$ with $\vec{y}\eqdef\vec{x}+\Delta(a)$. Since $\vec{C}$ is a bottom SCCC, it follows that $\vec{y}\in\vec{C}$. Hence $\vec{y}\in \vec{U}_{q,G}$ since $\vec{y}|_I=q$. We have proved one direction of the theorem.

  Now, assume that $\vec{c}$ is a configuration such that there exists a forward-closed structurally-reversible $I$-unfolding $G=(Q,A,T)$, a state $r\in Q$ such that $\vec{c}|_I=r$, $\vec{c}\in \vec{U}_{r,G}$, and such that for every $(p,a,q)\in T$ and for every $\vec{v}\in\vec{L}_{r,G,p}$ we have:
  $$\vec{c}+\vec{v}\in \vec{U}_{p,G}\cap \uparrow \vec{a}_-~~\Longrightarrow~~\vec{c}+\vec{v}+\Delta(a)\in \vec{U}_{q,G}$$  
  And let us prove that $\vec{c}$ is bottom. It is sufficient to prove that for every configuration $\vec{x}$ reachable from $\vec{c}$, the configurations $\vec{x}$ and $\vec{c}$ are mutually reachable. There exists a word $\sigma\in A^*$ such that $\vec{c}\xrightarrow{\sigma}\vec{x}$. Assume that $\sigma=a_1\ldots a_k$, and let us introduce the sequence $\vec{c}_0,\ldots,\vec{c}_k$ of configurations such that $\vec{c}_0=\vec{x}$, $\vec{c}_k=\vec{x}$ and such that $\vec{c}_{i-1}\xrightarrow{a_i}\vec{c}_i$ for every $1\leq i\leq k$. Let us introduce $q_i=\vec{c}_i|_I$. Since $G$ is forward-closed, and $q_{i-1}\xrightarrow{a_i}q_i$ for every $1\leq i\leq k$, we deduce that $q_i\in Q$ for every $0\leq i\leq k$. Notice that $\vec{c}_0\in \vec{U}_{q_0,G}$. Assume by induction that $\vec{c}_{i-1}\in \vec{U}_{q_{i-1},G}$ for some $i\in\{1,\ldots,k\}$ and let us prove that $\vec{c}_i\in \vec{U}_{q_i,G}$. Observe that $\vec{c}_{i-1}=\vec{c}+\Delta(a_1\ldots a_{i-1})$. As $a_1\ldots a_{i-1}$ is the label of a path from $r$ to $q_{i-1}$, it follows that $\vec{v}\eqdef  \Delta(a_1\ldots a_{i-1})$ is in $\vec{L}_{r,G,p}$. Since additionally we have $\vec{c}_{i-1}\geq (\vec{a}_i)_-$ we deduce that $\vec{c}+\vec{v}+\Delta(a_i)\in \vec{U}_{q_i,G}$. As $\vec{c}+\vec{v}+\Delta(a_i)=\vec{c}_i$, we have proved the induction.
  In particular $\vec{x}=\vec{c}_k$ is in $\vec{U}_{q_k,G}$. From~\cref{thm:witness} we deduce that $\vec{x}$ and $\vec{c}$ are mutually reachable. Hence $\vec{c}$ is a bottom configuration.
\end{proof}

\section{Compiling in Presburger}\label{sec:compiling}
By encoding with a quantifier-free Presburger formula the membership of a vector in the upward-closed set $\vec{U}_{q,G}$ and the lattice $\vec{L}_G$, we obtain as a direct corollary the following theorem.
\begin{theorem}\label{thm:main}
  Let $A$ be a PN, and let $s\eqdef 2mb^{3d}(3db^dm)^d$ where $b\eqdef (3dm)^{(d+2)^{2d+1}}$ and $m\eqdef\normi{A}$. There exists a set $S_A$ of tuples $(\vec{a},\vec{b},\vec{v},\gamma)$ where $\vec{a},\vec{b}\in\{0,\ldots,s\}^d$ and $\vec{v}\in\{-s,\ldots,s\}^d$ and $\gamma$ is a representation of a lattice such that $\normi{\gamma}\leq s$ with a membership problem in space $O(s)$ such that for every configuration $\vec{x},\vec{y}\in\setN^d$, we have:
  $$\vec{x}\xleftrightarrow{A^*}\vec{y}~~~\Longleftrightarrow~~~\bigvee_{(\vec{a},\vec{b},\vec{v},\gamma)\in S_A}\vec{x}\geq \vec{a}\wedge \vec{y}-\vec{x}-\vec{v}\in \crochet{\gamma}\wedge \vec{y}\geq \vec{b}$$
\end{theorem}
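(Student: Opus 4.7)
The plan is to define $S_A$ as the set of tuples $(\vec{a},\vec{b},\vec{v},\gamma)$ arising from the witnessing data supplied by \cref{thm:witness}. Concretely, I would place $(\vec{a},\vec{b},\vec{v},\gamma)$ into $S_A$ whenever there exist $I\subseteq\{1,\ldots,d\}$, a structurally-reversible $I$-unfolding $G=(Q,A,T)$ with $Q\subseteq\{r\in\setN^I\mid \normi{r}<b\}$, and states $p,q\in Q$ such that $\vec{a}\in\min(\vec{U}_{p,G})$, $\vec{b}\in\min(\vec{U}_{q,G})$, $\vec{v}$ is the displacement of an elementary path from $p$ to $q$ in $G$, and $\gamma=\gamma_G$ is the representation of $\vec{L}_G$ produced by \cref{lem:repL}. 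The size bounds on $\vec{a},\vec{b}$ come from \cref{lem:baseU}, the bound on $\vec{v}$ from $\normi{\vec{v}}\leq |Q|\cdot\normi{A}\leq b^d m$, and the bound on $\gamma$ from \cref{lem:repL}; all comfortably fit under $s$. The membership test for $S_A$ reduces to nondeterministically guessing $(I,G,p,q)$, verifying structural reversibility of $G$ via \cref{lem:stdec}, verifying $\vec{a},\vec{b}$ belong to the relevant upward-closed sets via \cref{lem:baseU}, and recomputing $\gamma_G$ and a path from $p$ to $q$; the total space needed is polynomial in $\log b$, $\log s$, and $d$, well below $O(s)$.

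The forward direction is then immediate: if $\vec{x}\xleftrightarrow{A^*}\vec{y}$, then \cref{thm:witness} applied to $\vec{C}=\{\vec{x},\vec{y}\}$ yields $(I,G)$ with $\vec{x}|_I,\vec{y}|_I\in Q$. Setting $p\eqdef\vec{x}|_I$ and $q\eqdef\vec{y}|_I$ and using the upward-closedness of $\vec{U}_{p,G}$ (resp.\ $\vec{U}_{q,G}$), I would pick $\vec{a}\in\min(\vec{U}_{p,G})$ with $\vec{x}\geq\vec{a}$ (resp.\ $\vec{b}$), take $\vec{v}$ to be the displacement of any elementary path from $p$ to $q$, and set $\gamma=\gamma_G$. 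The condition $\vec{y}-\vec{x}\in\vec{L}_{p,G,q}=\vec{v}+\crochet{\gamma}$ directly gives the missing lattice constraint.

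The backward direction is the delicate part, and the main obstacle. The formula only enforces $\vec{x}\geq\vec{a}$, which gives $\vec{x}|_I\geq p$ but not the equality $\vec{x}|_I=p$ that \cref{thm:witness} requires. The key observation to unblock this is that every simple cycle of an $I$-unfolding has vanishing $I$-displacement, so $\vec{L}_G|_I=\{\vec{0}\}$, while $\vec{v}|_I=q-p$; the constraint $\vec{y}-\vec{x}-\vec{v}\in\crochet{\gamma}$ therefore pins down $\vec{y}|_I-\vec{x}|_I=q-p$, so with $\delta\eqdef\vec{x}|_I-p\in\setN^I$ we obtain $\vec{y}|_I=q+\delta$. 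I would then translate the entire unfolding by $\delta$: define $G_\delta=(Q_\delta,A,T_\delta)$ with $Q_\delta\eqdef\{r+\delta\mid r\in Q\}$ and transitions relabelled accordingly. Since the relation $\xrightarrow{a}$ on $I$-configurations is translation invariant, $G_\delta$ is again a structurally-reversible $I$-unfolding, isomorphic to $G$. Cycles coincide word-for-word, so $\vec{L}_{G_\delta}=\vec{L}_G$ and $\vec{L}_{p+\delta,G_\delta,q+\delta}=\vec{L}_{p,G,q}$. Because $\vec{x}\geq\vec{a}\in\vec{U}_{p,G}$ and the pumping-pair length bound $db^d$ and the height thresholds depend only on $d$, $m$, and $|Q|=|Q_\delta|$, the same pumping pair certifies $\vec{x}\in\vec{U}_{p+\delta,G_\delta}$, and symmetrically for $\vec{y}$.

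Finally, I would apply the ``if'' direction of \cref{thm:witness} to $\vec{C}=\{\vec{x},\vec{y}\}$ with the unfolding $G_\delta$ to conclude $\vec{x}\xleftrightarrow{A^*}\vec{y}$. Crucially, that direction, as established in the paragraph following \cref{lem:pathrev}, uses only structural reversibility together with the pumping and lattice conditions, so it does not rely on the state-size bound $\normi{r}<b$; in particular the potentially larger states of $G_\delta$ are harmless. With this, the equivalence stated in the theorem is proved, and the earlier bound and complexity bookkeeping supplies the quantitative conclusions about $S_A$.
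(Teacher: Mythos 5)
Your construction of $S_A$ matches the paper's (using minimal elements of $\vec{U}_{p,G}$ instead of arbitrary bounded members is cosmetic, and both choices yield the correct equivalence), and the forward direction and the $O(s)$ space bound for membership are the same bookkeeping. The substantive difference lies in the backward direction. The paper dispatches it with a one-line appeal to \cref{thm:witness}, but you correctly identify that this does not close as stated: $\vec{x}\geq\vec{a}$ with $\vec{a}\in\vec{U}_{p,G}$ only gives $\vec{x}|_I\geq p$, so $\vec{x}|_I$ need not lie in $Q$, and the hypotheses of \cref{thm:witness} (hence of \cref{lem:pathrev}) are not literally satisfied. Your translation argument is exactly the missing repair: since every simple cycle of an $I$-unfolding has zero $I$-displacement, $\vec{L}_G|_I=\{\vec{0}\}$, and together with $\vec{v}|_I=q-p$ the lattice constraint $\vec{y}-\vec{x}-\vec{v}\in\crochet{\gamma}$ forces $\vec{y}|_I-\vec{x}|_I=q-p$; setting $\delta\eqdef\vec{x}|_I-p\geq\vec{0}$ and translating the unfolding by $\delta$ yields an isomorphic structurally-reversible $I$-unfolding $G_\delta$ with states $\vec{x}|_I=p+\delta$ and $\vec{y}|_I=q+\delta$. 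Upward-closedness of the pumping conditions (which depend only on $|Q|=|Q_\delta|$ and the fixed length bound $db^d$) carries $\vec{x}\geq\vec{a}\in\vec{U}_{p,G}$ over to $\vec{x}\in\vec{U}_{p+\delta,G_\delta}$, and you rightly note that the argument following \cref{lem:pathrev} uses only $|Q|$ and structural reversibility, not the state-norm bound $\normi{q}<b$, so the potentially large states of $G_\delta$ are harmless. In short: same $S_A$, same forward direction, but you supply the translation-invariance argument that the paper treats as immediate.
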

\begin{proof}
  Let us introduce the set $S_A$ of tuples $(\vec{a},\vec{b},\vec{v},\gamma)$ such that there exists a structurally-reversible $I$-unfolding $G$ satisfying $Q\subseteq \{q\in\setN^I\mid \normi{q}<b\}$, let $r=|Q|\leq b^d$, such that $\gamma$ is a representation of $\vec{L}_G$ satisfying $\normi{\gamma}\leq (d!)^2b^dm^d\leq s$ computed by some given algorithm (see \cref{lem:repL}), two states $p,q\in Q$, a simple path $\pi$ from $p$ to $q$ satisfying $\vec{v}=\Delta(\pi)$, and $\vec{a},\vec{b}\in \{0,\ldots,s\}^d$ satisfying $\vec{a}\in\vec{U}_{p,G}$ and $\vec{b}\in\vec{U}_{q,G}$. Notice that $\normi{\vec{v}}\leq b^dm\leq s$. From \cref{thm:witness}, we deduce that $S_A$ satisfies the theorem.

  Now, just observe we can decide if a tuple $(\vec{a},\vec{b},\vec{v},\gamma)$ where $\vec{a},\vec{b},\vec{v}\in\{0,\ldots,s\}^d$ and $\gamma$ is a representation of a lattice such that $\normi{\gamma}\leq s$ is in $S_A$ by enumerating all the possible structurally-reversible $I$-unfolding $G$ satisfying $Q\subseteq \{q\in\setN^I\mid \normi{q}<b\}$ (we just remember one at each step of the enumeration). Then, we just compute with the algorithm used for defining $\vec{S}_A$ a representation of $\vec{L}_G$ and check if $\gamma$ is this representation. Then we can check if $\vec{a}\in\vec{U}_{p,G}$ and $\vec{b}\in\vec{U}_{q,G}$ in space $O(\log(|Q|)+d\log(s))$. We are done.
\end{proof}

From a Presburger formula $\phi_A$ encoding the mutual reachability relation, a Presburger formula $\phi_A^\bot(\vec{c})$ encoding the set of bottom configurations can be obtained as follows:
$$\phi_A^\top(\vec{c})~\eqdef~\forall \vec{x}\bigwedge_{a\in A}\phi_A(\vec{c},\vec{x})\wedge \vec{x}\geq \vec{a}_-\Rightarrow \phi_A(\vec{c},\vec{x}+\Delta(a))$$
Even if such a formula is rather simple, it does not take advantage of the fact that $\vec{c}$, $\vec{x}$ and $\vec{x}+\Delta(a)$ are in the same SCCC, a property used in the following theorem.

\medskip

A $k$-\emph{threshold} formula $\phi$ is a boolean combination of formulas of the form $\vec{x}(i)\geq z$ where $i\in\{1,\ldots,d\}$ and $z\in\setZ$ is an integer satisfying $|z|\leq k$. The size of such a formula is the one expected with numbers encoded in binary.
\begin{theorem}\label{thm:bot}
  Let $A$ be a PN and let $s\eqdef 2mb^{3d}(3db^dm)^d$ where $b\eqdef (3dm)^{(d+2)^{2d+1}}$ and $m\eqdef\normi{A}$. We can compute in time $O(s^d)$ a set $T_A$ of tuples $(r,\gamma,\phi)$ where $r$ is an $I$-configuration with $\normi{q}<b$, $\gamma$ is a representation of a lattice such that $\normi{\gamma}\leq s$, and $\phi$ is a $k$-threshold formula with a size bounded by $O(s)$ and $k\leq s$, and such that a configuration $\vec{c}$ is a bottom configurations if, and only if:
  $$\bigvee_{(q,\gamma,\phi)\in T_A}\vec{c}|_I=r~\wedge~\forall \vec{v}\in\crochet{\gamma}~ \phi(\vec{c}+\vec{v})$$
\end{theorem}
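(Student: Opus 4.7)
My plan is to transfer the semantic characterization of bottom configurations given by the previous \cref{thm:bot} into compiled form, in the same spirit as the construction used for \cref{thm:main}. Each forward-closed structurally-reversible $I$-unfolding $G=(Q,A,T)$ with $Q\subseteq\{q\in\setN^I\mid\normi{q}<b\}$, paired with a state $r\in Q$, will contribute one or more tuples to $T_A$. Since $|Q|\le b^d$ and there are $2^d$ choices of $I\subseteq\{1,\ldots,d\}$, the total number of $(G,r)$ pairs is bounded polynomially in $s^d$, and by \cref{lem:stdec} each candidate graph can be checked for being forward-closed and structurally reversible in polynomial time. For every accepted pair I would precompute (i) the lattice representation $\gamma\eqdef\gamma_G$ via \cref{lem:repL}, so that $\crochet{\gamma}=\vec{L}_G$ and $\normi{\gamma}\le s$; (ii) for each $p\in Q$, a fixed coset representative $\vec{v}_{r,p}\in\vec{L}_{r,G,p}$ of norm $\le s$, taken as the displacement of an elementary path from $r$ to $p$ in $G$; and (iii) for each $p\in Q$, the finite basis $\vec{M}_{p,G}=\min\vec{U}_{p,G}$ via \cref{lem:baseU}, all of whose elements satisfy $\normi{\vec{m}}\le s$.

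\textbf{The threshold formula.}
The formula $\phi(\vec{x})$ is then built as a conjunction, over all transitions $(p,a,q)\in T$, of the shifted implication
$$\bigl(\vec{x}+\vec{v}_{r,p}\in\vec{U}_{p,G}\bigr)\wedge\bigl(\vec{x}+\vec{v}_{r,p}\ge\vec{a}_-\bigr)\;\Longrightarrow\;\bigl(\vec{x}+\vec{v}_{r,p}+\Delta(a)\in\vec{U}_{q,G}\bigr),$$
where each membership $\vec{y}\in\vec{U}_{\cdot,G}$ is rewritten as $\bigvee_{\vec{m}\in\vec{M}_{\cdot,G}}\vec{y}\ge\vec{m}$. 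After translation by the constant vectors $\vec{v}_{r,p}$ and $\Delta(a)$, every atomic subformula has the form $\vec{x}(i)\ge z$ with $|z|\le\normi{\vec{v}_{r,p}}+\normi{\vec{m}}+\normi{\Delta(a)}\le O(s)$, so $\phi$ is a $k$-threshold formula with $k\le s$ and total size $O(s)$. The tuple $(r,\gamma,\phi)$ is then placed in $T_A$, and the whole enumeration runs in time $O(s^d)$.

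\textbf{Correctness and the main obstacle.}
For correctness, I would argue that when $\vec{c}|_I=r$, the formula $\forall\vec{v}\in\crochet{\gamma}\;\phi(\vec{c}+\vec{v})$ is equivalent to the joint conditions (b)+(c) from the previous \cref{thm:bot}: the substitution $\vec{u}=\vec{v}_{r,p}+\vec{v}$ makes $\vec{u}$ range exactly over the coset $\vec{L}_{r,G,p}$, so the transition-indexed conjuncts become precisely the required implications (c); the equivalence with $\vec{c}$ being bottom then follows directly from the previous theorem. The main obstacle is the membership clause (b) $\vec{c}\in\vec{U}_{r,G}$: because it is a property of $\vec{c}$ alone, naively inserting it inside $\phi$ and quantifying universally over $\vec{v}\in\vec{L}_G$ would demand $\vec{c}+\vec{v}\in\vec{U}_{r,G}$ for every $\vec{v}$, which is strictly stronger than needed. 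I plan to handle it by further indexing tuples by a basis element $\vec{m}\in\vec{M}_{r,G}$ and, using that every vector in $\vec{L}_G$ has zero $I$-projection so that the outer equality $\vec{c}|_I=r$ determines the base case, replacing the raw membership clause inside $\phi$ with a guarded formulation that enforces $\vec{c}\ge\vec{m}$ precisely at $\vec{v}=\vec{0}$ while remaining vacuously true for every other $\vec{v}\in\vec{L}_G$.
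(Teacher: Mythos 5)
Your overall approach mirrors the paper's: enumerate forward-closed structurally-reversible $I$-unfoldings $G$ with $Q\subseteq\{q\in\setN^I\mid\normi{q}<b\}$ together with a state $r\in Q$, use \cref{lem:repL} for $\gamma_G$, \cref{lem:baseU} for the bases $\vec{M}_{p,G}$, and fixed elementary-path displacements $\vec{v}_p$ as coset representatives; the paper's $\phi_{r,G}$ is then exactly the translated implication you describe, namely
$$\bigwedge_{(p,a,q)\in T}\Bigl(\bigvee_{\vec{m}\in\vec{M}_{p,G}}\vec{x}\geq\max(\vec{m},\vec{a}_-)-\vec{v}_p\Bigr)\Rightarrow\Bigl(\bigvee_{\vec{m}'\in\vec{M}_{q,G}}\vec{x}\geq\vec{m}'-\Delta(a)-\vec{v}_p\Bigr).$$
So far your construction and the paper's agree, including the size and time estimates.

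The issue you flag — that the membership clause $\vec{c}\in\vec{U}_{r,G}$ from the semantic characterization is not visibly reproduced by $\forall\vec{v}\in\crochet{\gamma}\ \phi(\vec{c}+\vec{v})$ — is a genuine concern, and you are right to worry about it: that clause is the base case of the induction in the ``if'' direction of the semantic theorem, and the universally quantified implication is vacuous when all antecedents fail. The paper's own proof of this compilation theorem simply writes ``From the previous theorem we deduce that $T_A$ satisfies the theorem'' without addressing the point, and the advertised formula in the statement has no conjunct of the form $\vec{c}\geq\vec{m}$, so you have not missed anything that the paper spells out.

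Your proposed repair, however, does not work as described. You want a threshold subformula $\psi$ with $\psi(\vec{c}+\vec{v})$ forcing $\vec{c}\geq\vec{m}$ ``precisely at $\vec{v}=\vec{0}$'' and vacuously true for all other $\vec{v}\in\vec{L}_G$. But $\psi$ is a fixed boolean combination of atoms $\vec{x}(i)\geq z$, evaluated only on $\vec{x}=\vec{c}+\vec{v}$; since every $\vec{v}\in\vec{L}_G$ has $\vec{v}|_I=\vec{0}$, the $I$-components give no information about $\vec{v}$, and on the non-$I$ components $\vec{v}$ ranges over an infinite coset whenever $\vec{L}_G\neq\{\vec{0}\}$. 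A threshold formula cannot cut out the single point $\vec{v}=\vec{0}$ of that coset in a way that depends on the (unknown, unbounded) input $\vec{c}$. The natural fix is the one you almost reach for but explicitly reject: index tuples additionally by $\vec{m}\in\vec{M}_{r,G}$ and place $\vec{c}\geq\vec{m}$ as a separate top-level conjunct, outside the universal quantifier, i.e.\ use tuples $(r,\vec{m},\gamma,\phi)$ and the formula $\vec{c}|_I=r\wedge\vec{c}\geq\vec{m}\wedge\forall\vec{v}\in\crochet{\gamma}\ \phi(\vec{c}+\vec{v})$. This preserves the claimed complexity bounds (it only multiplies $|T_A|$ by $|\vec{M}_{r,G}|=O(s^d)$) but does change the advertised shape of the formula, so the theorem statement itself would need the extra conjunct.
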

\begin{proof}
  The set $T_A$ is obtained by enumerating the forward-closed structurally-reversible $I$-unfoldings $G=(Q,A,T)$ such that $\normi{q}<b$ for every $q\in Q$. We introduce $\vec{M}_{q,G}\eqdef\min(\vec{U}_{q,G})$ for every $q\in Q$. For such a $G$ and for each state $r\in Q$, we introduce a sequence $(v_p)_{p\in Q}$ of vectors such that $\vec{v}_p$ is the label of an elementary path from $r$ to $p$. We denote by $\phi_{r,G}$ the following threshold formula:

  $$
  \phi_{r,G}(\vec{x})
  \eqdef
  \bigwedge_{(p,a,q)\in T}((\bigvee_{\vec{m}\in\vec{M}_{p,G}}\vec{x}\geq \max(\vec{m},\vec{a}_-)-\vec{v}_p)\Rightarrow (\bigvee_{\vec{m}\in\vec{M}_{q,G}}\vec{x}\geq \vec{m}-\Delta(a)-\vec{v}_p)) 
  $$
  From \cref{thm:bot} we deduce that $T_A$ satisfies the theorem.
\end{proof}

\clearpage
\part{Proof of \cref{lem:pathrev}}\label{part:pathrev}\label{part:lem1}

In this part, we prove \cref{lem:pathrev}. All other results proved in this section are not used in the sequel. The proof follows an extended form of the \emph{zigzag-freeness}
approach introduced in~\cite{LerouxS04}. Intuitively, we prove that the cycle $\theta$ can be obtained by concatenating a
sequence $\theta_1,\ldots,\theta_k$ of short cycles on $q$ such that
for every $n\in\{0,\ldots,k\}$ the displacement of $\Delta(\theta_1\ldots \theta_n)$ is almost the
vector $\frac{n-d}{k}(\vec{y}-\vec{x}-\Delta(\pi))$.

\medskip

\section{Reordering finite sums of integer vectors}
In this section, we show that if a vector $\vec{z}\in\setZ^d$ is the sum of a sequence $\vec{z}_1,\ldots,\vec{z}_k\in\setZ^d$, then we can extract a sub-sequence satisfying the same property and such that additionally $k$ is small (with respect to some parameters). Moreover, we also prove that we can reorder such a sequence in such a way $\sum_{j=1}^n\vec{z}_j\geq \min\{\vec{z}(i),0\}$ for every $i\in\{1,\ldots,d\}$ and for every $n\in\{0,\ldots,k\}$.

\medskip

Those two results are obtained thanks to following central result.
\begin{lemma}[\cite{Grinberg1980}]\label{lem:steinitz}
  Let $\vec{v}_1,\ldots,\vec{v}_k$ be a non-empty sequence of vectors in
  $\mathbb{R}^d$ such that $\normi{\vec{v}_j}\leq 1$ for every $1\leq
  j\leq k$ and let
  $\vec{v}=\sum_{j=1}^k\vec{v}_j$. There exists a permutation
  $\sigma$ of $\{1,\ldots,k\}$ such that for every $n\in\{d,\ldots, k\}$, we
  have:
  $$\normi{\sum_{j=1}^n\vec{v}_{\sigma(j)}- \frac{n-d}{k}\vec{v}}\leq d$$
\end{lemma}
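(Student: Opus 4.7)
The plan is to reduce the statement to the classical zero-sum Steinitz rearrangement theorem via a translation trick. I would first replace each $\vec{v}_j$ by $\vec{u}_j \eqdef \vec{v}_j - \frac{1}{k}\vec{v}$, so that the new sequence satisfies $\sum_{j=1}^k \vec{u}_j = \vec{0}$, and since $\normi{\vec{v}} \le k$ by the triangle inequality, each $\normi{\vec{u}_j}$ is bounded by a small absolute constant. Applying a Steinitz-type rearrangement to the $\vec{u}_j$'s then yields a permutation $\sigma$ for which every partial sum $\vec{s}_n \eqdef \sum_{j=1}^n \vec{u}_{\sigma(j)}$ lies in a cube of side $O(d)$, and the identity $\sum_{j=1}^n \vec{v}_{\sigma(j)} = \vec{s}_n + \frac{n}{k}\vec{v}$ together with $\normi{\frac{d}{k}\vec{v}} \le d$ then translates this into a bound of the desired shape, with $\frac{n-d}{k}\vec{v}$ as the reference trajectory.

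The key ingredient is the underlying Steinitz lemma for zero-sum sequences, whose proof I would present as a greedy Carathéodory argument. Having selected $\sigma(1),\ldots,\sigma(n-1)$ with partial sum $\vec{s}_{n-1}$, one observes that the remaining vectors $\vec{u}_j$ still sum to $-\vec{s}_{n-1}$, so $-\vec{s}_{n-1}$ is a nonnegative combination of at most $d$ of them by Carathéodory's theorem. An exchange argument then guarantees the existence of a next index $\sigma(n)$ for which appending $\vec{u}_{\sigma(n)}$ preserves the invariant $\normi{\vec{s}_n} \le d$. The initialisation is the subtle step: one must choose the first $d$ vectors so that the invariant is already satisfied at $n = d$, and this is exactly where the offset ``$n-d$'' in the statement originates, since at $n=d$ the target point $\frac{n-d}{k}\vec{v}$ is the origin, giving the induction base room to breathe.

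The main obstacle I anticipate is recovering the sharp constant $d$ rather than a looser constant like $2d$ that the naive translation trick would deliver (since $\normi{\vec{u}_j} \le 2$ after translation). To obtain $d$ exactly, I would run the Carathéodory/exchange argument directly on the original $\vec{v}_j$'s against the affine target line $n \mapsto \frac{n-d}{k}\vec{v}$, rather than in the translated zero-sum picture, and carefully account for the discrepancy introduced when selecting the initial $d$ vectors of the permutation. The ``$-d$'' offset in the statement is engineered precisely to absorb this initial discrepancy of size $d$, so once the bookkeeping is done correctly the constants balance out. Everything else is a routine inductive invariant once this base case is in place.
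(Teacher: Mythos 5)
The paper itself does not prove this lemma: it cites Grinberg (Grinberg--Sevastyanov, 1980) and uses it as a black box, so there is no internal argument to compare against. Evaluated on its own, your sketch has a genuine gap exactly where you flag the difficulty. The sentence ``an exchange argument then guarantees the existence of a next index $\sigma(n)$ for which appending $\vec{u}_{\sigma(n)}$ preserves the invariant $\normi{\vec{s}_n}\le d$'' \emph{is} the lemma, and you give no argument for it. Carathéodory tells you that $-\vec{s}_{n-1}$ lies in the cone generated by at most $d$ of the remaining vectors, but that does not produce a \emph{single} remaining vector whose addition keeps the partial sum inside the cube of radius $d$; a forward greedy that maintains only this norm invariant can get stuck. (A minor slip on top of this: your naive translation gives $3d$, not $2d$ --- after translating back you additionally pay $\normi{\tfrac{d}{k}\vec{v}}\le d$ on top of the $2d$ from $\normi{\vec{u}_j}\le 2$.)

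The Grinberg--Sevastyanov argument is a \emph{backward} LP-vertex construction, and the ``$-d$'' offset plays a different role from the one you conjecture. One builds a chain $\{1,\ldots,k\}=A_k\supset A_{k-1}\supset\cdots\supset A_d$ with $|A_n|=n$, together with weights $\lambda^{(n)}_j\in[0,1]$ supported on $A_n$ satisfying $\sum_{j\in A_n}\lambda^{(n)}_j=n-d$ and $\sum_{j\in A_n}\lambda^{(n)}_j\vec{v}_j=\tfrac{n-d}{k}\vec{v}$, initialized at $n=k$ by $\lambda^{(k)}_j=1-d/k$. The polytope of candidate weights on $A_n$ with budget $n-1-d$ is cut out by $d+1$ linear equalities and box constraints $0\le\mu_j\le 1$, so a vertex has at most $d+1$ fractional coordinates; if no coordinate were $0$, the budget $n-1-d$ would force exactly $n-1-d$ ones and a vanishing sum over the remaining $d+1$ strictly positive fractional coordinates, a contradiction. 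Dropping the zero coordinate yields $A_{n-1}$. The bound then follows from the identity $\sum_{j\in A_n}\vec{v}_j-\tfrac{n-d}{k}\vec{v}=\sum_{j\in A_n}(1-\lambda^{(n)}_j)\vec{v}_j$, whose coefficients lie in $[0,1]$ and sum to exactly $d$, giving $\ell_\infty$-norm at most $d$, and the permutation is read off the chain $(A_n)$. So the ``$-d$'' is a per-level slack in the coefficient budget that makes the vertex argument close at every step, not a one-time allowance absorbing the first $d$ choices; your sketch does not contain this mechanism and, as stated, does not establish the constant $d$.
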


In fact, from the previous lemma we deduce the following two corollaries.
\begin{corollary}\label{cor:stei1}
  Assume that $\vec{z}=\vec{z}_1+\ldots+\vec{z}_k$ for some vectors $\vec{z}_1,\ldots,\vec{z}_k\in\setZ^d$. Then there exists $J\subseteq \{1,\ldots,k\}$ such that $\vec{z}=\sum_{j\in J}\vec{z}_j$ with $|J|\leq 2\norm{\vec{z}}(3dm)^d$ and $m\eqdef\max_{1\leq j\leq k}\normi{\vec{z}_j}$.
\end{corollary}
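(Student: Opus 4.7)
The plan is to deduce the corollary from Steinitz's lemma (\cref{lem:steinitz}) by combining a reordering of the $\vec{z}_j$'s with a pigeonhole on integer points in a tube around the segment from $\vec{0}$ to $\vec{z}$. First I would dispose of the trivial case $\vec{z} = \vec{0}$ by taking $J = \emptyset$, so that I may assume $\vec{z}\neq\vec{0}$, in particular $\|\vec{z}\|_\infty\geq 1$; I may also assume $m\geq 1$. Applying \cref{lem:steinitz} to the normalized vectors $\vec{z}_j/m$ (each of infinity norm at most $1$, summing to $\vec{z}/m$) yields a permutation $\sigma$ of $\{1,\ldots,k\}$ such that the partial sums $S_n \eqdef \sum_{j=1}^n\vec{z}_{\sigma(j)}$ satisfy $\|S_n - \tfrac{n-d}{k}\vec{z}\|_\infty \leq dm$ for $n\in\{d,\ldots,k\}$, while $\|S_n\|_\infty \leq dm$ trivially for $n<d$. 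Consequently every $S_0,\ldots,S_k$ is an integer point of the tube $\mathcal{T}\eqdef\bigcup_{t\in[0,1]}(t\vec{z}+[-dm,dm]^d)$.

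The quantitative heart of the proof is the estimate $|\mathcal{T}\cap\setZ^d|\leq 2\|\vec{z}\|(3dm)^d$. I would fix a coordinate $i^*$ achieving $|\vec{z}(i^*)|=\|\vec{z}\|_\infty$; the $i^*$-th coordinate of any point of $\mathcal{T}$ takes at most $\|\vec{z}\|_\infty + 2dm + 1$ integer values, which, using $\|\vec{z}\|_\infty\geq 1$ and $dm\geq 1$, is bounded by $6dm\|\vec{z}\|_\infty\leq 2(3dm)\|\vec{z}\|$. For each fixed value of the $i^*$-th coordinate the parameter $t$ is pinned to an interval of length $2dm/\|\vec{z}\|_\infty$, which couples all the other coordinates: each remaining coordinate $S_n(j)$ is forced into an interval of length at most $2dm + 2dm|\vec{z}(j)|/\|\vec{z}\|_\infty \leq 4dm$, yielding a cross-section of at most $(3dm)^{d-1}$ integer points after the routine tightening. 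Multiplying the two bounds produces the claim.

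With the tube count in hand, the pigeonhole finish is automatic. If $k+1>2\|\vec{z}\|(3dm)^d$, two of the partial sums must coincide, $S_{n_1}=S_{n_2}$ with $n_1<n_2$; then $\sum_{j=n_1+1}^{n_2}\vec{z}_{\sigma(j)}=\vec{0}$, and deleting the indices $\sigma(n_1+1),\ldots,\sigma(n_2)$ from the index set produces a strictly shorter sequence with the same sum $\vec{z}$. Iterating until no two partial sums collide yields an index set $J\subseteq\{1,\ldots,k\}$ with $|J|\leq 2\|\vec{z}\|(3dm)^d$ and $\sum_{j\in J}\vec{z}_j=\vec{z}$.

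The main obstacle I anticipate is the tube-counting step. A naive coordinate-wise bounding box gives only $\prod_j(|\vec{z}(j)|+2dm+1)$, which is much too large as soon as several coordinates of $\vec{z}$ are simultaneously large, so it is essential to genuinely exploit the fact that every point of $\mathcal{T}$ is determined by a single scalar $t\in[0,1]$ up to a perturbation of $\ell_\infty$-radius $dm$, thereby obtaining a linear rather than polynomial dependence on $\|\vec{z}\|$ in the final bound.
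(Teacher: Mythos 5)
Your overall strategy -- apply Steinitz directly to the $\vec{z}_j$, observe that the partial sums $S_0,\dots,S_k$ lie in the tube $\mathcal{T}=\bigcup_{t\in[0,1]}(t\vec{z}+[-dm,dm]^d)$, bound $|\mathcal{T}\cap\setZ^d|$, and then iterate a pigeonhole deletion -- is a genuine alternative to the paper's route, and the deletion/iteration part is sound (the surviving partial sums remain a subsequence of the original ones, so they stay in $\mathcal{T}$ without re-applying Steinitz). The gap is in the tube count, and it is not a ``routine tightening.'' Fixing a coordinate $i^*$ with $|\vec{z}(i^*)|=\normi{\vec{z}}$ pins $t$ to an interval of length $2dm/\normi{\vec{z}}$, and each remaining coordinate then lives in an interval of length up to $4dm$; the bound this actually gives for the cross-section is $(4dm+1)^{d-1}$, and $(4dm+1)^{d-1}\leq(3dm)^{d-1}$ is simply false. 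Plugging $(4dm+1)^{d-1}$ in leads to a total of order $\normi{\vec{z}}\cdot 4^{d-1}(dm)^{d-1}$, which overshoots $2\norm{\vec{z}}(3dm)^d$ once $d$ is moderately large (the deficit is the factor $(4/3)^d$ against $6dm$). The cross-section of the tube is a skewed slab, not a product of intervals, and exploiting that is the real work you have skipped. A way to actually get the tube count under $2\norm{\vec{z}}(3dm)^d$ is to slice along the $t$-axis rather than along a coordinate: partitioning $[0,1]$ into $\lceil\normi{\vec{z}}/(dm-1)\rceil$ subintervals of length $\epsilon=(dm-1)/\normi{\vec{z}}$, each slab $\bigcup_{t\in I}(t\vec{z}+[-dm,dm]^d)$ is contained in a box with at most $\prod_j(\epsilon\vec{z}(j)+2dm+1)\leq(3dm)^d$ lattice points, and summing gives the desired bound (handle $dm=1$, i.e.\ $d=m=1$, separately).

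The paper avoids the tube altogether and the comparison is instructive. It first takes $k$ minimal, assumes $\vec{z}\geq\vec 0$, and decomposes each $\vec{z}_j=\vec{e}_j+\vec{v}_j$ with $\vec{e}_j\geq\vec 0$, $\sum_j\vec{e}_j=\vec{z}$ and $\sum_j\vec{v}_j=\vec 0$. Steinitz is then applied to the \emph{balanced} parts $\vec{v}_j$, whose partial sums all stay in the cube $\{\normi{\vec x}\leq dm\}$ with at most $(3dm)^d$ lattice points -- no tube, no skewed cross-section. The ``progress'' toward $\vec{z}$ is tracked separately: by minimality of $k$, any block of $(3dm)^d$ consecutive indices in which all $\vec{e}_j$ vanish would force a coincidence of $\vec{v}$-partial sums and hence a removable zero-sum sub-block. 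Since $\sum_j\norm{\vec{e}_j}=\norm{\vec{z}}$, this pins $k\leq(\norm{\vec{z}}+1)(3dm)^d\leq 2\norm{\vec{z}}(3dm)^d$. In short, your route can be pushed through but needs a correct slicing argument in place of the cross-section hand-wave; the paper's decomposition buys a cleaner pigeonhole domain and makes the counting transparent.
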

\begin{proof}
  Assume that $\vec{z}=\vec{z}_1+\ldots+\vec{z}_k$ for some vectors $\vec{z}_1,\ldots,\vec{z}_k\in\setZ^d$ and assume that there does not exists a set $J$ strictly smaller than $\{1,\ldots,k\}$ such that $\vec{z}=\sum_{j\in J}\vec{z}_j$. This last property is equivalent to $\sum_{j\in J}\vec{z}_j\not=\vec{0}$ for every non-empty subset $J\subseteq\{1,\ldots,k\}$. We introduce $m\eqdef\max_{1\leq j\leq k}\normi{\vec{z}_j}$. Let us prove that $k\leq 2\norm{\vec{z}}(3dm)^d$. Without loss of generality, we can assume that $\vec{z}\geq 0$ since we can swap the sign of $\vec{z}(i),\vec{z}_1(i),\ldots,\vec{z}_k(i)$ for any $i$ to reduce our problem to this special case. If $k=0$ the lemma is proved, so let us assume that $k\geq 1$. In particular $\norm{\vec{z}}\geq 1$ and $m\geq 1$.
  
  Let us first prove that there exists a sequence $\vec{e}_1,\ldots\vec{e}_k$ of configurations such that $\vec{e}_j\leq \max\{\vec{0},\vec{z}_j\}$ for every $j\in\{1,\ldots,k\}$, and such that $\vec{z}=\sum_{j=1}^k\vec{e}_j$. To do so, we introduce the non-decreasing sequence $\vec{c}_0,\ldots,\vec{c}_k$ of configurations defined as $\vec{c}_0\eqdef\vec{0}$ and by induction for every $j\in\{1,\ldots,k\}$ by $\vec{c}_j\eqdef\max\{\vec{c}_{j-1},\vec{c}_{j-1}+\vec{z}_j\}$. By induction, we observe that $\vec{c}_j\geq \sum_{\ell=1}^j\vec{z}_j$ for every $j\in\{0,\ldots,k\}$. In particular $\vec{c}_k\geq \vec{z}$. We also introduce the sequence $\vec{e}_1,\ldots,\vec{e}_k$ of configurations defined by $\vec{e}_j\eqdef\min\{\vec{z},\vec{c}_j\}-\min\{\vec{z},\vec{c}_{j-1}\}$. Notice $\sum_{j=1}^k\vec{e}_j=\min\{\vec{z},\vec{c}_k\}-\min\{\vec{z},\vec{c}_0\}$. Since $\vec{c}_k\geq \vec{z}$ and $\vec{c}_0=\vec{0}$, we derive $\sum_{j=1}^k\vec{e}_j=\vec{z}$. Now, let us prove that $\vec{e}_j\leq \max\{\vec{0},\vec{z}_j\}$. So, let $i\in\{1,\ldots,d\}$. Assume first that $\vec{z}_j(i)\leq 0$. In that case from $\vec{c}_j\eqdef\max\{\vec{c}_{j-1},\vec{c}_{j-1}+\vec{z}_j\}$, we deduce that $\vec{c}_j(i)=\vec{c}_{j-1}(i)$. It follows that $\vec{e}_j(i)=0$ and we are done. Now assume that $\vec{z}_j(i)>0$. In that case from $\vec{c}_j=\max\{\vec{c}_{j-1},\vec{c}_{j-1}+\vec{z}_j\}$ we deduce that $\vec{c}_j(i)=\vec{c}_{j-1}(i)+\vec{z}_j(i)$. From $\vec{e}_j\eqdef\max\{\vec{z},\vec{c}_j\}-\max\{\vec{z},\vec{c}_{j-1}\}$ we deduce that $\vec{e}_j(i)=\max\{\vec{z}(i),\vec{c}_{j-1}(i)+\vec{z}_j(i)\}-\max\{\vec{z}(i),\vec{c}_{j-1}(i)\}\leq \vec{z}_j(i)$ and we are done.
  
  Next, let us introduce the sequence
  $\vec{v}_1,\ldots,\vec{v}_k$ defined by
  $\vec{v}_j\eqdef \vec{z}_j-\vec{e}_j$. Notice that
  $\normi{\vec{v}_j}\leq m$ and
  $\sum_{j=1}^k\vec{v}_j=\vec{0}$. We introduce
  $\vec{x}_n\eqdef\sum_{j=1}^n\vec{v}_j$. 
  By applying a
  permutation, \cref{lem:steinitz} applied on the sequence
  $(\frac{1}{m}\vec{v}_j)_{1\leq j\leq n}$ shows that we can assume without
  loss of generality that $\vec{x}_n\in\vec{X}$ for every $d\leq n\leq
  k$ where $\vec{X}$ is the
  set of vectors $\vec{x}\in\setZ^d$ such that
  $\normi{\vec{x}}\leq md$. Notice that if $n\in\{0,\ldots,d\}$, we
  also have $\vec{x}_n\in \vec{X}$ since $\vec{x}_n$ is a sum of at
  most $d$ vectors with a norm bounded by $m$.
  
  The cardinal of
  $\vec{X}$ is bounded by $(1+2dm)^d\leq (3dm)^d$. Now, assume by
  contradiction that there
  exists $\ell\in \{0,\ldots,k-(3dm)^d\}$ satisfying
  $\vec{e}_j=\vec{0}$ for every $j\in
  \{\ell+1,\ldots,\ell+(3dm)^d\}$. Notice that there exists $p<q$ in
  $\{\ell,\ldots,\ell+(3dm)^d\}$ such that $\vec{x}_p=\vec{x}_q$
  since the cardinal of $\vec{X}$ is bounded by $(3dm)^d$. It
  follows that $\sum_{j=p+1}^q\vec{v}_j=\vec{0}$. From $\vec{e}_j=\vec{0}$ for every $j\in
  \{\ell+1,\ldots,\ell+(3dm)^d\}$ it follows that
  $\vec{v}_j=\vec{z}_j$ for every $j\in\{p+1,\ldots,q\}$. In
  particular $\sum_{j=p+1}^q\vec{z}_j=\vec{0}$.
  Hence $k$ is
  not minimal since we can remove the vectors
  $\vec{z}_{p+1},\ldots,\vec{z}_q$ from the sequence
  $\vec{z}_1,\ldots,\vec{z}_k$, and we get a contradiction.
  It follows that for every
  $\ell\in \{0,\ldots,k-(3dm)^d\}$ there exists $j\in
  \{\ell+1,\ldots,\ell+(3dm)^d\}$ such that
  $\vec{e}_j\not=\vec{0}$. From
  $\norm{\vec{z}}=\sum_{j=1}^k\norm{\vec{e}_j}$, it
  follows that $\norm{\vec{z}}\geq \frac{k}{(3dm)^d}-1$. Hence $k\leq
  (\norm{\vec{z}}+1)(3dm)^d$. Since $1+\norm{\vec{z}}\leq 2\norm{\vec{z}}$, we deduce that $k\leq 2\norm{\vec{z}}(3dm)^d$.
\end{proof}

\begin{corollary}\label{cor:stei2}
  Assume that $\vec{z}=\vec{z}_1+\cdots+\vec{z}_k$ where
  $\vec{z}_1,\ldots,\vec{z}_k\in\setZ^d$. There exists a permutation
  $\sigma$ of $\{1,\ldots,k\}$ such that for every $n\in\{0,\ldots,
  k\}$ and for every $i\in\{1,\ldots,d\}$, we
  have:
  $$\sum_{j=1}^n\vec{z}_{\sigma(j)}(i)\geq \min\{\vec{z}(i),0\}-md$$
  where $m\eqdef\max_j\normi{\vec{z}_j}$.
\end{corollary}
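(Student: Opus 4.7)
The plan is to apply \cref{lem:steinitz} to the rescaled sequence $(\frac{1}{m}\vec{z}_j)_{1\leq j\leq k}$, whose entries have infinity norm at most $1$ and which sums to $\frac{1}{m}\vec{z}$. (If $k=0$ the claim is vacuous, and if $m=0$ every partial sum is zero.) After multiplying back by $m$, this yields a permutation $\sigma$ of $\{1,\ldots,k\}$ such that for every $n\in\{d,\ldots,k\}$,
$$\normi{\sum_{j=1}^n \vec{z}_{\sigma(j)} - \frac{n-d}{k}\vec{z}} \leq md.$$

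Next I would split the analysis into two ranges of $n$. For $n\in\{d,\ldots,k\}$, the scalar $\frac{n-d}{k}$ lies in $[0,1]$, so $\frac{n-d}{k}\vec{z}(i)$ always lies in the closed interval between $0$ and $\vec{z}(i)$; in particular $\frac{n-d}{k}\vec{z}(i)\geq \min\{\vec{z}(i),0\}$ for every $i$. Combined with the componentwise consequence of the Steinitz bound above, this gives $\sum_{j=1}^n \vec{z}_{\sigma(j)}(i) \geq \frac{n-d}{k}\vec{z}(i)-md \geq \min\{\vec{z}(i),0\}-md$, as required.

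For $n\in\{0,\ldots,d-1\}$, the partial sum is a sum of at most $d-1$ vectors of infinity norm bounded by $m$, so $\sum_{j=1}^n \vec{z}_{\sigma(j)}(i) \geq -(d-1)m \geq -md$. Since $\min\{\vec{z}(i),0\}\leq 0$, we have $-md \geq \min\{\vec{z}(i),0\}-md$, so the same lower bound holds in this small-$n$ regime. The two ranges together cover $\{0,\ldots,k\}$, concluding the proof.

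The only mildly delicate point is that \cref{lem:steinitz} is guaranteed only for $n\geq d$, so the short-prefix case $n<d$ must be handled by the trivial estimate above; everything else is a direct unpacking of the Steinitz bound combined with the sign discussion of $\frac{n-d}{k}\vec{z}(i)$.
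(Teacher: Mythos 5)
Your proof is correct and takes essentially the same route as the paper's: apply Lemma~\ref{lem:steinitz} to the rescaled sequence $(\frac{1}{m}\vec{z}_j)_j$, observe that $\frac{n-d}{k}\vec{z}(i)\geq\min\{\vec{z}(i),0\}$ for $n\geq d$ (which is just a compressed version of the paper's case split on $\operatorname{sign}(\vec{z}(i))$), and handle short prefixes $n<d$ by the trivial $-nm\geq -md$ estimate. No gaps.
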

\begin{proof}
  If $k=0$ the lemma is
  proved. So, we can assume that $k\geq 1$, and in particular $m\geq 1$.
  By applying a
  permutation, \cref{lem:steinitz} on the sequence $(\frac{1}{m}\vec{z}_j)_{1\leq
    j\leq k}$ shows that we can assume without
  loss of generality that for every $n\in\{0,\ldots,k\}$, there exists
  a vector $\vec{e}_n\in\mathbb{R}^d$ such that
  $\normi{\vec{e}_n}\leq md$ and such that
  $\vec{x}_n=\frac{n-d}{k}\vec{z}+\vec{e}_n$ where $\vec{x}_n\eqdef
  \sum_{j=1}^n\vec{z}_j$. Let $i\in\{1,\ldots,d\}$ and let us prove
  that $\vec{x}_n(i)\geq \min\{\vec{z}(i),0\}-md$. Observe that if
  $n\in\{0,\ldots,d\}$ then the property is immediate since
  $\vec{x}_n(i)\geq -md$. So, let us assume that $n>d$. If
  $\vec{z}(i)\geq 0$ then $\frac{n-d}{k}\vec{z}(i)\geq 0$ and we get
  $\vec{x}_n(i)\geq \vec{e}_n(i)\geq -md$. If $\vec{z}(i)\leq 0$ then
  $\frac{n-d}{k}\vec{z}(i)\geq \vec{z}(i)$. In particular
  $\vec{x}_n(i)\geq \min\{\vec{z}(i),0\}-md$ also in that case.
\end{proof}

\section{From simple cycles to small full-state cycles}
A cycle of an unfolding $G$ is said to be \emph{full-state} if every state of $G$ occurs in the cycle. In this section we prove that if $G$ is structurally-reversible, then the displacement of any simple cycle is the displacement of a ``small'' full-state cycle. In this section $G$ is a structurally-reversible unfolding.

\medskip

We first observe that the negation of the displacement of any cycle is the
displacement of another cycle as shown by the following lemma.
\begin{lemma}\label{lem:reverse}
  For every cycle $\theta$, there exists a cycle $\theta'$ such that
  $\Delta(\theta')=-\Delta(\theta)$.
\end{lemma}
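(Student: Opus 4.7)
The plan is to reverse each transition individually using the structural reversibility hypothesis and then concatenate the reverse paths in the opposite order so that the endpoints line up into a cycle.

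More concretely, write $\theta = t_1 t_2 \cdots t_k$ where each $t_j = (p_{j-1}, a_j, p_j)$ is a transition of $G$, with $p_0 = p_k$ since $\theta$ is a cycle on $p_0$. By the definition of structural reversibility, for each transition $t_j$ I can select a path $\pi_j$ from $p_j$ to $p_{j-1}$ such that $\Delta(t_j \pi_j) = \vec{0}$, equivalently $\Delta(\pi_j) = -\Delta(t_j)$.

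I then set $\theta' \eqdef \pi_k \pi_{k-1} \cdots \pi_1$. The endpoints match up: $\pi_k$ starts at $p_k$ and ends at $p_{k-1}$, $\pi_{k-1}$ starts at $p_{k-1}$ and ends at $p_{k-2}$, and so on down to $\pi_1$ which ends at $p_0$. Since $p_k = p_0$, the concatenation $\theta'$ is a cycle on $p_k$. Its displacement satisfies
$$\Delta(\theta') = \sum_{j=1}^{k} \Delta(\pi_j) = -\sum_{j=1}^{k} \Delta(t_j) = -\Delta(\theta),$$
which is the desired equality.

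There is essentially no obstacle: the only subtlety is to remember to reverse the order of the $\pi_j$'s (taking $\pi_k$ first, then $\pi_{k-1}$, etc.) so that the targets and sources chain together correctly, and to use that $\theta$ is a cycle so that $p_k = p_0$ closes $\theta'$ up into a cycle as well. The argument uses structural reversibility only locally, once per transition of $\theta$, and the $\Delta$ computation is then a telescoping sum.
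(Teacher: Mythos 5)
Your proof is correct and is essentially identical to the paper's: both decompose $\theta = t_1\cdots t_k$, invoke structural reversibility to get paths $\pi_j$ with $\Delta(t_j\pi_j)=\vec{0}$, and set $\theta' = \pi_k\cdots\pi_1$. You merely spell out the endpoint-matching and the telescoping sum, which the paper leaves implicit.
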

\begin{proof}
  Assume that $\theta=t_1\ldots t_k$ for some transitions $t_1,\ldots,t_k$.
  Since $G$
  is structurally-reversible, for every $j\in\{1,\ldots,k\}$, there
  exists a path $\pi_j$ such that $t_j\pi_j$ is a cycle with a zero displacement.
  Now, observe that
  $\theta'\eqdef \pi_k\ldots\pi_1$ is a cycle such that
  $\Delta(\theta')=-\Delta(\theta)$.
\end{proof}

Let us show the following lemma based on small solutions for linear integer programming~\cite{P-RTA91}.
\begin{lemma}\label{lem:pottier0}
  Every transition occurs in a finite sequence
  $\theta_1,\ldots,\theta_n$ of simple cycles such that
  $\Delta(\theta_1)+\cdots+\Delta(\theta_n)=\vec{0}$ and such that $n\leq (3drm)^{d}$
\end{lemma}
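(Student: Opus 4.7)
The plan is to fix a simple cycle through the designated transition and then ``cancel'' its displacement by a short sum of simple cycles extracted via \cref{cor:stei1}. Let $t=(p,a,q)$ be the given transition. Since $G$ is strongly connected, there is a simple path from $q$ to $p$, which together with $t$ gives a simple cycle $\theta_1$ containing $t$. Its length is at most $r=|Q|$, so $\normi{\Delta(\theta_1)}\leq rm$ and $\norm{\Delta(\theta_1)}\leq drm$.

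Next, I would produce a (long) sequence of simple cycles whose displacements sum to $-\Delta(\theta_1)$. By \cref{lem:reverse} (which is where structural reversibility of $G$ enters), there exists a cycle $\theta'$ with $\Delta(\theta')=-\Delta(\theta_1)$. Applying the standard greedy decomposition of a cycle (repeatedly peel off a sub-cycle through the first repeated state, then continue on the residual cycle), $\theta'$ rewrites as a concatenation of simple cycles $\phi_1,\ldots,\phi_N$ of $G$, giving $\sum_{j=1}^N \Delta(\phi_j)=-\Delta(\theta_1)$ with $\normi{\Delta(\phi_j)}\leq rm$ for each~$j$. Now apply \cref{cor:stei1} with target $\vec{z}\eqdef-\Delta(\theta_1)$ and summands $\vec{z}_j\eqdef\Delta(\phi_j)$; the parameter ``$m$'' of the corollary is bounded by~$rm$. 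This yields a subset $J\subseteq\{1,\ldots,N\}$ with $\sum_{j\in J}\Delta(\phi_j)=-\Delta(\theta_1)$ and $|J|$ bounded in terms of $d$, $r$, $m$ and $\norm{\Delta(\theta_1)}$. The sequence $\theta_1$ followed by the $\phi_j$ for $j\in J$ is the required sequence: every transition (in particular $t$) occurs in it, its displacements sum to $\vec{0}$, and its length is $1+|J|$.

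The main obstacle is matching the stated bound $(3drm)^d$ exactly. A direct application of \cref{cor:stei1} gives $|J|\leq 2\norm{\Delta(\theta_1)}(3drm)^d$, and $\norm{\Delta(\theta_1)}\leq drm$ only yields the weaker $n\leq 1+2drm(3drm)^d$. To recover the tighter bound, one refines the counting step of \cref{cor:stei1}: rather than applying it to the full displacement $\Delta(\theta_1)$, one should package the construction so that the target vector is forced to have $\norm{\cdot}=O(1)$, for instance by re-expressing the problem as a non-negative integer linear program $\sum_\theta x_\theta\Delta(\theta)=-\Delta(\theta_1)$ over simple cycles~$\theta$ (of which there are finitely many), whose coefficient matrix has only $d$ rows with entries bounded by $rm$, and invoking the small-solution bound of Pottier~\cite{P-RTA91}. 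This gives a solution with $\sum_\theta x_\theta\leq (3drm)^d-1$, so that the final sequence, obtained by listing each simple cycle $\theta$ with multiplicity $x_\theta$ and prepending $\theta_1$, has length at most $(3drm)^d$ as required.
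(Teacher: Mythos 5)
Your outline up to the point where you say "matching the stated bound exactly" is fine, and you correctly identify \cref{lem:reverse} as the place where structural reversibility enters. The genuine gap is in your final repair step: you propose to apply Pottier's small-solution bound to the system $\sum_\theta x_\theta\Delta(\theta)=-\Delta(\theta_1)$ ranging over \emph{all} simple cycles $\theta$, and you assert that this already gives $\sum_\theta x_\theta\leq (3drm)^d-1$ because the matrix ``has only $d$ rows with entries bounded by $rm$''. That is not enough. Pottier's bound on the $1$-norm of minimal solutions of $Ax=\vec{0}$, $x\geq\vec{0}$ has the form $(1+\norm{A}_{1,\infty})^{\operatorname{rank}(A)}$, where $\norm{A}_{1,\infty}$ is the maximal row $1$-norm, i.e.\ it sums over the columns; alternatively one can bound $\normi{x}$ independently of the number of columns, but then passing to $\norm{x}_1$ reintroduces a factor equal to the number of columns. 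In your setup the number of columns is the number of simple cycles of $G$, which can be exponential in $r$, so neither route produces $(3drm)^d$.

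The paper closes this gap with one additional idea you omit: before invoking Pottier, it applies Carath\'eodory's theorem to the cone generated by the displacements of simple cycles, so that $-\Delta(\theta_1)$ is a non-negative rational combination of only $d$ such displacements. Clearing denominators and bundling $\theta_1$ itself as an extra unknown with coefficient $h_0>0$ turns this into a \emph{homogeneous} system $\sum_{j=0}^{d}h_j\vec{v}_j=\vec{0}$ with only $d+1$ unknowns. Now each row $1$-norm of the coefficient matrix is at most $(d+1)rm$, and Pottier yields a minimal solution with $h_0>0$ and $\sum_{j=0}^d h_j\leq(1+(d+1)rm)^d\leq(3drm)^d$, which is exactly the stated bound. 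So the missing step is the Carath\'eodory reduction to $d$ columns (plus the homogenisation trick to keep $\theta_1$ in the sum); without it the Pottier bound you write down is not justified.
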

\begin{proof}
  Let $t$ be a transition. Since $G$ is strongly connected, the
  transition $t$ occurs in a simple cycle
  $\theta_0$. \cref{lem:reverse} shows that $-\Delta(\theta_0)$ is a
  finite sum of displacements of simple cycles. In particular
  $-\Delta(\theta_0)$ is in the cone generated by the displacements of
  simple cycles, i.e. the finite sums of displacements of
  simple cycles multiplied by non-negative rational numbers.
  From Carathéodory theorem, there exists $d$ simple
  cycles $\theta_1,\ldots,\theta_d$ and $d$ non-negative
  rational numbers $r_1,\ldots,r_d$ such that
  $-\Delta(\theta_0)=\sum_{j=1}^dr_j\Delta(\theta_j)$. By introducing
  a positive integer $h_0$ such that $h_j\eqdef h_0 r_j$ is a natural
  number for every $j$, we derive that the
  following linear system over the sequences $(h_j)_{0\leq\ j\leq
    d}$ of natural numbers 
  $$\sum_{j=0}^dh_j\vec{v}_j=\vec{0}$$
  admits a solution satisfying
  $h_0>0$ where $\vec{v}_j\eqdef\Delta(\theta_j)$.

  \medskip
  
  From~\cite{P-RTA91}, it follows that solutions of that system can be
  decomposed as finite sums of ``minimal'' solutions 
  $(h_j)_{1\leq j\leq k}$ of the same system satisfying
  additionally the following constraint:
  $$\sum_{j=0}^dh_j\leq (1+(d+1)rm)^d$$
  From $1+(d+1)rm\leq (3drm)$, we derive
  $(1+(d+1)rm)^d\leq (3drm)^{d}$.
  Since there exist solutions of that system with $h_0>0$, there
  exists at least a minimal one satisfying the same constraint. We
  have proved the lemma.
\end{proof}

We deduce the following lemma.
\begin{lemma}\label{lem:t1}
  There exists a full-state cycle with a zero displacement with a length
  bounded by $r^2(r-1)(3drm)^{d}$.
\end{lemma}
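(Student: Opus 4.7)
The plan is to build a full-state zero-displacement cycle $\Omega$ based at a fixed state $q_0 \in Q$ as the concatenation $\Omega\eqdef \prod_{p\in Q}\Gamma_p$, where each $\Gamma_p$ is a zero-displacement cycle at $q_0$ that passes through $p$, of length at most $r(r-1)(3drm)^d$. Since $|Q|=r$, this gives $|\Omega|\le r^2(r-1)(3drm)^d$, and $\Delta(\Omega)=\sum_p \Delta(\Gamma_p)=\vec{0}$.

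For each $p\in Q$, choose (by strong connectivity) an elementary path $\alpha_p$ from $q_0$ to $p$ of length at most $r-1$, and set $\Gamma_p\eqdef \alpha_p\alpha_p^\ast$, where $\alpha_p^\ast$ is a walk from $p$ to $q_0$ with $\Delta(\alpha_p^\ast)=-\Delta(\alpha_p)$. Mimicking \cref{lem:reverse}, I build $\alpha_p^\ast$ as the reverse-order concatenation of walks $\pi_t$, one per transition $t$ of $\alpha_p$, where each $\pi_t$ runs from $\tgt{t}$ to $\src{t}$ and satisfies $\Delta(t)+\Delta(\pi_t)=\vec{0}$. The heart of the argument is to bound $|\pi_t|\le r(3drm)^d$; this gives $|\alpha_p^\ast|\le (r-1)r(3drm)^d$ and hence $|\Gamma_p|\le r(r-1)(3drm)^d$ as required.

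To build $\pi_t$ of the required length, I apply \cref{lem:pottier0} to $t$, obtaining simple cycles $\theta_1,\ldots,\theta_n$ with $n\le(3drm)^d$, $\sum_i\Delta(\theta_i)=\vec{0}$, and (WLOG) $t$ occurring in $\theta_1$. Write $\theta_1=t\theta_1'$, with $\theta_1'$ a simple path from $\tgt{t}$ to $\src{t}$ of length at most $r-1$. Consider the multigraph $M_t$ whose edges are those of $\theta_1'$ and of $\theta_2,\ldots,\theta_n$ (with multiplicities). Every vertex of $M_t$ has balanced in- and out-degrees except $\tgt{t}$ (where out-degree exceeds in-degree by one) and $\src{t}$ (the converse). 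Provided $M_t$ is weakly connected, $M_t$ admits an Eulerian walk from $\tgt{t}$ to $\src{t}$, of length equal to its edge count $|\theta_1'|+\sum_{j\ge 2}|\theta_j|\le nr\le r(3drm)^d$, and of displacement $\Delta(\theta_1')+\sum_{j\ge 2}\Delta(\theta_j)=\Delta(\theta_1')-\Delta(\theta_1)=-\Delta(t)$. This walk plays the role of $\pi_t$.

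The main obstacle is the weak connectivity of $M_t$: nothing in \cref{lem:pottier0} forces the Carathéodory-generated cycles $\theta_2,\ldots,\theta_n$ to share vertices with $\theta_1$ or with each other, so $M_t$ may split into several weakly connected components. I plan to address this by augmenting the multiset with short cycles that reach each floating component and return with zero net displacement, using strong connectivity together with the structural reversibility of $G$, and absorbing the extra edges into the $r(3drm)^d$ budget; this way Euler's theorem can be invoked on the resulting connected multigraph and the length bound on $\pi_t$ (and hence on $\Omega$) is preserved.
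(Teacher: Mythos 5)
Your proposal is a genuinely different route from the paper's: you try to build the full-state cycle bottom-up as a concatenation of per-state cycles, each assembled from per-transition ``reverse walks'' $\pi_t$ obtained by an Euler argument on a multigraph produced by a single application of \cref{lem:pottier0}. The paper instead works top-down and globally: it selects one transition $t_h$ for each ordered pair $h=(p,q)\in H$ (where $H$ collects the pairs $p\ne q$ with some transition from $p$ to $q$), applies \cref{lem:pottier0} to each $t_h$, takes the union of \emph{all} the resulting simple-cycle multisets into one multigraph, and invokes Euler's lemma exactly once on that union. The crucial advantage of the global union is that it contains $\{t_h:h\in H\}$, which by strong connectivity of $G$ is already a strongly-connected spanning subgraph; thus the Euler hypothesis is satisfied for free, and one directly obtains a full-state Eulerian circuit of zero displacement and length at most $r\cdot|H|\cdot(3drm)^d\le r^2(r-1)(3drm)^d$.

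The gap you flag --- weak connectivity of $M_t$ --- is genuine and is precisely what the paper's global construction is designed to avoid. Nothing in \cref{lem:pottier0} connects the Carathéodory-generated cycles $\theta_2,\dots,\theta_n$ to $\theta_1$, so $M_t$ can indeed split, and your Eulerian walk would then only traverse the component of $\tgt{t}$, missing displacement mass from the floating components and thus failing $\Delta(\pi_t)=-\Delta(t)$. Your proposed repair has a circularity problem: to connect a floating component back to the main one with zero net displacement, you need exactly the kind of bounded zero-displacement ``there-and-back'' cycle that structural reversibility guarantees only abstractly (with no length bound) and that you would again have to manufacture via \cref{lem:pottier0}, re-encountering the same disconnection issue. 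A repair that applies \cref{lem:pottier0} to the transitions of a connecting path adds on the order of $r^2(3drm)^d$ edges per floating component, and with up to $(3drm)^d$ components this overshoots your $r(3drm)^d$ budget for $|\pi_t|$ by a large factor. In effect, once you force enough of $T$ into a single multigraph to guarantee connectivity, you are re-deriving the paper's global construction, so it is cleaner to work globally from the start. (Separately, your arithmetic for $|\Gamma_p|$ and $|\Omega|$ is off by low-order additive terms --- $|\Gamma_p|\le(r-1)+(r-1)\,r(3drm)^d$, not $r(r-1)(3drm)^d$ --- but that is a secondary issue compared with the connectivity gap.)
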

\begin{proof}
  Let us consider the set $H$ of pairs $(p,q)\in Q\times Q$ such that
  there exists a transition from $p$ to $q$ with $p\not=q$. For every
  $h\in H$ of the form $(p,q)$, we select a transition $t_h\in T$ from
  $p$ to $q$. \cref{lem:pottier0} shows that for every $h\in H$,
  there exists a sequence of at most $(3drm)^{d}$ simple cycles
  with a zero total displacement that contains $t_h$. It follows that there exists a
  sequence of at most $|H|(3drm)^{d}$ simple cycles
  with a zero total displacement that contains all the transitions
  $t_h$ with $h\in H$. Since the set of transitions that occurs in
  that sequence is strongly connected, Euler's Lemma shows that there
  exists a cycle $\theta$ with the same Parikh image as the sum of the
  Parikh images of the cycles occurring in the sequence. It follows that
  $|\theta|\leq r |H|(3rdm)^{d}$. Notice that
  $\Delta(\theta)=\vec{0}$ and $\theta$ is a full-state cycle. From
  $|H|\leq r(r-1)$ we are done.
\end{proof}

We deduce the following corollary.
\begin{corollary}\label{cor:smallcyle}
  The displacement of a simple cycle is the displacement of a full-state cycle with a length
  bounded by $r^2(r-1)(3drm)^{d}+r$.
\end{corollary}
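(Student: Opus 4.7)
The plan is to combine a simple cycle with the full-state zero-displacement cycle provided by \cref{lem:t1}. Let $\sigma$ be a simple cycle, so that $|\sigma|\leq r$, and let $q$ denote the state on which $\sigma$ is a cycle. By \cref{lem:t1} there exists a full-state cycle $\theta_0$ in $G$ with $\Delta(\theta_0)=\vec{0}$ and $|\theta_0|\leq r^2(r-1)(3drm)^d$.

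Since $\theta_0$ is full-state, the state $q$ occurs at some position along $\theta_0$. Splitting $\theta_0$ at such an occurrence of $q$, we can write $\theta_0=\alpha\beta$ where $\alpha$ is a path ending in $q$ and $\beta$ is a path starting in $q$ (and both go through the base state of $\theta_0$). Then $\theta\eqdef \alpha\sigma\beta$ is again a cycle on the base state of $\theta_0$. The set of states visited by $\theta$ contains every state visited by $\theta_0$, so $\theta$ is full-state.

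Its displacement satisfies $\Delta(\theta)=\Delta(\alpha)+\Delta(\sigma)+\Delta(\beta)=\Delta(\theta_0)+\Delta(\sigma)=\Delta(\sigma)$, and its length is $|\theta|=|\theta_0|+|\sigma|\leq r^2(r-1)(3drm)^d+r$, as required. There is no real obstacle here; the whole work has been done in \cref{lem:t1}, and the corollary is just the observation that inserting a simple cycle at an occurrence of its base state inside a full-state cycle preserves fullness while adding its displacement.
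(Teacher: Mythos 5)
Your proof is correct and follows exactly the paper's approach: invoke \cref{lem:t1} to get a full-state zero-displacement cycle, then insert the simple cycle at an occurrence of its base state. The paper states this more tersely (``just observe that any simple cycle $\theta_s$ can be inserted''), and you have simply spelled out the insertion explicitly, including the length bound $|\sigma|\leq r$ for a simple cycle.
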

\begin{proof}
  \cref{lem:t1} shows that there exists full-state cycle $\theta$ with a zero displacement with a length
  bounded by $r^2(r-1)(3drm)^{d}$. Now, just observe that any simple cycle $\theta_s$ can be inserted in $\theta$ in such a way we get a full-state cycle with the same displacement as $\theta_s$.
\end{proof}

\section{Proof of \cref{lem:pathrev}}
Now, let us prove \cref{lem:pathrev}. To do so, let us consider a structurally-reversible $I$-unfolding $G=(Q,A,T)$, and let $r\eqdef|Q|$ and $m\eqdef\normi{A}$. Notice that if $m=0$ the proof is immediate. So, we can assume that $m\geq 1$.

\medskip

Let $\vec{x},\vec{y}$ be two configurations such that the following conditions hold:
\begin{itemize}
\item $\vec{x}|_I,\vec{y}|_I\in Q$,
\item $\vec{x}(i),\vec{y}(i)\geq mr^3(3drm)^d$ for every $i\not\in I$, and
\item $\vec{y}-\vec{x}\in\vec{L}_{\vec{x}|_I,G,\vec{y}|_I}$.
\end{itemize}
Let $\pi$ be an elementary path from $\vec{x}|_I$ to $\vec{y}|_I$.

\medskip

We introduce $\vec{z}\eqdef \vec{y}-\vec{x}-\Delta(\pi)$.
Since $\vec{y}-\vec{x}\in \vec{L}_{\vec{x}|_I,G,\vec{y}|_I}$, we deduce that $\vec{z}\in\vec{L}_G$.
It follows that $\vec{z}$ is a finite sum of displacements of cycles and negation of displacements of cycles. \cref{lem:reverse} shows that the negation of the displacement of a cycle is the displacement of another cycle. It follows that $\vec{z}$ is a finite sum of displacements of cycles. As the displacement of a cycle is a finite sum of displacements of simple cycles, we deduce that $\vec{z}$ is a finite sum of displacements of simple cycles.

\medskip

\cref{cor:stei1} and \cref{cor:stei2} shows that there exists a sequence
$\vec{z}_1,\ldots,\vec{z}_k$ of displacements of simple cycles such
that $\vec{z}=\sum_{j=1}^k\vec{z}_j$, $k\leq
(1+\norm{\vec{z}})(3drm)^d$, and such that for every
$n\in\{0,\ldots,k\}$, we have:
$$\sum_{j=1}^n\vec{z}_j(i)\geq \min\{0,\vec{z}(i)\}-drm$$
\cref{cor:smallcyle} shows that for
every $1\leq j\leq k$, there exists a full-state cycle $\theta_j$ such that
$\Delta(\theta_j)=\vec{z}_j$ and
$|\theta_j|\leq r^2(r-1)(3drm)^{d}+r$.
With a rotation of $\theta_j$, we can assume without loss of generality that $\theta_j$ is a cycle on $q$. We introduce the cycle $\theta$ defined as follows:
$$\theta\eqdef\theta_1\ldots\theta_n$$

\medskip

We are going to prove that $\vec{x}\xrightarrow{\pi\theta}\vec{y}$.
To
do so, let $\delta t$ be a prefix of $\pi\theta$ where $\delta$ is a path from $p$ to a state $r$ and $t=(r,a,s)$ is a transition in $T$ and $a=(\vec{a}_-,\vec{a}_+)$ is a Petri net action. Let $i\in\{1,\ldots,d\}$ and let us prove that
$\vec{x}(i)+\Delta(\delta)(i)\geq \vec{a}_-(i)$. Observe that if $i\in I$, since $G$ is an $I$-unfolding, and $\vec{x}(i)= p(i)$,  we have $\vec{x}(i)+\Delta(\delta)(i)= r(i)$. Moreover, as $r\xrightarrow{a}s$ we deduce that $\vec{x}(i)+\Delta(\delta)(i)\geq \vec{a}_-(i)$. Now, assume that $i\not\in I$. Since $\vec{a}_-(i)\leq m$, it is sufficient to show that $\vec{x}(i)+\Delta(\pi)(i)\geq m$ in that case.

Since $\pi$ is elementary, we deduce that $|\pi|<r$. Notice that if $\delta$ is a prefix of $\pi$ then $|\delta|\leq |\pi|$. In particular $\Delta(\delta)(i)\geq -m(r-1)$. It follows that $\vec{x}(i)+\Delta(\delta)(i)\geq m$ and we are done. So, we can assume that $\delta$ is not a prefix of $\pi$. It follows that there exists $n\in\{1,\ldots,k\}$ and a prefix $\pi'$ of
$\theta_n$ such that
$\delta=\pi\theta_1\ldots\theta_{n-1}\pi'$.
Hence
$\Delta(\delta)=\Delta(\pi\pi')+\sum_{j=1}^{n-1}\vec{z}_j(i)$. Moreover,
notice that $|\Delta(\pi\pi')(i)|\leq m|\pi\pi'|\leq m(r-1)+mr^2(r-1)(3drm)^{d}+mr\leq mr^3(3drm)^d-drm-m$.
We decompose the proof
that $\vec{x}(i)+\Delta(\delta)(i)\geq m$ in two cases following that $\vec{z}(i)\leq 0$ or $\vec{z}(i)\geq 0$.
\begin{itemize}
\item Assume first that $\vec{z}(i)\geq 0$. In that case
  $\sum_{j=1}^{n-1}\vec{z}_j(i)\geq -drm$. It follows that
  $\vec{x}(i)+\Delta(\delta)(i)\geq  mr^3(3 d r m)^{d}  -drm -mr^3(3drm)^d+drm+m \geq m$. 
\item Next, assume that $\vec{z}(i)\leq 0$. In that case
  $\sum_{j=1}^{n-1}\vec{z}_j(i)\geq \vec{z}(i)-drm$. It follows that
  $\vec{x}(i)+\Delta(\delta)(i)\geq
  \vec{x}(i)+\vec{z}(i)+\Delta(\pi\pi')(i)-drm=\vec{y}(i)-\Delta(\pi\pi')(i)-drm\geq (3 d r m)^{d} -drm -m
  r^3(3drm)^{d}+drm+m\geq m$.
\end{itemize}
We have proved that $\vec{x}\xrightarrow{\pi\theta}\vec{y}$.
Now,
observe that $|\theta|\leq k(r^2(r-1)(3drm)^{d}+r)$. From
$k\leq 2\norm{\vec{z}}(3drm)^d$,
we get $|\theta|\leq \norm{\vec{y}-\vec{x}}2r^3(3drm)^{2d}$.
\cref{lem:pathrev} is proved.

\clearpage
\part{Proof of \cref{thm:witnessex}}\label{part:witnessex}\label{part:lem2}
In this part, we prove \cref{thm:witnessex}. All other results proved in this part are not used in the sequel.

\section{Extractors}\label{sec:extractor}
\newcommand{\extract}[3]{\operatorname{extract}_{#1,#3}(#2)}
The notion of extractors was first introduced in
\cite{concurjournal13}. Intuitively, extractors provide a natural way
to classify components of a vector of natural numbers
into two categories: large
ones and small ones. The notion is parameterized by a set
$I\subseteq\{1,\ldots,d\}$ that provides a way to focus only on
components in $I$.
More formally, a 
\emph{$d$-dimensional extractor} $\lambda$ is a non-decreasing sequence
$(\lambda_0\leq \cdots\leq \lambda_{d+1})$ of positive natural
numbers denoting some \emph{thresholds}. Given a $d$-dimensional extractor $\lambda$ and a set
$I\subseteq\{1,\ldots,d\}$, a \emph{$(\lambda,I)$-small set} of
a set $\vec{C}\subseteq \setN^d$ is a subset $J\subseteq I$ such that
$\vec{c}(j)<\lambda_{|J|}$ for every $j\in J$ and
$\vec{c}\in\vec{C}$. The following lemma shows that there exists a
unique maximal $(\lambda,I)$-small set w.r.t. inclusion. We denote by
$\extract{\lambda}{I}{\vec{C}}$ this set.
\begin{lemma}
  The class of $(\lambda,I)$-small sets of a set $\vec{C}\subseteq\setN^d$ is non empty and stable under union.
\end{lemma}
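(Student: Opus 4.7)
The plan is to verify both properties directly from the definitions, relying crucially on the fact that the extractor $\lambda$ is non-decreasing.

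First I would handle non-emptiness by observing that the empty set $\emptyset \subseteq I$ is vacuously a $(\lambda,I)$-small set of $\vec{C}$: there is no $j \in \emptyset$ to test, so the defining condition $\vec{c}(j) < \lambda_{|J|}$ holds trivially. Hence the class of $(\lambda, I)$-small sets contains $\emptyset$ and is non empty.

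Next I would prove stability under union. Let $J_1, J_2$ be two $(\lambda,I)$-small sets of $\vec{C}$ and set $J \eqdef J_1 \cup J_2 \subseteq I$. Fix $j \in J$ and $\vec{c} \in \vec{C}$. Then $j$ belongs to $J_1$ or to $J_2$; by symmetry assume $j \in J_1$. The $(\lambda,I)$-smallness of $J_1$ gives $\vec{c}(j) < \lambda_{|J_1|}$. Since $J_1 \subseteq J$, we have $|J_1| \leq |J| \leq d$, and both indices lie in the valid range $\{0, \dots, d+1\}$ of the extractor. Because $\lambda$ is non-decreasing, this yields $\lambda_{|J_1|} \leq \lambda_{|J|}$, and therefore $\vec{c}(j) < \lambda_{|J|}$. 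As $j$ and $\vec{c}$ were arbitrary, $J$ is a $(\lambda,I)$-small set of $\vec{C}$.

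There is no real obstacle here; the entire argument rests on the monotonicity of the sequence $\lambda_0 \leq \cdots \leq \lambda_{d+1}$, which is exactly what makes the thresholds compatible as the size of the candidate set grows. Once both bullet points are established, taking the union over the whole (finite) family of $(\lambda,I)$-small sets immediately produces the unique maximal $(\lambda,I)$-small set $\extract{\lambda}{I}{\vec{C}}$ announced just after the lemma.
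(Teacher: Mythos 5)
Your proof is correct and follows exactly the same approach as the paper: nonemptiness via the vacuously small empty set, and stability under union via monotonicity of $\lambda$ together with $|J_1|, |J_2| \leq |J_1 \cup J_2|$.
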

\begin{proof}
  We adapt the proof of \cite[Section~8]{concurjournal13}.
  Since the class contains the empty set, it is nonempty. Now, let us prove
  the stability by union by considering two $(\lambda,I)$-small sets
  $J_1$ and $J_2$ of $\vec{C}$ and let us prove that $J\eqdef J_1\cup
  J_2$ is a $(\lambda,I)$-small set of $\vec{C}$. Since $J_1,J_2\subseteq I$, we
  derive $J\subseteq I$. Let $\vec{c}\in\vec{C}$ and $j\in J$.
  If $j\in J_1$ then $\vec{c}(j)<\lambda_{|J_1|}\leq \lambda_{|J|}$
  since $|J_1|\leq |J|$. Symmetrically, if $j\in J_2$ we deduce that
  $\vec{c}(j)<\lambda_{|J_2|}\leq \lambda_{|J|}$. We have proved that
  $J$ is a $(\lambda,I)$-small set of $\vec{C}$.
\end{proof}

\begin{example}
  Let us consider the $2$-dimensional extractor $\lambda=(\lambda_0\leq \lambda_1
  \leq \lambda_2\leq \lambda_3)$ and assume that $I=\{1,2\}$ and let
  $\vec{C}=\{(m,n)\}$ with $m,n\in\setN$.
  We have:
  $$\extract{\lambda}{I}{\vec{C}}=\begin{cases}
  \{1,2\} & \text{ if }m,n<\lambda_2\\
  \emptyset & \text{ if }(m\geq \lambda_2 \wedge n\geq \lambda_1) \vee (m\geq \lambda_1
  \wedge n\geq \lambda_2)\\
  \{1\} & \text{ if } m<\lambda_1\wedge n\geq \lambda_2\\
  \{2\} & \text{ if } m\geq \lambda_2\wedge n< \lambda_1
\end{cases}$$
\end{example}

\begin{remark}
  As shown by the previous example, the values $\lambda_0$ and
  $\lambda_{d+1}$ of any $d$-dimensional extractor $\lambda$ are not used directly by our
  definitions. Those extreme values are introduced to simplify some
  notations in the sequel.
\end{remark}

The following lemma shows that components that are not in
$\extract{\lambda}{I}{\vec{C}}$ are large for at least one vector in $\vec{C}$.
\begin{lemma}\label{lem:large}
  Let $J\eqdef \extract{\lambda}{I}{\vec{C}}$. For every $i\in I\backslash J$ there exists
  $\vec{c}\in\vec{C}$ such that:
  $$\vec{c}(i)\geq \lambda_{|J|+1}$$
\end{lemma}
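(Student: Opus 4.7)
The plan is to exploit the maximality of $J$ as a $(\lambda,I)$-small set. Fix $i \in I \setminus J$ and consider the candidate set $J' \eqdef J \cup \{i\}$, which satisfies $J' \subseteq I$ and $|J'| = |J|+1$. By maximality of $J$ (guaranteed by the previous lemma about stability under union), the set $J'$ cannot itself be a $(\lambda,I)$-small set of $\vec{C}$, so the defining condition must fail for some index $j \in J'$ and some vector $\vec{c} \in \vec{C}$, meaning $\vec{c}(j) \geq \lambda_{|J'|} = \lambda_{|J|+1}$.

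The next step is to argue that this failing index $j$ is necessarily $i$ itself. Indeed, for $j \in J$ we have $\vec{c}(j) < \lambda_{|J|}$ for every $\vec{c}\in\vec{C}$ since $J$ is $(\lambda,I)$-small, and since $\lambda$ is non-decreasing we get $\vec{c}(j) < \lambda_{|J|} \leq \lambda_{|J|+1}$. So the witness of failure must come from the newly added element $i$, giving the desired $\vec{c}\in\vec{C}$ with $\vec{c}(i) \geq \lambda_{|J|+1}$.

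This is a short, clean argument with no real obstacle; the only subtlety is correctly tracking the index of $\lambda$, namely that adding one element to $J$ moves the relevant threshold from $\lambda_{|J|}$ to $\lambda_{|J|+1}$, which is why the lemma precisely states the bound $\lambda_{|J|+1}$ rather than $\lambda_{|J|}$.
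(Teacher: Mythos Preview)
Your proof is correct and essentially identical to the paper's own argument: both consider $J' = J \cup \{i\}$, invoke the maximality of $J$ to conclude $J'$ cannot be $(\lambda,I)$-small, and use the monotonicity $\lambda_{|J|}\leq\lambda_{|J|+1}$ to pin the failure on the new index $i$. The only cosmetic difference is that the paper phrases it as a proof by contradiction (assume $\vec{c}(i)<\lambda_{|J|+1}$ for all $\vec{c}$, derive that $J'$ is small) while you phrase it directly.
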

\begin{proof}
  Assume that for some $i\in
  I\backslash J$, we have $\vec{c}(i)<\lambda_{|J|+1}$ for every
  $\vec{c}\in\vec{C}$. Let $J'\eqdef J\cup\{i\}$ and
  observe that $J'$ is a $(\lambda,I)$-small set of $\vec{C}$. In
  fact, for every $\vec{c}\in\vec{C}$ and for every $j\in J'$, we have
  $\vec{c}(j)< \lambda_{|J|}\leq \lambda_{|J'|}$ if $j\in J$, and
  $\vec{c}(j)<\lambda_{|J|+1}=\lambda_{|J'|}$ if $j=i$. 
  We get
  a contradiction by maximality of $\extract{\lambda}{I}{\vec{C}}$. We deduce the lemma.
\end{proof}

Given a set $I\subseteq \{1,\ldots,d\}$ we define
$\extract{\lambda}{I}{e}$ for a finite word $e$ of configurations 
by $\extract{\lambda}{I}{\varepsilon}\eqdef I$, and by
$\extract{\lambda}{I}{e\vec{c}}\eqdef\extract{\lambda}{\extract{\lambda}{I}{e}}{\{\vec{c}\}}$
for every $\vec{c}\in\setN^d$ and for every finite word $e$ of configurations. Given
an infinite word $e$ of configurations, we observe that
$(\extract{\lambda}{I}{e_n})_{n\in\setN}$ where $e_n$ is the finite
prefix of $e$ of length $n$ is a non-increasing sequence of sets in
$\{1,\ldots,d\}$. It follows that this sequence is asymptotically
constant and equals to a set included in $\{1,\ldots,d\}$. We denote
$\extract{\lambda}{I}{e}$ that set. The
following lemma shows that extracting along a word of
configurations in $\vec{C}$ asymptotically
coincides with an extraction of $\vec{C}$.
\begin{lemma}\label{lem:sequential}
  Let us consider a set $I\subseteq \{1,\ldots,d\}$, an extractor
  $\lambda$, a set $\vec{C}$ of configurations, and an infinite
  word $e$ over $\vec{C}$. We have $\extract{\lambda}{I}{\vec{C}}\subseteq
  \extract{\lambda}{I}{e}$. Moreover, $\extract{\lambda}{I}{\vec{C}}=
  \extract{\lambda}{I}{e}$ if every configuration of $\vec{C}$ occurs infinitely
  often in $e$.
\end{lemma}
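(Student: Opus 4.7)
The proof naturally splits along the two claims in the statement, each handled by a monotonicity argument on the extractor operator plus invocation of the maximality characterization of $\extract{\lambda}{I}{\cdot}$.

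For the inclusion $\extract{\lambda}{I}{\vec{C}}\subseteq\extract{\lambda}{I}{e}$, I would set $J\eqdef\extract{\lambda}{I}{\vec{C}}$ and prove by induction on $n$ that $J\subseteq \extract{\lambda}{I}{e_n}$ where $e_n$ is the prefix of $e$ of length $n$. The base case $n=0$ is immediate since $\extract{\lambda}{I}{\varepsilon}=I\supseteq J$. For the inductive step, writing $J_n\eqdef\extract{\lambda}{I}{e_n}$, one has $\extract{\lambda}{I}{e_n\vec{c}_{n+1}}=\extract{\lambda}{J_n}{\{\vec{c}_{n+1}\}}$ by definition, so it suffices to show that $J$ is a $(\lambda,J_n)$-small set of $\{\vec{c}_{n+1}\}$; the inclusion $J\subseteq J_n$ holds by induction, and for every $j\in J$ one has $\vec{c}_{n+1}(j)<\lambda_{|J|}$ because $\vec{c}_{n+1}\in\vec{C}$ and $J$ is $(\lambda,I)$-small of $\vec{C}$. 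Maximality of $\extract{\lambda}{J_n}{\{\vec{c}_{n+1}\}}$ then yields $J\subseteq\extract{\lambda}{I}{e_{n+1}}$. Passing to the eventually-constant limit gives $J\subseteq\extract{\lambda}{I}{e}$.

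For the reverse inclusion under the assumption that every configuration of $\vec{C}$ occurs infinitely often in $e$, I would let $J^\star\eqdef\extract{\lambda}{I}{e}$ and pick $N$ such that $\extract{\lambda}{I}{e_n}=J^\star$ for every $n\geq N$. The goal becomes showing that $J^\star$ is a $(\lambda,I)$-small set of $\vec{C}$, from which the maximality of $\extract{\lambda}{I}{\vec{C}}$ gives the inclusion. Inclusion $J^\star\subseteq I$ holds because the sequence $\extract{\lambda}{I}{e_n}$ is non-increasing starting from $I$. For the threshold condition, fix $\vec{c}\in\vec{C}$ and, by the hypothesis of infinite occurrence, pick $n\geq N$ with $\vec{c}_{n+1}=\vec{c}$; then $J^\star=\extract{\lambda}{I}{e_{n+1}}=\extract{\lambda}{J^\star}{\{\vec{c}\}}$, which by definition means $\vec{c}(j)<\lambda_{|J^\star|}$ for every $j\in J^\star$.

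I do not anticipate a genuine obstacle here: the whole argument is a careful unfolding of the inductive definition of $\extract{\lambda}{I}{\cdot}$ on words combined with the universal (resp. maximality) property established earlier. The only subtle point is keeping track of the fact that the extracting parameter $J_n$ itself varies with $n$, which forces the inductive step to go through the intermediate set rather than working directly with $I$; ensuring the maximality invocations are applied to the correct underlying set is where I would be most careful.
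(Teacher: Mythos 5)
Your proof is correct and follows essentially the same argument as the paper's: an induction on prefix length showing $\extract{\lambda}{I}{\vec{C}}\subseteq\extract{\lambda}{I}{e_n}$ via the $(\lambda,J_n)$-smallness of $\extract{\lambda}{I}{\vec{C}}$ in $\{\vec{c}_{n+1}\}$, then exploitation of the eventually-constant sequence and infinite occurrence to show $\extract{\lambda}{I}{e}$ is a $(\lambda,I)$-small set of $\vec{C}$. The only cosmetic difference is an index offset in the induction; the maximality invocations land on the same objects.
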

\begin{proof}
  We introduce $J\eqdef\extract{\lambda}{I}{\vec{C}}$, $J_\infty\eqdef
  \extract{\lambda}{I}{e}$, the prefix $e_n$ of length $n$ of $e$, and
  $J_n\eqdef\extract{\lambda}{I}{e_n}$. 

  \medskip

  Let us prove that $J\subseteq J_n$ for every $n$. 
  Since $J_0=I$ the property is proved for $n=0$. Assume that
  $J\subseteq J_{n-1}$ for some $n\geq 1$ and let us prove that
  $J\subseteq J_n$. There exists $\vec{c}\in\vec{C}$ such that
  $e_n=e_{n-1}\vec{c}$. Since $\vec{c}\in\vec{C}$, it follows that
  $\vec{c}(j)<\lambda_{|J|}$ for every $j\in J$. As $J\subseteq
  J_{n-1}$, we deduce that $J$ is a $(\lambda,J_{n-1})$-small set of
  $\{\vec{c}\}$. Since $J_n$ is the maximal set satisfying that
  property,
  we get $J\subseteq J_n$ and
  we have proved the induction. It follows that $J\subseteq J_n$ for
  every $n\in\setN$. Moreover, since
  $J_\infty=\bigcap_{n\in\setN}J_n$, we deduce the inclusion
  $J\subseteq J_\infty$.

  \medskip

  Now, assume that every $\vec{c}\in\vec{C}$ occurs in $e$ infinitely
  often. Since $(J_n)_{n\in\setN}$ is a non increasing
  sequence of $\{1,\ldots,d\}$, there exists $N$ such that
  $J_n=J_\infty$ for
  every $n\geq N$. Let $\vec{c}\in\vec{C}$. There exists $n>
  N$ such that $e_n=e_{n-1}\vec{c}$. From
  $J_{n}=\extract{\lambda}{J_{n-1}}{\{\vec{c}\}}$ and $J_{n}=J_{n-1}=J_\infty$, we
  derive $J_\infty=\extract{\lambda}{J_\infty}{\{\vec{c}\}}$. In particular
  $\vec{c}(j)<\lambda_{|J_\infty|}$ for every $j\in J_\infty$. We have
  proved that $\vec{c}(j)<\lambda_{|J_\infty|}$ for every $j\in
  J_\infty$ and for every $\vec{c}\in\vec{C}$.
  As
  $J_\infty\subseteq I$, we deduce that $J_\infty$ is a
  $(\lambda,I)$-small set of $\vec{C}$. Since $J$ is the maximal set
  satisfying that property, we
  deduce that $J_\infty\subseteq J$. It follows that $J=J_\infty$.
\end{proof}

\section{Rackoff Extraction}\label{sec:rackoff}
A \emph{$\sigma$-execution}, where $\sigma=a_1\ldots a_k$ is a word of
Petri net actions, is a non-empty word of configurations
$e=\vec{c}_0\vec{c}_1\ldots \vec{c}_k$ such that
$\vec{c}_0\xrightarrow{a_1}\vec{c}_1\cdots\xrightarrow{a_k}\vec{c}_k$.
We denote by $\src{e}$ and $\tgt{e}$ the
configurations $\vec{c}_0$ and $\vec{c}_k$ respectively. An
\emph{execution} of a PN $A$ is a $\sigma$-execution for some $\sigma\in A^*$.

An execution $e$ is said to be \emph{$I$-cyclic} for some
$I\subseteq\{1,\ldots,d\}$ if $\src{e}|_I=\tgt{e}|_I$. We say that
a word $\sigma=\vec{a}_1\ldots\vec{a}_k$ of actions in a PN $A$ is obtained from
an execution $e$ of $A$ by removing $I$-cycles where $I\subseteq
\{1,\ldots,d\}$, if there exists a decomposition of $e$ into a concatenation
$e_0\ldots e_k$ of $I$-cyclic executions $e_0,\ldots,e_k$ such that
$\tgt{e_{j-1}}\xrightarrow{a_j}\src{e_j}$ for every $1\leq j\leq k$.

\medskip

An extractor $\lambda=(\lambda_0\leq \cdots\leq \lambda_{d+1})$ is
said to be \emph{$m$-adapted} if for every $n\in \{0,\ldots,d\}$:
$$\lambda_{n+1}\geq \lambda_n+m\lambda_n^n$$

\begin{lemma}[slight extension of \cite{Rackoff78}]\label{lem:rack}
  Let $\lambda$ be an $m$-adapted extractor and $e$ be an
  execution of a PN $A\subseteq \{0,\ldots,m\}^d\times\setN^d$. Let
  $I\eqdef\extract{\lambda}{\{1,\ldots,d\}}{e}$. There
  exists a word $\sigma$ that can be obtained from $e$ by removing
  $I$-cycles such that $|\sigma|\leq d\lambda_d^d$
  and such that $\src{e}\xrightarrow{\sigma}\vec{c}$ for some
  configuration $\vec{c}$ satisfying $\vec{c}(i)=\tgt{e}(i)$ for
  every $i\in I$, and such that for every $i\not\in I$ we
  have:
  $$\vec{c}(i)\geq \lambda_{|I|+1}-m\sum_{j=1}^{|I|}\lambda_j^j$$
\end{lemma}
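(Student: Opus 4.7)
My plan is to mimic Rackoff's classical length argument~\cite{Rackoff78}, now parametrised by the extractor $\lambda$. Write $e_n$ for the prefix of $e$ of length $n$ and $J_n \eqdef \extract{\lambda}{\{1,\ldots,d\}}{e_n}$. The sequence $(J_n)$ is non-increasing, so it stabilises at $I$ at some index $t$. I proceed by induction on $d - |I|$. For the base case $|I|=d$, every configuration of $e$ has all coordinates strictly below $\lambda_d$ (by the definition of the extract), so $e$ contains at most $\lambda_d^d$ distinct configurations. Any repetition yields a full cycle, which is $I$-cyclic since $I = \{1,\ldots,d\}$. Iteratively deleting repeated segments produces $\sigma$ with $|\sigma| \le \lambda_d^d \le d\lambda_d^d$ and $\src{e} \xrightarrow{\sigma} \tgt{e}$; the second output condition is vacuous.

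For the inductive step I split $e = e'\,a\,e''$ where $|e'| = t-1$ and the action $a$ is the one whose firing ejects at least one coordinate from $J_{t-1} \supsetneq I$ down to $I$. Applying the induction hypothesis to $e'$ (whose extract $J_{t-1}$ is strictly larger than $I$) produces a short word $\tau'$ obtained from $e'$ by removing $J_{t-1}$-cycles --- which are a fortiori $I$-cycles --- taking $\src{e}$ to a configuration $\vec{c}'$ that equals $\tgt{e'}$ on $J_{t-1}$ and satisfies $\vec{c}'(i) \ge \lambda_{|J_{t-1}|+1} - m\sum_{j \le |J_{t-1}|}\lambda_j^j$ for every $i \notin J_{t-1}$. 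In the suffix $a\,e''$ the extract is already stable at $I$, so every configuration there has its $I$-coordinates below $\lambda_{|I|}$; there are thus at most $\lambda_{|I|}^{|I|}$ distinct $I$-projections among them, and cutting between repeated $I$-projections yields $\tau''$ of length at most $\lambda_{|I|}^{|I|}$ with the correct $I$-component endpoints. Setting $\sigma \eqdef \tau'\,a\,\tau''$, the length bound $|\sigma| \le d\lambda_d^d$ follows by summing the contributions from the $d - |I| + 1$ phases of constant extract, each bounded by $\lambda_d^d$.

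The main obstacle is verifying that $\sigma$ is actually executable from $\src{e}$, i.e., that every Petri net action in $\sigma$ is enabled at the configuration produced so far. On coordinates inside $I$ this is automatic because $I$-cyclic removals preserve the $I$-component trajectory exactly. On coordinates outside $I$ one must check that the splice action $a$ is enabled after $\tau'$ and that each action of $\tau''$ remains enabled despite the cyclic removals. This is precisely what $m$-adaptedness $\lambda_{n+1} \ge \lambda_n + m\lambda_n^n$ is designed for: the slack $\lambda_{|J_{t-1}|+1} - \lambda_{|J_{t-1}|} \ge m\lambda_{|J_{t-1}|}^{|J_{t-1}|}$ inherited from the inductive call dominates the at-most-$m\lambda_{|I|}^{|I|}$ coordinate decrease that $\tau''$ could inflict, so all preconditions are met. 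Propagating the additive deficit $m\sum_{j \le |I|}\lambda_j^j$ additively through the induction (using that the sequence $(\lambda_j)$ is non-decreasing) recovers the lower bound on $\vec{c}(i)$ stated in the conclusion.
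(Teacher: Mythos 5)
Your route is genuinely different from the paper's. The paper inducts on the \emph{dimension} $d$: it cuts cycles in the prefix where the extract is still $\{1,\ldots,d\}$ (bounded by $\lambda_d^d$), then projects the remaining suffix to the $d' = |J_{h+1}|$ coordinates that survive the first drop and applies the induction hypothesis in dimension $d' < d$, finally lifting back and checking the excluded coordinates stay large. You instead induct on $d - |I|$, apply the induction hypothesis to the \emph{prefix} $e'$ (whose extract $J_{t-1}$ is strictly larger than $I$), and handle the last constant-extract phase directly by cutting $I$-cycles. Conceptually this is a reasonable reorganization, but two of the load-bearing steps are not actually closed.

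First, the length bookkeeping does not go through with the induction hypothesis as stated. Applying the lemma to $e'$ yields only $|\tau'| \le d\lambda_d^d$, which already exhausts the target budget; once you append $a$ and $\tau''$ you exceed $d\lambda_d^d$. You allude to ``summing the contributions from the $d-|I|+1$ phases, each bounded by $\lambda_d^d$,'' but the hypothesis you invoke does not deliver a per-phase bound, so there is nothing to sum. You would need to strengthen the inductive statement to something like $|\sigma| \le \sum_{j=|I|}^d \lambda_j^j$, which is precisely what the paper's proof tracks and then relaxes to $d\lambda_d^d$ at the very end.

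Second, your enabledness/magnitude argument only covers coordinates \emph{outside} $J_{t-1}$: the ``slack inherited from the inductive call'' is the margin $\lambda_{|J_{t-1}|+1} - m\sum_{j\le |J_{t-1}|}\lambda_j^j$ on those coordinates, and you correctly observe it absorbs the decrease incurred by $a\,\tau''$. But the coordinates in $J_{t-1}\setminus I$ are a different matter: after $\tau'$ the induction hypothesis fixes them exactly to $\tgt{e'}$, with no a priori lower bound. To see that they stay large through $a\,\tau''$ you need \cref{lem:large}: since they are ejected from the extract at step $t$, $\vec{c}_{t-1}(i)\ge\lambda_{|I|+1}$ for each $i\in J_{t-1}\setminus I$, and only then does the decrease $m\bigl(1+\lambda_{|I|}^{|I|}\bigr)$ during $a\,\tau''$ leave the required margin. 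Your write-up neither cites \cref{lem:large} nor treats these ``intermediate'' coordinates at all, so this case is left unproven.
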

\begin{proof}
  The proof follows a similar approach to the original one from
  Rackoff~\cite{Rackoff78}.
  We prove the lemma by induction over $d$. Naturally, if $d=0$ the
  lemma is immediate. Assume the lemma proved for every dimension
  strictly smaller than $d\geq 1$ and let us consider an $m$-adapted extractor
  $\lambda=(\lambda_0\leq\cdots\leq \lambda_{d+1})$ and an
  $A^*$-execution $e=\vec{c}_0\ldots\vec{c}_k$ for a PN $A\subseteq
  \{0,\ldots,m\}^d\times\setN^d$. We introduce
  $J_n\eqdef\extract{\lambda}{\{1,\ldots,d\}}{\vec{c}_0\ldots\vec{c}_{n-1}}$
  for every $n\in\{0,\ldots,k+1\}$. Since
  $J_0=\{1,\ldots,d\}$, there
  exists a maximal $h\in\{0,\ldots,k+1\}$ such that
  $J_{h}=\{1,\ldots,d\}$. For every $0\leq n< h$, since
  $J_n=\{1,\ldots,d\}$, we deduce that 
  $\vec{c}_n\in \{0,\ldots,\lambda_d-1\}^d$. It follows that the
  cardinal of $\{\vec{c}_n \mid 0\leq n< h\}$ is bounded by
  $\lambda_d^d$. Without loss of generality, by removing cycles from
  the $A^*$-execution $e$, we can assume that
  $\vec{c}_0,\ldots,\vec{c}_{h-1}$ are distinct. It follows that
  $h\leq\lambda_d^d$. Notice that if $h=k+1$ we are done. So, we can
  assume that $h\leq k$. 

  \medskip
  
  Let us introduce $J\eqdef J_{h+1}$. By maximality of $h$, it follows that
  $J$ is strictly included in $\{1,\ldots,d\}$. We introduce $d'=|J|$.
  Thanks to a
  permutation of the components, we can assume without loss of
  generality that $J=\{1,\ldots,d'\}$.
  \cref{lem:large} shows that $\vec{c}_h(i)\geq \lambda_{d'+1}$ for
  every $i\in\{d'+1,\ldots,d\}$.
  We let
  $f:\setN^d\mapsto \setN^{d'}$ be the function defined by
  $f(\vec{z})=(\vec{z}(1),\ldots,\vec{z}(d'))$ for every
  $\vec{z}\in\setN^d$. 
  We also introduce the $d'$-dimensional extractor
  $\lambda'=(\lambda_0\leq \cdots \leq \lambda_{d'+1})$ and the PN
  $A'=\{(f(\vec{a}_-),f(\vec{a}_+))\mid (\vec{a}_-,\vec{a}_+)\in A\}$.
  Let us
  introduce the $(A')^*$-execution $e'=\vec{c}'_{h+1}\ldots \vec{c}'_k$
  where
  $\vec{c}'_n\eqdef f(\vec{c}_n)$, and let us
  introduce the sequence $J'_h,\ldots,J'_{k+1}$ defined by
  $J'_n\eqdef\extract{\lambda'}{\{1,\ldots,d'\}}{\vec{c}'_{h}\ldots\vec{c}'_{n-1}}$
  for every $n\in\{h+1,\ldots,k+1\}$.

  \medskip

  Let us first prove that $J'_n=J_n$ for every
  $n\in\{h+1,\ldots,k+1\}$. First of all notice that
  $J'_{h+1}\subseteq J_{h+1}$. Moreover, for every $i\in J_{h+1}$ we
  have $\vec{c}'_h(i)<\lambda'_{|J_{h+1}|}$. Hence $J_{h+1}$ is a
  $(\lambda', J'_{h+1})$-small set of $\{\vec{c}'_h\}$. By maximality
  of $J'_{h+1}$ we get $J_{h+1}\subseteq J'_{h+1}$. Hence
  $J'_{h+1}=J_{h+1}$. Assume by
  induction the property true for some $n\in\{h+1,\ldots,k\}$. Since
  $J'_{n+1}$ is a $(\lambda',J'_n)$-small set of $\{\vec{c}_n'\}$, we
  deduce that $J'_{n+1}\subseteq J'_n$ and
  $\vec{c}'_n(j)<\lambda'_{|J'_n|}$ for every $j\in J'_n$. As
  $J'_n=J_n$, and $\vec{c}'_n(j)=\vec{c}_n(j)$ for every $j\in
  \{1,\ldots,d'\}$, we deduce that $J'_n$ is a $(\lambda,J_n)$-small
  set of $\vec{c}_n$. By maximality of $J_{n+1}$, we get
  $J'_{n+1}\subseteq J_{n+1}$. Symmetrically, since $J_{n+1}$ is a $(\lambda,J_n)$-small set of $\vec{c}_n$, we
  deduce that $J_{n+1}\subseteq J_n$ and
  $\vec{c}_n(j)<\lambda_{|J_n|}$ for every $j\in J_n$. A $J'_n=J_n$,
  we deduce that $J_n$ is a $(\lambda',J_n')$-small
  set of $\vec{c}_n'$. By maximality of $J_{n+1}'$, we get
  $J_{n+1}\subseteq J'_{n+1}$. We have proved that $J'_n=J_n$ for
  every $n\in\{h+1,\ldots,k+1\}$.

  \medskip

  It follows that $J'_{k+1}=J_{k+1}=I$. 
  By induction, there exists a word $\sigma'$ that can be obtained
  from $e'$ by removing $I$-cycles such that
  $$|\sigma'|\leq \sum_{j=1}^{d'}\lambda_j^j$$
  and such that $\vec{c}_h'\xrightarrow{\sigma'}\vec{c}'$ for some
  configuration $\vec{c}'\in\setN^{d'}$ satisfying
  $\vec{c}'(i)=\vec{c}'_k(i)$ for every $i\in I$ and such that
  for every $i\in \{1,\ldots,d'\}\backslash I$ we have:
  $$\vec{c}'(i)\geq
  \lambda_{|I|+1}-m\sum_{j=0}^{|I|}\lambda_j^j$$
  
  \medskip

  Since $\sigma'$ can be obtained
  from $e'$ by removing $I$-cycles, it follow that 
  there exists a word $w$ that can be obtained from
  $\vec{c}_{h}\ldots \vec{c}_k$ by removing $I$-cycles, and such
  that $\sigma'$ is the word obtained from $w$ by applying the
  function $f$ on each action. Notice that for every
  prefix $u$ of $w$ and for every $i\in \{d'+1,\ldots,d\}$ we
  have:
  \begin{align*}
    \vec{c}_h(i)+\Delta(u)(i)& \geq \lambda_{d'+1}-m|w|\\
                             &\geq \lambda_{d'+1}-m \sum_{j=1}^{d'}\lambda_j^j\\
                             &\geq \lambda_{|I|+1}-m\sum_{j=0}^{|I|}\lambda_j^j
  \end{align*}
  The last inequality is obtained by induction by observing that
  $\lambda$ is $m$-adapted. We deduce that
  $\vec{c}_h(i)+\Delta(u)(i)\geq \lambda_0$ with the same kind of
  induction. In particular the configuration $\vec{c}\in\setN^d$
  defined by $\vec{c}(i)\eqdef\vec{c}'(i)$ if $i\in\{1,\ldots,d'\}$
  and $\vec{c}(i)\eqdef\vec{c}_{h+1}(i)+\Delta(w)(i)$ if
  $i\in\{d'+1,\ldots,d\}$ satisfies
  $\vec{c}_{h}\xrightarrow{w}\vec{c}$. Notice that
  $\vec{c}(i)=\vec{c}_k(i)$ for every $i\in I$, and for every
  $i\not\in I$, we have:
  $$\vec{c}(i)\geq
  \lambda_{|I|+1}-m\sum_{j=0}^{|I|}\lambda_j^j$$
  Let us introduce $\sigma\eqdef \vec{a}_1\ldots\vec{a}_{h}w$ where
  $\vec{a}_n\eqdef\vec{c}_n-\vec{c}_{n-1}$ for every
  $n\in\{1,\ldots,h\}$. Observe that
  $\vec{c}_0\xrightarrow{\sigma}\vec{c}$ and moreover we have:
  $$|\sigma|\leq h+\sum_{j=1}^{d'}\lambda_j^j\leq \sum_{j=1}^d\lambda_j^j\leq d\lambda_d^d$$
  We have proved the induction.
\end{proof}

\section{From SCCC to small unfoldings}
We associate with an extractor $\lambda$ and a SCCC $\vec{C}$ of a PN
$A$, the set $I\eqdef\extract{\lambda}{\{1,\ldots,d\}}{\vec{C}}$. Notice that $Q\eqdef\vec{C}|_I$ is finite since this set is included in $\{q\in\setN^I \mid \normi{q}<\lambda_{|I|}\}$. It follows that $G_{\vec{C},I}$ is a structurally-reversible $I$-unfolding. We show in this section that for every $\vec{c}\in\vec{C}$ there exists a kind of partial pumping pairs for $(\vec{c}, G_{\vec{C},I})$.

\medskip

Let us recall that an \emph{infinite execution} $e$ of $A$ is an infinite word of configurations such
that every finite non-empty prefix is an execution of $A$. Let us prove the following technical lemma.
\begin{lemma}\label{lem:infiniteexe}
  If $\vec{C}$ is not reduced to a singleton, there exists an
  infinite execution $e\in\vec{C}^\omega$ of $A$
  such that every configuration of $\vec{C}$ occurs
  infinitely often in $e$.
\end{lemma}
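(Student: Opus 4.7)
The plan is to construct $e$ by concatenating finite $\vec{C}$-internal executions between consecutive pairs drawn from a fair enumeration of $\vec{C}$.

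First, I would establish an internal connectivity property: for any $\vec{x},\vec{y}\in\vec{C}$, there is a finite $A$-execution $\vec{z}_0 \vec{z}_1 \ldots \vec{z}_\ell$ with $\vec{z}_0=\vec{x}$, $\vec{z}_\ell=\vec{y}$, and every $\vec{z}_j\in\vec{C}$. Since $\vec{C}$ is an SCCC, there exist $\sigma,\tau\in A^*$ such that $\vec{x}\xrightarrow{\sigma}\vec{y}$ and $\vec{y}\xrightarrow{\tau}\vec{x}$. Any intermediate configuration $\vec{z}$ reached along $\sigma$ is reachable from $\vec{x}$ by a prefix of $\sigma$, and from $\vec{z}$ one reaches $\vec{x}$ by concatenating the remaining suffix of $\sigma$ with $\tau$. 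Hence $\vec{z}$ is mutually reachable with $\vec{x}$ and therefore lies in the SCCC $\vec{C}$. So $\sigma$ itself provides a $\vec{C}$-internal execution from $\vec{x}$ to $\vec{y}$.

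Second, since $\vec{C}\subseteq \setN^d$ is countable and non-empty, I would fix an enumeration $(\vec{c}_n)_{n\in\setN}$ of $\vec{C}$ with repetitions in which every element of $\vec{C}$ appears infinitely often (a standard dove-tailing enumeration works). Using the hypothesis that $\vec{C}$ is not a singleton, I can additionally arrange that $\vec{c}_n\neq \vec{c}_{n+1}$ for every $n$: whenever the naive enumeration would produce a repetition, I insert a fixed $\vec{c}^\star \in \vec{C}$ distinct from the current element, which preserves the fairness property.

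Third, applying the internal connectivity step to each consecutive pair, I pick a finite $\vec{C}$-internal execution $e_n$ from $\vec{c}_n$ to $\vec{c}_{n+1}$, and define $e$ to be the infinite concatenation $e_0 e_1 e_2 \cdots$ (gluing the shared endpoints). Every finite prefix of $e$ is an execution of $A$ whose configurations all lie in $\vec{C}$, so $e\in\vec{C}^\omega$ is an infinite execution of $A$; and since each $\vec{c}\in\vec{C}$ equals $\vec{c}_n$ for infinitely many $n$, it appears infinitely often in $e$. The condition $\vec{c}_n \neq \vec{c}_{n+1}$ ensures $|e_n|\geq 1$ for every $n$, so $e$ is genuinely infinite. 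The only nontrivial step is the first one, and it relies solely on the fact that an SCCC is an equivalence class under mutual reachability in $A^*$; the remainder is bookkeeping.
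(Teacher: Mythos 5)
Your proof is correct and takes essentially the same approach as the paper: enumerate $\vec{C}$ so each element appears infinitely often, connect consecutive elements by $\vec{C}$-internal executions, and concatenate. The one place where you add value is making explicit why intermediate configurations along a connecting path lie in $\vec{C}$ (via the prefix/suffix argument), a fact the paper invokes tacitly; your device of forcing $\vec{c}_n \neq \vec{c}_{n+1}$ is also a slightly cleaner way to guarantee the concatenation is genuinely infinite than the paper's remark.
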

\begin{proof}
  Since $\vec{C}$ is countable, there exists an infinite sequence
  $(\vec{c}_n)_{n\in\setN}$ such that $\vec{C}=\{\vec{c}_n\mid
  n\in\setN\}$. Moreover, by replacing that sequence by the sequence
  $s_0,s_1,\ldots$ where $s_n\eqdef \vec{c}_0,\ldots, \vec{c}_n$, we can assume
  without loss of generality that every configuration of $\vec{C}$
  occurs infinitely often in the sequence
  $(\vec{c}_n)_{n\in\setN}$. Since $\vec{C}$ is a SCCC, for every
  positive natural number $n$, there exists an $A^*$-execution from
  $\vec{c}_{n-1}$ to $\vec{c}_n$ of the form $e_n\vec{c}_n$. Let us
  introduce the word $e\eqdef e_1e_2\ldots$. Notice that since
  $\vec{C}$ is not reduced to a singleton, the word $e$ is
  infinite. Moreover, notice that $e$ is an infinite execution
  satisfying the lemma.
\end{proof}

Now, assume that $\lambda$ is $m$-adapted for some positive natural
number $m$ (this notion is introduced in the previous section).

\begin{lemma}\label{lem:pumpupG}
  If $A\subseteq\{0,\ldots,m\}^d\times\setN^d$, for every $\vec{c}\in\vec{C}$, there exists a cycle $\alpha$ in $G$ on $\vec{c}|_I$ such that $|\alpha|\leq d\lambda_d^d$ and a
  configuration $\vec{c}^+$ such that
  $\vec{c}\xrightarrow{\alpha}\vec{c}^+$ and such that $\vec{c}^+(i)\geq
  \lambda_{|I|+1}-m\sum_{j=1}^{|I|}\lambda_j^j$ for every $i\not\in I$.
\end{lemma}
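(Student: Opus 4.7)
The claim will follow by applying \cref{lem:rack} to a well-chosen cyclic execution based at $\vec{c}$. The case $\vec{C}=\{\vec{c}\}$ is immediate: take $\alpha=\varepsilon$ and $\vec{c}^+=\vec{c}$, and invoke \cref{lem:large} on the singleton $\{\vec{c}\}$ to obtain $\vec{c}(i)\geq \lambda_{|I|+1}$ for every $i\notin I$, which is stronger than what we need. So from now on assume $\vec{C}$ is not a singleton.

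The first step is to produce a finite execution $e'$ of $A$ satisfying $\src{e'}=\tgt{e'}=\vec{c}$ whose Rackoff-style extractor already equals $I$. By \cref{lem:infiniteexe} there is an infinite execution $e\in\vec{C}^\omega$ in which every configuration of $\vec{C}$ occurs infinitely often. Passing to a suffix preserves this recurrence property, so I may assume $\src{e}=\vec{c}$. By \cref{lem:sequential}, $\extract{\lambda}{\{1,\ldots,d\}}{e}=I$; since the sequence $(\extract{\lambda}{\{1,\ldots,d\}}{e_n})_{n\in\setN}$ is non-increasing, it equals $I$ from some rank $N$ onward. Because $\vec{c}$ recurs infinitely often, I can pick a prefix $e'$ of $e$ of length at least $N$ with $\tgt{e'}=\vec{c}$, giving the desired cyclic execution with $\extract{\lambda}{\{1,\ldots,d\}}{e'}=I$.

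Applying \cref{lem:rack} to $e'$ then yields a word $\sigma$ with $|\sigma|\leq d\lambda_d^d$, obtainable from $e'$ by removing $I$-cycles, and a configuration $\vec{c}^+$ such that $\vec{c}\xrightarrow{\sigma}\vec{c}^+$, with $\vec{c}^+(i)=\vec{c}(i)$ for $i\in I$ and $\vec{c}^+(i)\geq \lambda_{|I|+1}-m\sum_{j=1}^{|I|}\lambda_j^j$ for $i\notin I$. It remains to read $\sigma$ as a cycle of $G=G_{\vec{C},I}$ on $\vec{c}|_I$. Decomposing $e'=e_0\ldots e_k$ with each $e_j$ an $I$-cyclic execution and $\sigma=a_1\ldots a_k$ satisfying $\tgt{e_{j-1}}\xrightarrow{a_j}\src{e_j}$, the states $q_j\eqdef\src{e_j}|_I$ all lie in $Q=\vec{C}|_I$ since every configuration of $e'$ lies in $\vec{C}$, each $(q_{j-1},a_j,q_j)$ belongs to $T$ because it is witnessed by a step between two configurations of $\vec{C}$, and $q_0=q_k=\vec{c}|_I$ because $e_0$ and $e_k$ are $I$-cyclic. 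Thus $\sigma$ is the sought cycle $\alpha$.

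I expect the only real subtlety to be transferring the asymptotic extractor equality given by \cref{lem:sequential} down to a single finite prefix $e'$; once $e'$ is in hand the rest is a routine composition of \cref{lem:rack} with the definition of $G_{\vec{C},I}$.
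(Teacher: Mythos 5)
Your argument is correct and matches the paper's proof essentially step for step: handle the singleton case trivially, use \cref{lem:infiniteexe} to get a recurrent infinite execution in $\vec{C}$, pass to a suffix rooted at $\vec{c}$, apply \cref{lem:sequential} to identify the asymptotic extractor with $I$, pick a long enough prefix ending at $\vec{c}$ so that its finite extractor is already $I$, apply \cref{lem:rack}, and read the resulting $I$-cycle-removed word as a cycle on $\vec{c}|_I$ in $G_{\vec{C},I}$. Your added justification of the singleton case via \cref{lem:large} and the explicit verification that the $I$-cyclic decomposition of $e'$ yields a cycle of $G$ are fine points the paper leaves implicit.
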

\begin{proof}
  Observe that if $\vec{C}$ is reduced to a singleton, the lemma is
  trivial with $\alpha\eqdef\varepsilon$. So, we can assume that $\vec{C}$
  is not a singleton.
  \cref{lem:infiniteexe} shows that there exists an infinite execution $e=\vec{c}_0\vec{c}_1\ldots$ of
  configurations in $\vec{C}$ such that every configuration of $\vec{C}$ occurs
  infinitely often.
  Without loss of generality, by replacing $e$ by a suffix of $e$
  we can assume that
  $\vec{c}=\vec{c}_0$. \cref{lem:sequential}
  shows that $\extract{\lambda}{\{1,\ldots,d\}}{e}=I$. It follows that
  there exists $N\in\setN$ such that for every $n\geq N$ the prefix
  $e_n$ of $e$ of length $n$ satisfies
  $\extract{\lambda}{\{1,\ldots,d\}}{e_n}=I$. Since $\vec{c}$ occurs
  infinitely often in $e$, there exists $n\geq N$ such that $\vec{c}$
  is the last configuration of $e_n$. 
  \cref{lem:rack} shows
  that there
  exists a word $u$ that can be obtained from $e_n$ by removing
  $I$-cycles such that $|u|\leq d\lambda_d^d$
  and such that $\vec{c}\xrightarrow{u}\vec{c}^+$ for some
  configuration $\vec{c}^+$ satisfying $\vec{c}^+(i)\geq
  \lambda_{|I|+1}-m\sum_{j=1}^{|I|}\lambda_j^j$ for every $i\not\in I$.
  Since $u$ can be obtained from $e_n$ by removing
  $I$-cycles, it follows that $u$ is the label of a cycle $\alpha$ on
  $\vec{c}|_I$ in $G$.
\end{proof}

Symmetrically, we deduce a similar backward property.
\begin{lemma}\label{lem:pumpdownG}
  If $A\subseteq \setN^d\times \{0,\ldots,m\}^d$, 
  for every $\vec{c}\in\vec{C}$, there exists a cycle $\beta$ in $G$ on $\vec{c}|_I$ such that $|\beta|\leq d\lambda_d^d$ and a
  configuration $\vec{c}^-$ such that
  $\vec{c}^-\xrightarrow{\beta}\vec{c}$, and
  such that for every $i\not\in I$:
  $\vec{c}^-(i)\geq \lambda_{|I|+1}-m\sum_{j=1}^{|I|}\lambda_j^j$.
\end{lemma}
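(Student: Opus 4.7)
The plan is to obtain \cref{lem:pumpdownG} as a direct consequence of \cref{lem:pumpupG} through a forward/backward duality argument, rather than re-running the entire Rackoff-style extraction in reverse.

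First, I introduce the reverse of a Petri net action: for $a=(\vec{a}_-,\vec{a}_+)$ let $a^{-1}\eqdef(\vec{a}_+,\vec{a}_-)$, and for a PN $A$ let $A^{-1}\eqdef\{a^{-1}\mid a\in A\}$. The key observation is that $\vec{x}\xrightarrow{a}\vec{y}$ holds for $A$ if and only if $\vec{y}\xrightarrow{a^{-1}}\vec{x}$ holds for $A^{-1}$; this extends to words by reversing. Consequently, the mutual reachability relation of $A$ coincides with that of $A^{-1}$, so $\vec{C}$ is still an SCCC for $A^{-1}$. Moreover, the $I$-unfolding $G'\eqdef G_{\vec{C},I}$ computed for $A^{-1}$ is the graph obtained from $G=G_{\vec{C},I}$ by reversing every transition and inverting its label, so cycles in $G'$ correspond bijectively to cycles in $G$ (with the same length, and with displacement negated).

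Second, I observe that if $A\subseteq \setN^d\times\{0,\ldots,m\}^d$, then $A^{-1}\subseteq \{0,\ldots,m\}^d\times\setN^d$, which is exactly the hypothesis required by \cref{lem:pumpupG}. Moreover, the set $I=\extract{\lambda}{\{1,\ldots,d\}}{\vec{C}}$ depends only on $\vec{C}$ and $\lambda$, so the same $I$ arises for both $A$ and $A^{-1}$. Applied to the pair $(A^{-1},\vec{C})$ and the configuration $\vec{c}\in\vec{C}$, \cref{lem:pumpupG} yields a cycle $\alpha'$ in $G'$ on $\vec{c}|_I$ with $|\alpha'|\leq d\lambda_d^d$ and a configuration $\vec{c}^-$ such that
$$\vec{c}\xrightarrow{\alpha'}_{A^{-1}}\vec{c}^-\qquad\text{and}\qquad \vec{c}^-(i)\geq \lambda_{|I|+1}-m\sum_{j=1}^{|I|}\lambda_j^j\text{ for every }i\notin I.$$

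Finally, I translate this back to the original PN $A$. Reversing $\alpha'$ yields a cycle $\beta$ in $G$ on $\vec{c}|_I$ of the same length $|\beta|=|\alpha'|\leq d\lambda_d^d$, and by duality $\vec{c}^-\xrightarrow{\beta}\vec{c}$ in $A$. The bound on the coordinates of $\vec{c}^-$ for $i\notin I$ is preserved verbatim, which is precisely the conclusion of the lemma. The main (and only) point that requires care is verifying that the extractor $I$ and the graph $G$ really are invariant under reversal, but both facts are immediate from the definitions, so no genuine obstacle arises.
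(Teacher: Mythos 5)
Your proof is correct and follows essentially the same route as the paper: apply \cref{lem:pumpupG} to the reversed Petri net $A^{-1}$ (the paper writes it as $A'$), observing that $\vec{C}$ remains an SCCC, the extractor $I$ is unchanged since it depends only on $\vec{C}$ and $\lambda$, and the corresponding unfolding is the reversal of $G$, then pull the resulting cycle and configuration back to $A$ by reversing.
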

\begin{proof}
  Let us introduce the PN $A'\eqdef\{(\vec{a}_+,\vec{a}_-)\mid
  (\vec{a}_-,\vec{a}_+)\in A\}$. Observe that $\vec{C}$ is a SCCC of
  $A'$. Let $G'$ be the $I$-unfolding associated to the extractor $\lambda$
  and the SCCC $\vec{C}$ of $A'$.
  \cref{lem:pumpup} shows that there exists a cycle in $G'$ on $\vec{c}|_I$ labeled
  by a word $u$ such that $|u|\leq d\lambda_d^d$
  and a
  configuration $\vec{c}^-$ such that
  $\vec{c}\xrightarrow{u}\vec{c}^-$, and
  such that 
  $\vec{c}^-(i)\geq \lambda_{|I|+1}-m\sum_{j=1}^{|I|}\lambda_j^j$ for
  every $i\not\in I$.
  Assume that $u=a_1'\ldots a_n'$ with $a_j'=(\vec{x}_j,\vec{y}_j)$
  and let
  $v\eqdef a_n\ldots a_1$ with $a_j\eqdef (\vec{y}_j,\vec{x}_j)$.
  Observe that since
  $u$ is the label of a cycle on $\vec{c}|_I$ in $G'$, then $v$ is the label of a
  cycle on $\vec{c}|_I$ in $G$. Moreover, from
  $\vec{c}\xrightarrow{v}\vec{c}^-$ we derive
  $\vec{c}^-\xrightarrow{v}\vec{c}$. We have proved the lemma.
\end{proof}

\section{Proof of \cref{thm:witnessex}}\label{sec:witnessex}
In this section, we prove \cref{thm:witnessex}. We consider a PN
$A$ and a SCCC $\vec{C}$ of $A$. We introduce $m\eqdef \normi{A}$. If $m=0$, the proof is immediate. So, we can assume that $m\geq 1$.

\medskip

We introduce the extractor $\lambda$ satisfying $\lambda_0=1$, and for
every $n\in\{0,\ldots,d\}$:
$$\lambda_{n+1}\eqdef m\sum_{j=1}^n\lambda_j^j+m\lambda_n^{3n}(3d\lambda_n^nm)^{d}$$
Observe that $\lambda$ is $m$-adapted.

We introduce $b\eqdef \lambda_d$, the set $I\eqdef\extract{\lambda}{\{1,\ldots,d\}}{\vec{C}}$, and the
structurally-reversible $I$-unfolding $G=(Q,A,T)$
associated to $\vec{C}$ (defined in the previous section), $\lambda$ and $A$.
Notice that $Q\subseteq \{q\in\setN^I \mid \normi{q}<\lambda_{|I|}\}$.
Denoting by $r\eqdef |Q|$, we deduce that $r\leq \lambda_{|I|}^{|I|}$. 

\medskip

Observe that for every $\vec{x},\vec{y}\in \vec{C}$, we have $\vec{y}-\vec{x}\in L_{\vec{x}|_I,G,\vec{y}|_I}$. Moreover  $Q\subseteq \{q\in\setN^I \mid \normi{q}<b\}$ since $\lambda_{|I|}\leq b$. The following lemma provides a bound on $b$.
\begin{lemma}\label{lem:t42}
  We have $b\leq (3dm)^{(d+2)^{2d+1}}$.
\end{lemma}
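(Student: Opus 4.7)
The plan is to prove, by induction on $n \in \{0,1,\ldots,d\}$, the slightly stronger statement that $\lambda_n \leq (3dm)^{(d+2)^{2n+1}}$. Setting $n = d$ then gives $b = \lambda_d \leq (3dm)^{(d+2)^{2d+1}}$, which is the claim. The base case $n = 0$ is immediate since $\lambda_0 = 1 \leq (3dm)^{(d+2)}$, using $d \geq 1$ and $m \geq 1$.

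For the inductive step, I would separately bound the two summands appearing in $\lambda_{n+1} = m\sum_{j=1}^n \lambda_j^j + m\lambda_n^{3n}(3d\lambda_n^n m)^d$. Since $\lambda$ is non-decreasing and $j \leq n$, the first term is dominated by $mn\lambda_n^n$, and the second rewrites cleanly as $(3dm)^d \, m \, \lambda_n^{n(d+3)}$. Both are bounded above by $2(3dm)^{d+1}\lambda_n^{n(d+3)}$, so it suffices to show
\begin{equation*}
  (3dm)^{d+2} \cdot \lambda_n^{n(d+3)} \leq (3dm)^{(d+2)^{2n+3}}.
\end{equation*}
Plugging in the induction hypothesis $\lambda_n \leq (3dm)^{(d+2)^{2n+1}}$ on the left, this reduces to the purely arithmetic inequality
\begin{equation*}
  (d+2) + n(d+3)(d+2)^{2n+1} \leq (d+2)^{2n+3}.
\end{equation*}

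The only real work is verifying this inequality for $n \in \{0,\ldots,d-1\}$. Factoring out $(d+2)^{2n+1}$ on the right, it becomes $(d+2) + n(d+3)(d+2)^{2n+1} \leq (d+2)^2(d+2)^{2n+1}$, i.e.\ $(d+2)^{2n+1}\bigl[(d+2)^2 - n(d+3)\bigr] \geq d+2$. Since $n \leq d-1 < d$, we have
\begin{equation*}
  (d+2)^2 - n(d+3) \geq (d+2)^2 - d(d+3) = d+4 > 0,
\end{equation*}
so the bracket is positive, and the left-hand side is at least $(d+2)^{2n+1}(d+4) \geq d+2$, as required. This closes the induction and proves the lemma; the main (and only) obstacle is this exponent bookkeeping, which goes through because the gap $(d+2)^2 - d(d+3) = d+4$ stays strictly positive over the relevant range of $n$.
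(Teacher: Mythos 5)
Your proof is correct. It follows the same inductive skeleton as the paper's — bound the recurrence for $\lambda_{n+1}$ in terms of $\lambda_n$ and close by induction — but the intermediate bound is organized differently, and in a way that is cleaner. The paper first shows $\lambda_{n+1}\leq (3d\lambda_n m)^{d^2+3d}$, inducts to obtain the auxiliary bound $\lambda_n\leq (3dm)^{n(d^2+3d)^n}$, and then separately asserts (without verification) that $d(d^2+3d)^d\leq (d+2)^{2d+1}$ to arrive at the stated exponent. You instead keep the constant factor $(3dm)$ and the $\lambda_n$ factor separate, obtaining $\lambda_{n+1}\leq (3dm)^{d+2}\lambda_n^{n(d+3)}$, and then target the final exponent $(d+2)^{2n+1}$ directly; the inductive step reduces to the fully verified inequality $(d+2)+n(d+3)(d+2)^{2n+1}\leq (d+2)^{2n+3}$. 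Your route avoids the paper's unverified last step and makes the exponent bookkeeping transparent, at the cost of being slightly less generous in the slack it leaves. Both are valid; yours is arguably the more self-contained derivation of the same bound.
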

\begin{proof}
  Notice that $\lambda_{n+1}= m\sum_{j=1}^n\lambda_j^j+m\lambda_n^{3n}(3d\lambda_n^nm)^{d}\leq mn\lambda_n^n+m\lambda_n^{3n}(3d\lambda_n^nm)^{d}\leq md\lambda_n^{3d}+m\lambda_n^{3d}(3d\lambda_n m)^{d^2}\leq (3d\lambda_n m)^{d^2+3d}$. By induction on $n$, we deduce that $\lambda_n\leq (3dm)^{n(d^2+3d)^n}$. In particular, we have $b\leq (3dm)^s$ where $s\eqdef d(d^2+3d)^d\leq (d+2)^{2d+1}$.
\end{proof}

\medskip

\begin{lemma}\label{lem:pumpup}
  We have $\vec{c}\in \vec{U}_{\vec{c}|_I,G}$ for every $\vec{c}\in\vec{C}$.
\end{lemma}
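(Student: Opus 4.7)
The plan is to exhibit a pumping pair at $\vec{c}|_I$ by composing a forward cycle obtained from \cref{lem:pumpupG} and a backward cycle obtained from \cref{lem:pumpdownG}. Both hypotheses are satisfied because $A\subseteq \{0,\ldots,m\}^d\times\{0,\ldots,m\}^d$, so both pumping lemmas apply to the SCCC $\vec{C}$, the extractor $\lambda$, and the $I$-unfolding $G$ of this section.

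First I would apply \cref{lem:pumpupG} to obtain a cycle on $\vec{c}|_I$ in $G$ labeled by some word $v\in A^*$ with $|v|\leq d\lambda_d^d$, together with a configuration $\vec{c}^+$ satisfying $\vec{c}\xrightarrow{v}\vec{c}^+$ and $\vec{c}^+(i)\geq \lambda_{|I|+1}-m\sum_{j=1}^{|I|}\lambda_j^j$ for every $i\notin I$. Symmetrically, \cref{lem:pumpdownG} yields a word $u\in A^*$ labeling a cycle on $\vec{c}|_I$ with $|u|\leq d\lambda_d^d$ and a configuration $\vec{c}^-$ with $\vec{c}^-\xrightarrow{u}\vec{c}$ and the same lower bound on $\vec{c}^-(i)$ for $i\notin I$. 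The candidate pumping pair is then $(u,v)$, yielding the chain $\vec{c}^-\xrightarrow{u}\vec{c}\xrightarrow{v}\vec{c}^+$ required in the definition of $\vec{U}_{\vec{c}|_I,G}$.

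Two numerical checks then remain. For the length bound of a pumping pair, \cref{lem:t42} gives $\lambda_d\leq (3dm)^{(d+2)^{2d+1}}$, so $|u|,|v|\leq d\lambda_d^d\leq d((3dm)^{(d+2)^{2d+1}})^d$, meeting the length threshold in the definition. For the lower bound on the outer components, unfolding the recurrence defining $\lambda_{|I|+1}$ yields
\[
\lambda_{|I|+1}-m\sum_{j=1}^{|I|}\lambda_j^j ~=~ m\lambda_{|I|}^{3|I|}(3d\lambda_{|I|}^{|I|}m)^d,
\]
and since $r\leq \lambda_{|I|}^{|I|}$ we have $r^3\leq \lambda_{|I|}^{3|I|}$ and $(3drm)^d\leq (3d\lambda_{|I|}^{|I|}m)^d$, so this quantity dominates $mr^3(3drm)^d$, as required.

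The main obstacle has already been handled upstream: the recurrence $\lambda_{n+1}=m\sum_{j=1}^n\lambda_j^j+m\lambda_n^{3n}(3d\lambda_n^n m)^d$ was tailor-made so that the pumping-lemma guarantees on $\vec{c}^{\pm}$ exactly absorb the ``$mr^3(3drm)^d$'' threshold required by the definition of $\vec{U}_{q,G}$. With $\lambda$ fixed, the current lemma is essentially a bookkeeping assembly of the two pumping lemmas with a one-line size comparison.
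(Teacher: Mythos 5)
Your proof is correct and takes essentially the same route as the paper: invoke \cref{lem:pumpupG} and \cref{lem:pumpdownG} to obtain the two cycles on $\vec{c}|_I$, then verify numerically that $\lambda_{|I|+1}-m\sum_{j=1}^{|I|}\lambda_j^j = m\lambda_{|I|}^{3|I|}(3d\lambda_{|I|}^{|I|}m)^d \geq mr^3(3drm)^d$, so that $(u,v)$ is a pumping pair. You are in fact slightly more careful than the paper about checking that the length bound $d\lambda_d^d$ from the two sub-lemmas fits under the bound $db^d$ with $b=(3dm)^{(d+2)^{2d+1}}$ demanded by the definition of pumping pair, by explicitly invoking \cref{lem:t42}.
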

\begin{proof}
  \cref{lem:pumpupG} and \cref{lem:pumpdownG} show that there exist $u,v\in A^*$ that label cycles $\alpha,\beta$ in $G$ on $\vec{c}|_I$ such that $|u|,|v|\leq db^d$, and two configurations $\vec{c}^-,\vec{c}^+$ such that $\vec{c}^-\xrightarrow{u}\vec{c}\xrightarrow{v}\vec{c}^+$, and such that $\vec{c}^-(i),\vec{c}^+(i)\geq \lambda_{|I|+1}-m\sum_{j=1}^{|I|}\lambda_j^j$ for every $i\not\in I$. Notice that $\lambda_{|I|+1}-m\sum_{j=1}^{|I|}\lambda_j^j=m (\lambda_{|I|}^{|I|})^3(3d \lambda_{|I|}^{|I|}m)^{d}\geq mr^3(3drm)^d$. It follows that $(u,v)$ is a pumping pair for $(\vec{c}|_I,G)$.
\end{proof}

\clearpage
\part{Conclusion}\label{part:conclusion}

This paper provides a way for computing a quantifier free Presburger formula $\phi_A(\vec{x},\vec{y})$ encoding the mutual reachability relation of a Petri net $A$ between two configurations $\vec{x},\vec{y}$. We also provided in \cref{thm:bot} a Presburger formula encoding the set of bottom configurations. This formula introduces quantified sub-formulas of the form $\forall \vec{v}\in\crochet{\gamma}\phi(\vec{c}+\vec{v})$ or equivalently $\exists \vec{v}\in \crochet{\gamma}\neg\phi(\vec{c}+\vec{v})$ where $\gamma$ is a representation of a lattice, and $\phi$ is a threshold formula. In order to obtain a quantifier-free formula, by putting $\neg \phi$ in disjunctive normal form, it is sufficient to provide a way to encode in the quantifier-free fragment of the Presburger arithmetic the following set:
$$(I_1\times\cdots\times I_d)+\setZ\vec{p}_1+\cdots+\setZ\vec{p}_k$$
where $I_1,\ldots,I_d$ are intervals of $\setZ$ and $\vec{p}_1,\ldots,\vec{p}_k$ are vectors in $\setZ^d$. We left this problem open.

\bigskip

The author thanks Petr Jan\v{c}ar for discussions motivating this work.



\bibliography{main.bib}

\begin{thebibliography}{10}

\bibitem{DBLP:conf/cav/BaslerMWK09}
G{\'{e}}rard Basler, Michele Mazzucchi, Thomas Wahl, and Daniel Kroening.
\newblock Symbolic counter abstraction for concurrent software.
\newblock In {\em {CAV}}, volume 5643 of {\em Lecture Notes in Computer
  Science}, pages 64--78. Springer, 2009.

\bibitem{DBLP:journals/ipl/BestE16}
Eike Best and Javier Esparza.
\newblock Existence of home states in petri nets is decidable.
\newblock {\em Inf. Process. Lett.}, 116(6):423--427, 2016.

\bibitem{DBLP:conf/tacas/BlondinFHH16}
Michael Blondin, Alain Finkel, Christoph Haase, and Serge Haddad.
\newblock Approaching the coverability problem continuously.
\newblock In Marsha Chechik and Jean{-}Fran{\c{c}}ois Raskin, editors, {\em
  Tools and Algorithms for the Construction and Analysis of Systems - 22nd
  International Conference, {TACAS} 2016, Held as Part of the European Joint
  Conferences on Theory and Practice of Software, {ETAPS} 2016, Eindhoven, The
  Netherlands, April 2-8, 2016, Proceedings}, volume 9636 of {\em Lecture Notes
  in Computer Science}, pages 480--496. Springer, 2016.

\bibitem{Lipton76}
E.~Cardoza, Richard~J. Lipton, and Albert~R. Meyer.
\newblock Exponential space complete problems for {P}etri nets and commutative
  semigroups: Preliminary report.
\newblock In {\em STOC'76}, pages 50--54. ACM, 1976.
\newblock \href {https://doi.org/10.1145/800113.803630}
  {\path{doi:10.1145/800113.803630}}.

\bibitem{DBLP:conf/focs/CzerwinskiO21}
Wojciech Czerwinski and Lukasz Orlikowski.
\newblock Reachability in vector addition systems is ackermann-complete.
\newblock In {\em 62nd {IEEE} Annual Symposium on Foundations of Computer
  Science, {FOCS} 2021, Denver, CO, USA, February 7-10, 2022}, pages
  1229--1240. {IEEE}, 2021.
\newblock \href {https://doi.org/10.1109/FOCS52979.2021.00120}
  {\path{doi:10.1109/FOCS52979.2021.00120}}.

\bibitem{DBLP:conf/tacas/DixonL20}
Alex Dixon and Ranko Lazic.
\newblock Kreach: {A} tool for reachability in petri nets.
\newblock In Armin Biere and David Parker, editors, {\em Tools and Algorithms
  for the Construction and Analysis of Systems - 26th International Conference,
  {TACAS} 2020, Held as Part of the European Joint Conferences on Theory and
  Practice of Software, {ETAPS} 2020, Dublin, Ireland, April 25-30, 2020,
  Proceedings, Part {I}}, volume 12078 of {\em Lecture Notes in Computer
  Science}, pages 405--412. Springer, 2020.
\newblock \href {https://doi.org/10.1007/978-3-030-45190-5\_22}
  {\path{doi:10.1007/978-3-030-45190-5\_22}}.

\bibitem{DBLP:journals/acta/EsparzaGLM17}
Javier Esparza, Pierre Ganty, J{\'{e}}r{\^{o}}me Leroux, and Rupak Majumdar.
\newblock Verification of population protocols.
\newblock {\em Acta Inf.}, 54(2):191--215, 2017.

\bibitem{DBLP:conf/icalp/GanardiMPSZ22}
Moses Ganardi, Rupak Majumdar, Andreas Pavlogiannis, Lia Sch{\"{u}}tze, and
  Georg Zetzsche.
\newblock Reachability in bidirected pushdown {VASS}.
\newblock In Mikolaj Bojanczyk, Emanuela Merelli, and David~P. Woodruff,
  editors, {\em 49th International Colloquium on Automata, Languages, and
  Programming, {ICALP} 2022, July 4-8, 2022, Paris, France}, volume 229 of {\em
  LIPIcs}, pages 124:1--124:20. Schloss Dagstuhl - Leibniz-Zentrum f{\"{u}}r
  Informatik, 2022.
\newblock \href {https://doi.org/10.4230/LIPIcs.ICALP.2022.124}
  {\path{doi:10.4230/LIPIcs.ICALP.2022.124}}.

\bibitem{DBLP:journals/tcs/GeffroyLS18}
Thomas Geffroy, J{\'{e}}r{\^{o}}me Leroux, and Gr{\'{e}}goire Sutre.
\newblock Occam's razor applied to the petri net coverability problem.
\newblock {\em Theor. Comput. Sci.}, 750:38--52, 2018.

\bibitem{Grinberg1980}
V.~S. Grinberg and S.~V. Sevast'yanov.
\newblock Value of the steinitz constant.
\newblock {\em Functional Analysis and Its Applications}, 14(2):125--126, Apr
  1980.
\newblock \href {https://doi.org/10.1007/BF01086559}
  {\path{doi:10.1007/BF01086559}}.

\bibitem{DBLP:phd/ndltd/Hack76}
Michel Hack.
\newblock {\em Decidability questions for Petri Nets}.
\newblock PhD thesis, Massachusetts Institute of Technology, Cambridge, MA,
  {USA}, 1976.
\newblock URL: \url{https://hdl.handle.net/1721.1/27441}.

\bibitem{DBLP:conf/apn/JancarLS18}
Petr Jancar, J{\'{e}}r{\^{o}}me Leroux, and Gr{\'{e}}goire Sutre.
\newblock Co-finiteness and co-emptiness of reachability sets in vector
  addition systems with states.
\newblock In {\em Petri Nets}, volume 10877 of {\em Lecture Notes in Computer
  Science}, pages 184--203. Springer, 2018.

\bibitem{concurjournal13}
J{\'{e}}r{\^{o}}me Leroux.
\newblock Vector addition system reversible reachability problem.
\newblock {\em Logical Methods in Computer Science}, 9(1), 2013.

\bibitem{DBLP:conf/fsttcs/Leroux19}
J{\'{e}}r{\^{o}}me Leroux.
\newblock Distance between mutually reachable petri net configurations.
\newblock In Arkadev Chattopadhyay and Paul Gastin, editors, {\em 39th {IARCS}
  Annual Conference on Foundations of Software Technology and Theoretical
  Computer Science, {FSTTCS} 2019, December 11-13, 2019, Bombay, India}, volume
  150 of {\em LIPIcs}, pages 47:1--47:14. Schloss Dagstuhl - Leibniz-Zentrum
  f{\"{u}}r Informatik, 2019.
\newblock \href {https://doi.org/10.4230/LIPIcs.FSTTCS.2019.47}
  {\path{doi:10.4230/LIPIcs.FSTTCS.2019.47}}.

\bibitem{DBLP:conf/focs/Leroux21}
J{\'{e}}r{\^{o}}me Leroux.
\newblock The reachability problem for petri nets is not primitive recursive.
\newblock In {\em 62nd {IEEE} Annual Symposium on Foundations of Computer
  Science, {FOCS} 2021, Denver, CO, USA, February 7-10, 2022}, pages
  1241--1252. {IEEE}, 2021.
\newblock \href {https://doi.org/10.1109/FOCS52979.2021.00121}
  {\path{doi:10.1109/FOCS52979.2021.00121}}.

\bibitem{DBLP:conf/concur/LerouxPS13}
J{\'{e}}r{\^{o}}me Leroux, M.~Praveen, and Gr{\'{e}}goire Sutre.
\newblock A relational trace logic for vector addition systems with application
  to context-freeness.
\newblock In {\em {CONCUR}}, volume 8052 of {\em Lecture Notes in Computer
  Science}, pages 137--151. Springer, 2013.

\bibitem{lics19}
J{\'{e}}r{\^{o}}me Leroux and Sylvain Schmitz.
\newblock Reachability in vector addition systems is primitive-recursive in
  fixed dimension.
\newblock In {\em {LICS}}. {IEEE} Computer Society, 2019.
\newblock to appear.

\bibitem{LerouxS04}
J{\'{e}}r{\^{o}}me Leroux and Gr{\'{e}}goire Sutre.
\newblock On flatness for 2-dimensional vector addition systems with states.
\newblock In {\em {CONCUR}}, volume 3170 of {\em Lecture Notes in Computer
  Science}, pages 402--416. Springer, 2004.

\bibitem{P-RTA91}
Loic Pottier.
\newblock Minimal solutions of linear diophantine systems: Bounds and
  algorithms.
\newblock In R.~V. Book, editor, {\em Proceedings 4th Conference on Rewriting
  Techniques and Applications, Como (Italy)}, volume 488 of {\em Lecture Notes
  in Computer Science}, pages 162--173. Springer, 1991.

\bibitem{Rackoff78}
C.~Rackoff.
\newblock The covering and boundedness problems for vector addition systems.
\newblock {\em Theoretical Computer Science}, 6(2):223--231, 1978.

\bibitem{schrijver1986theory}
Alexander Schrijver.
\newblock {\em Theory of Linear and Integer Programming}.
\newblock John Wiley \& Sons, Inc., USA, 1986.

\end{thebibliography}
\end{document}